\newcommand{\tabletitleunderrule}{\midrule[\heavyrulewidth]}
\newcommand{\rev}[1]{{\color{black}#1}}
\newcommand{\changen}[1]{{\color{black}#1}}
\newcommand{\seprev}[1]{{\color{black}#1}}
\newtheorem{prop}{Proposition}
\begin{document}
\allowdisplaybreaks
\title{Multiple Access Computational Offloading: \rev{Communication Resource Allocation in the} Two-User Case (Extended Version)}
\author{{Mahsa Salmani and Timothy N. Davidson}
\thanks{A condensed version of this manuscript will be submitted for possible publication.}
\thanks{This work was supported in part by the Natural Sciences and Engineering Research Council
of Canada under grant RGPIN-2015-06631. The authors are with the Department of Electrical and Computer Engineering, McMaster University, Hamilton, Ontario, Canada. Email: \{salmam, davidson\}@mcmaster.ca. Preliminary versions of portions of this manuscript appear in \textit{Proc.\ 17\textsuperscript{th} IEEE Int. Wkshp Signal Process.\ Adv.\ Wireless Commun.}, Jun 2016, and in \textit{Conf.\ Rec.\ 51\textsuperscript{st} Asilomar Conf.  Signals, Syst. Comput.}, Oct 2017, and in \textit{Proc.\ 23\textsuperscript{rd} Asia-Pacific Conf.\ Commun.} Dec 2017.  }}

\maketitle

%%%%%%%%%%%%%%	Abstract	%%%%%%%%%%%%%%%
\begin{abstract}
By offering shared computational facilities to which mobile devices can offload their computational tasks, the mobile edge computing framework is expanding the scope of applications that can be provided on resource-constrained devices. When multiple devices seek to use such a facility simultaneously, both the available computational resources and the available communication resources need to be appropriately allocated. In this manuscript, we seek insight into the impact of the choice of the multiple access scheme by developing solutions to the mobile energy minimization problem in the two-user case with plentiful shared computational resources. In that setting, the allocation of communication resources is constrained by the latency constraints of the applications, the computational capabilities and the transmission power constraints of the devices, and the achievable rate region of the chosen multiple access scheme. For both indivisible tasks and the limiting case of tasks that can be infinitesimally partitioned, we provide a closed-form and quasi-closed-form solution, respectively, for systems that can exploit the full capabilities of the multiple access channel, and for systems based on time-division multiple access (TDMA). For indivisible tasks, we also provide quasi-closed-form solutions for systems that employ sequential decoding without time sharing or independent decoding. Analyses of our results show that when the channel gains are equal and the transmission power budgets are larger than a threshold, TDMA (and the suboptimal multiple access schemes that we have considered) can achieve an optimal solution. However, when the channel gains of each user are significantly different and the latency constraints are tight, systems that take advantage of the full capabilities of the multiple access channel can substantially reduce the energy required to offload.
\end{abstract}

\begin{IEEEkeywords}
Mobile cloud computing, mobile edge computing, fog computing, resource allocation, uplink
\end{IEEEkeywords}
\IEEEpeerreviewmaketitle
%%%%%%%%%%%%%%	Introduction	%%%%%%%%%%%%%%%
\section{Introduction}
\label{Intro}
\IEEEPARstart{T}{he} widespread adoption of mobile computing devices and wireless communication networks has enabled the development of applications and services that previously could only be envisioned; e.g., \cite{weiser1991computer}. The success of these developments is fuelling the ambition for future applications and services, but as that ambition has grown, the modest computational, storage and energy resources of the mobile devices have become significant constraints. The mobile cloud, mobile edge, and fog computing frameworks seek to address those constraints by providing shared computational resources to which mobile devices can offload their computational tasks, or a portion thereof; e.g.,\cite{satyanarayanan2009case, kumar2010cloud, liu2013gearing, miettinen2010energy, kumar2013survey, fernando2013mobile, barbera2013offload, mao2017survey}. Offloading offers the potential for the mobile device to obtain the results of computationally-intensive or memory-intensive tasks more quickly than would be possible using local computation, and it also offers the potential to better manage the battery life of the device. Some early prototypes of mobile computational offloading systems include MAUI \cite{cuervo2010maui}, CloneCloud \cite{chun2011clonecloud}, and ThinkAir \cite{kosta2012thinkair}.

The offloading opportunities provided by the mobile cloud computing framework need to be balanced against the energy required to communicate \rev{reliably} with the networking infrastructure that connects to the shared computing resources \cite{barbera2013offload}, the latency of that communication, the contention for the limited communication resources of the network \cite{lei2013challenges}, and the contention for the limited computational resources of the cloud \cite{liu2013gearing}. Indeed, the problem of deciding when and how to exploit the resources provided by the mobile cloud computing framework can be formulated as a joint optimization problem over the available computational and communication resources; e.g., \rev{ \cite{sardellitti2015joint, chen2015semidefinite, munoz2015optimization, munoz2014energy, salmani2016multiple, wang2017optimized, wang2017joint}}. That formulation typically captures the energy that would be required to complete the computational task locally (on the mobile device) and the latency incurred in doing so, and the energies and latencies associated with transmitting the required information to the shared computational resources, completing the task there and returning the results to the mobile device. When the components of the task at hand are tightly coupled, the task is often considered to be indivisible, and hence the decision of whether or not to offload the task is a binary decision. When the task can be partitioned into separate components, the system can take advantage of the implicit parallelism between the mobile user and the access point. As a result, the formulation of the offloading problem may include decisions on which components to offload, or, in the limit, what fraction of the task to offload. When there are multiple devices that are seeking access to the computing resources, the architecture of the envisioned system will determine whether the offloading decisions are to be made centrally, or in a distributed fashion. 
As this discussion suggests, in the general case, the problem of deciding whether or not to offload (a fraction of) a computational task can be a computationally demanding problem in and of itself. Therefore, the development of insight into the structure of good solutions has the potential to guide the development of practical algorithms.

In this manuscript we seek to develop insight into the impact of the choice of the multiple access scheme in a multiuser offloading system in which the allocation of resources is performed centrally. In previous work on such systems, the multiple access scheme has been chosen \textit{a priori}. For example, the users' channels may be assumed to be orthogonal (as they are in time division multiple access, TDMA, and frequency division multiple access, FDMA) \cite{sardellitti2015joint, chen2016joint, chen2016multi, you2017energy}, or it may be assumed that the access point performs independent decoding \cite{sardellitti2015joint} or ordered minimum mean square error successive interference cancellation \cite{wang2017optimized}. These choices can limit the set of rates at which the users can communicate reliably with the access point, and hence they can constrain the potential of computational offloading. In the main contribution of this manuscript, we do not place constraints on the multiple access scheme and that enables us to take the advantage of the full capabilities of the wireless channel. 

In order to develop insight into the impact of the choice of the multiple access scheme and the corresponding allocation of communication resources, we will consider a two-user scenario with a single access point that is equipped with plentiful computational resources. Each user's task has a separate latency constraint, and we will consider both indivisible tasks and the limiting case of infinitesimally divisible tasks. The goal is to make the offloading decisions (or choose the offloaded fractions) and to allocate the communication resources so that the energy expended by the users is minimized. The decisions and allocations are made centrally, using the knowledge of the (single-input single-output) channels from all users. For indivisible and infinitesimally divisible computational tasks we will derive closed-form and quasi-closed-form solutions, respectively, to the energy minimization problem when the full capabilities of the multiple access channel are exploited. We will also provide corresponding solutions for some simplified multiple access schemes, namely, time-division multiple access (TDMA), and, in the case of indivisible tasks, sequential decoding without time sharing (SDwts), and independent decoding (ID). 

\subsection{Principles of Proposed Approach}
The broad principles that underlie our approach to the resource allocation problem arise from the observation that the communication rates employed by each user must lie in the intersection of two regions. First, the rates must be large enough to meet the latency constraints imposed by the computational tasks. Second, the rates must lie within the achievable rate region for the chosen multiple access \rev{scheme}. For the simple two-user case that we will consider in this manuscript (see Secs~\ref{Specific_contributions} and \ref{sys_model}), these regions are illustrated in Fig.~\ref{feas_region_all}, where the shaded regions are the achievable rate region for (a) independent decoding, (b) TDMA, and (c) sequential decoding without time sharing, and (d) the capacity region of the multiple access channel,\footnote{Rate pairs on the ``dominant face'' of the capacity region can be achieved by joint decoding or by employing ``time sharing'' between the ``corner points'' of the region, each of which can be achieved by sequential decoding \cite{cover2012elements}.} for a given channel environment and a given set of operating power constraints. The L-shaped dashed lines in each figure denote the boundary of the set of rate pairs that will enable the latency constraints of the given applications to be satisfied. Rate pairs above and to the right of the boundary will enable the latency constraints to be satisfied, and we will call the set of those rate pairs the latency region. In the scenario marked $L$, the latency constraints are quite long and in all four cases there is an intersection between the achievable rate region for the channel and the latency region. Therefore, for each multiple access scheme there are rate pairs that are feasible for the computational offloading problem. In the scenario marked $S$, the latency constraints are quite short and there is no intersection in the cases of independent decoding or TDMA. However, if one takes advantage of the full capabilities of the multiple access channel there is an intersection (see Fig.~\ref{feas_region_all_MAC}), and there are pairs of rates that will enable the latency constraints to be satisfied.
%\footnote{In this particular case, a sequential receiver that decodes the message for user $1$, subtracts the corresponding signal from the received signal, and then decodes the message from user $2$ will suffice. The feasibility of that ``corner point'' is also apparent in Fig.~\ref{feas_region_all_sec_dec}.} 

Another useful interpretation of the components in Fig.~\ref{feas_region_all} arises from the fact that the operating power levels of the transmitters control the size of the achievable rate regions: as the operating power is reduced, those regions shrink. If we consider the latency region for long latency constraints, marked by $L$, it appears that there is not much ``room" to shrink the independent decoding and TDMA regions while retaining an intersection with the latency region. However, there appears to be considerable room to shrink the capacity region. This suggests that by taking advantage of the full capabilities of the multiple access channel we can reduce the energy required to offload the latency-constrained tasks.

%%%%%%%%%%%%%%%%%%%%%%%%%%%%%%%%%
\subsection{Specific Contributions of the Manuscript}
\label{Specific_contributions}
The specific problems considered in this manuscript concern a two-user system in which each user has a latency-constrained task that they wish to complete with the possible assistance of a single-antenna access point with plentiful computationally resources. We consider both indivisible computational tasks, for which a binary offloading decision must be made, and infinitesimally divisible tasks, for which the fraction of the task to be offloaded is to be determined. In terms of communication resources, each user has a single antenna and a maximum allowable transmission power. The communication channels are assumed to be known and constant over the latency interval. A distinguishing feature of our system is the observation that when users are seeking to offload (a fraction of) their tasks, the communication system can operate in a combination of three modes: one in which both users are transmitting and the others in which only one user is transmitting. In addition to making the offloading decisions, or determining the fractions to be offloaded, our goal includes determining the durations of the time slots in which the system operates in each mode and the rates and powers allocated to each slot, so that the sum of the computational and communication energies expended by the mobile devices is minimized. 

%%%%%%%%%%%%%%%%%%%%%%%%%%%%%%%%%%
\subsubsection{Indivisible Tasks---Binary Computational Offloading}
One of our key results in the case of indivisible tasks is a closed-form solution to the mobile energy minimization problem in the case in which the full capabilities of the multiple access channel are exploited when both users are offloading. That solution shows that only two of the three available time slots are required. We then obtain a closed-form expression for the optimal solution in the TDMA case. Quite naturally, that solution can also be achieved in two time slots. In the cases of independent decoding and sequential decoding without time sharing, the optimal solution may have three active time slots, and we provide quasi-closed-form solutions for both those cases. These solutions each depend on the solution of different three-variable optimization problems. Based on the structure of each of those problems we propose a coordinate descent method in which each subproblem is convex. That approach is guaranteed to converge to a stationary point \cite[Theorem 1]{hong2016unified}, and in all our numerical experiments it converged to the globally optimal solution. 

Our closed-form solutions enable us to show that when TDMA is able to satisfy the latency constraints, the optimized solution in the case of independent decoding takes the form of TDMA. (In TDMA systems the decoders operate independently.)
%TDMA solution is also optimal for the independent decoding case. (Since the decoders in a TDMA system work independently, TDMA is a special case of independent decoding.)
We will also show that when the channel gains are the same, if the power budgets of the users are above an explicit threshold, the optimized TDMA solution is globally optimal. A consequence of that result is that independent decoding and sequential decoding (wts) are also optimal when the channel gains are the same.

In a complementary way, our numerical results will illustrate that when the channel gains are quite different and the latency constraints are reasonably tight, taking advantage of the full capabilities of the multiple access channel offers significant reductions in the energy required to offload the applications and substantially broadens the range of channel gains over which offloading can be achieved while respecting the power constraints of the devices. Our results will also show that a large fraction of these gains can be achieved by sequential decoding without time sharing. 
%% FIG 1%%
%%%%%%%
\begin{figure}
\begin{subfigure}{.5\linewidth}
\centering
\includegraphics[scale=.5]{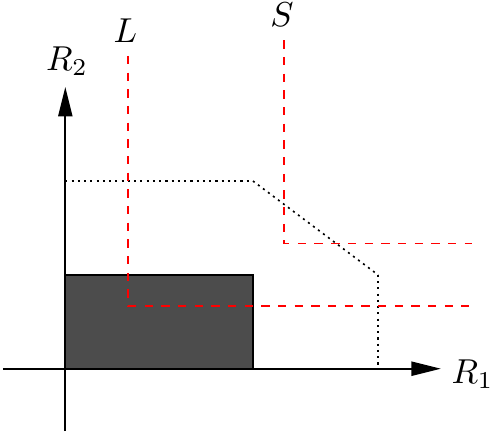}
\caption{\centering Independent decoding}
\label{feas_region_all_ind}
\end{subfigure}%
\begin{subfigure}{.5\linewidth}
\centering
\includegraphics[scale=.5]{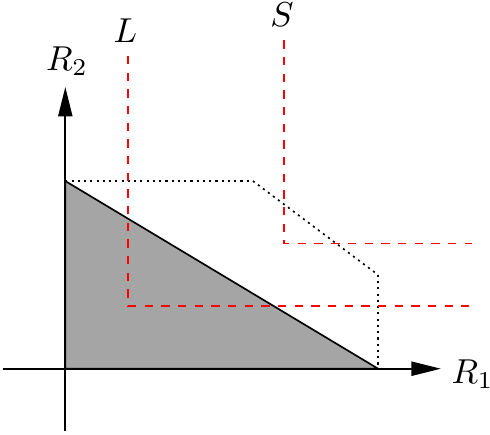}
\caption{\centering TDMA}
\label{feas_region_all_TDMA}
\end{subfigure}\\%
\begin{subfigure}{.5 \linewidth}
\centering
\includegraphics[scale=.5]{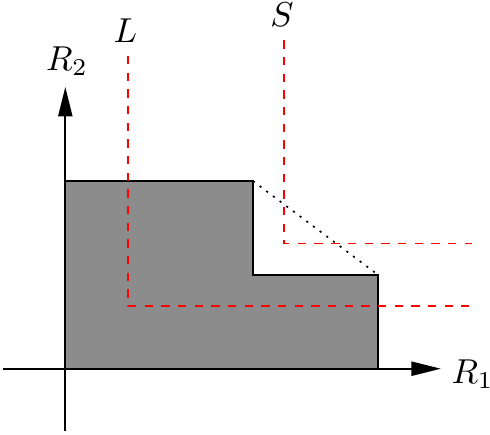}
\caption{\centering Sequential decoding  without time sharing}
\label{feas_region_all_sec_dec}
\end{subfigure}%
\begin{subfigure}{.5 \linewidth}
\centering
\includegraphics[scale=.5]{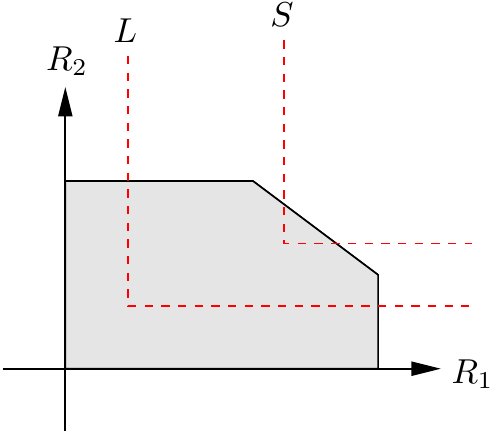}
\caption{\centering Full capabilities of the  multiple access channel}
\label{feas_region_all_MAC}
\end{subfigure}%
\caption{The set of feasible rates for \changen{a simple two-user system that employs a single time slot} is the intersection of the achievable rate region (shaded) and the region of the rates that will satisfy the latency constraints, which is above and to the right of the L-shaped dashed lines.}
\label{feas_region_all}
\end{figure}
%%%%%%%

%%%%%%%%%%%%%%%%%%%%%%%%%%%%%%%%%%%%%
\subsubsection{Infinitesimally Divisible Tasks---Partial Computational Offloading}
In Section~\ref{partial_offloading} we extend our approach to the limiting case of infinitesimally divisible computational tasks that may be offloaded using the full multiple access scheme or TDMA. Our focus in that section is on data-partitionable applications \cite{wang2016mobile}, in which a simple-to-describe operation is applied, independently, to different blocks of data.
For both the full multiple access and TDMA schemes, we obtain quasi-closed-form solutions that depend on the solution of (different) three-variable optimization problems. For the full multiple access scheme a coordinate descent approach that is guaranteed to yield a stationary point  is employed (see \cite[Theorem 1]{hong2016unified}), and in the TDMA case the problem is convex and hence, an optimal solution can be easily obtained. The structure of these subproblems enables us to show that, as in the case of indivisible tasks, when the channel gains are the same, if the transmission power budgets are above a threshold, then the optimized TDMA partial offloading solution is globally optimal. However, when the channel gains are quite different, performing complete offloading using full multiple access scheme can result in significantly lower mobile energy consumption than partial offloading using TDMA. 
\changen{At the end of Sec.~\ref{partial_offloading} we briefly outline the extension of our approach to the ``mixed'' case, in which one user has an indivisible task and the other has a divisible task.}

%%%%%%%%%%%%%%	System Model	%%%%%%%%%%%%%%%
\section{System Model} 
\label{sys_model}
The goal of this manuscript is to develop insight into the impact of the choice of the multiple access scheme on the energy consumed by a computational offloading system. To do so, we will consider a two-user system in which each user seeks to obtain the results of a latency-constrained computational task with the possible assistance of a single access point with plentiful computational resources. The nature of the computational tasks that the users are to execute has a significant impact on the way this problem is formulated; e.g.,\cite{munoz2015optimization, wang2016mobile}. If the components of the task are tightly coupled, the problem must be executed either by the user or by the access point alone. That is, the task must be completely offloaded or not offloaded at all; e.g.,\cite{kumar2010cloud, wu2013tradeoff, sardellitti2015joint}. In contrast, tasks with independent or loosely coupled components can benefit from the parallelism between the mobile device and the access point, with a portion of the task being offloaded and the remainder being computed locally; e.g., \cite{zhang2013energy, wang2016mobile}. In this manuscript, we will first consider the case in which both users have an indivisible task; see Sections~\ref{binary_offloading}--\ref{insight}. Then, in Section~\ref{partial_offloading}, we consider the limiting case in which both users have an infinitesimally divisible task that can be partially offloaded, with the remainder of the task computed locally. For ease of exposition, in this section we will establish the system model for the case of indivisible tasks. The extension of this model to infinitesimally divisible tasks is provided in Section~\ref{partial_offloading}.

In the case of indivisible tasks, the access point will decide whether or not each user will offload its task. It is assumed that the access point knows the energy that each user would expend in order to complete its task locally before the latency deadline. (If the task cannot be completed locally in time that energy is notionally set to $+\infty$.) That local computational energy is then compared to the energy that would be expended to offload the task to the access point in such a way that the result can be returned to the user before the deadline. That transmission energy is dependent on the allocation of the available communication resources. 

In the general case, the users' tasks will have different description lengths and different latency constraints, and the users will offload their descriptions at different rates. As a result, when both users are offloading, one user may complete its transmission before the other. In order to take advantage of that fact in minimizing the users' energy consumption, we will adopt the time slotted structure in Fig.~\ref{time_slots}. In the first time slot, both users are using the channel, and in the second and third time slots, user 1, the user with smaller latency, and user 2 complete the offloading of their applications, respectively. Since the length of any time slot can be set to zero, this time slotted structure naturally incorporates scenarios in which only one user is offloading. 
\begin{figure}
\centering 
\includegraphics[width=0.3\linewidth]{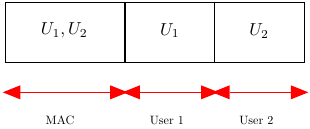} 
\caption{Different time slots in computational offloading for two users. In the first time slot both users are offloading simultaneously, while in the second only user one is offloading and in the third only user two is offloading. } 
\label{time_slots} 
\end{figure} 

The problem that we will consider is to minimize the total energy consumption of the users. For each user the energy consumption is either the local computation energy or the offloading energy. The variables that we can manipulate are the offloading decisions, the duration of each time slot (or equivalently, the fractions of the number of bits offloaded in each time slot), and each user's transmission power and rate in each slot. \rev{The constraints on those variables arise from} the latency of each offloaded task, the bounds on the transmission power of each user, and the \rev{set of achievable} rates of the chosen multiple access scheme. That is, we seek solutions to problems of the form
\begin{subequations}  
\label{prob_formula_main_inword}
\begin{align}
\min_{\substack{\text{rates},\hspace{5pt} \text{powers}, \\ \text{time slot durations, } \\ \text{offloading decisions}} } \quad   & \text{Transmission energy + Local computational energy}\\ 
\label{prob_formula_main_inword_b}
 \text{s.t.} \quad \quad \quad  & \text{Latency constraint of each task}, \\
\label{prob_formula_main_inword_c}
& \text{Power constraint on each user},\\
\label{prob_formula_main_inword_d}
& \text{Achievable rate region}. 
\end{align}
\end{subequations}

In order to formulate the problem described in \eqref{prob_formula_main_inword}, let $B_k$ denote the total number of bits needed to describe the problem of user $k$ and let $\gamma_{ki}$ denote the fraction of bits that user $k$ transmits at time slot $i$. The number of bits transmitted by user $k$ in time slot $i$ can then be written as $\gamma_{ki} B_k$. (We will assume that $B_k$ is large enough that $\gamma_{ki}$ can be modeled as a continuous variable in $[0, 1]$.) We will let $T_s$ denote the symbol interval of the system, and we let $\tau_i$ denote the length (in channel uses) of the $i$\textsuperscript{th} time slot. Hence, the $i$\textsuperscript{th} time slot has a duration $\tau_i T_s$. If $P_{ki}$ and $R_{ki}$ denote the transmission power and data rate (in units per channel use) for an offloading user, $k$, in time slot $i$, then the offloading energy consumption of user $k$ is $E_{\text{off}_k} = \sum_i P_{ki} \tau_{i}$. 
We will let $\bar{P}_k$ denote the maximum operating power level of user $k$ and we impose the power constraint $P_{ki} \le \bar{P}_k$.
The energy consumption of local execution, $E_{\text{loc}_k}$, and the time that would be needed to complete the task locally, $t_{\text{loc}_k}$, are determined by the number of CPU cycles required to execute the computation task and the structure of user $k$'s CPU. In the binary computation offloading case, these parameters are constant and are assumed to be known by the access point. 

We will adopt a computational model in which the full description of an offloaded task must be received by the access point before computation can begin, and the results only become available once the entire task has been completed. Under that model, the latency of an offloaded task is the sum of the time taken to transmit the description of the problem to the access point, $t_{\text{UL}_k}$, the time taken to compute the result at the access point, $t_{\text{exe}_k}$, and the time taken to return the result to the user, $t_{{\text{DL}}_k}$. Accordingly, the latency constraint for user $k$ can be written as 
\begin{equation}
\label{latency_const}
t_{\text{UL}_k}+t_{{\text{exe}}_k}+t_{\text{DL}_k} \le L_k.
\end{equation}
As outlined in the Introduction, we will focus on scenarios in which there are sufficient computational resources available at the access point so that the execution time of the offloaded task of the $k^{th}$ user can be considered independent of the computational loads imposed by the other users. Furthermore, we will assume that sufficient communication resources are available on the downlink for the time taken to return the result to user $k$ to be considered as a constant. 
In other words, we assume that it is the allocation of communication resources on the uplink that is the bottleneck of the offloading problem. 

The uplink communication environment that we will consider is a narrowband multiple access channel with two single-antenna users offloading to a single-antenna access point. If the signal transmitted by the $k$\textsuperscript{th} user at a given time instant is denoted by $s_k$ and the corresponding channel is denoted by $h_k$, then the signal received at the access point is 
\begin{equation}
\label{received_sig}
y= h_1 s_1+ h_2 s_2 +v,
\end{equation}
where $v$ is a sample from a zero mean circular white Gaussian noise process with variance $\sigma^2$. We will let $\alpha_k = \tfrac{|h_k|^2}{\sigma^2}$ denote the effective power gain of the channel of user $k$ and we will assume that $\alpha_1$ and $\alpha_2$ are known by the access point.
To establish the constraints on the rates we will make the assumption that each time slot $\tau_i$ is long enough that the achievable rate region for a finite block length can be approximated by the limiting region for long block lengths. (Recent work suggests that conventional rate limits provide insightful guidelines for communication over finite block length even when the blocks are quite short \cite{polyanskiy2010channel, jazi2012simpler}.)

For ease of  later reference, we have summarized the definitions of the key parameters and variables in Table~\ref{tab:glossary}.
\begin{table}
\centering
\caption{Parameters and variables}
\label{tab:glossary}
\begin{footnotesize}
\begin{tabular}{ll}
\toprule
Symbol & Quantity\\
\tabletitleunderrule
$\alpha_k$ & effective power gain of user $k$, $|h_k|^2/\sigma^2$, \\ \midrule
$\gamma_{ki}$ & the fraction of bits transmitted by user $k$ in slot $i$ \\ \midrule
$L_k$ & latency of user $k$ \\ \midrule
$\tilde{L}_k$ & \seprev{$(L_k-t_{\text{exe}_k}-t_{\text{DL}_k})/T_s$} \\ \midrule
\seprev{$\bar{L}_k$} & \seprev{$L_k-t_{\text{DL}_k}$} \\ \midrule
$P_{ki}$ & power transmitted by user $k$ in slot $i$ \\ \midrule
$\bar{P}_k$ & power constraint on user $k$ \\ \midrule
$R_{ki}$ & rate transmitted by user $k$ in slot $i$ \\ \midrule
$\tau_1$ & \changen{length} of first time slot\\ \midrule
$T_s$ & symbol interval\\  \midrule
$E_{\text{loc}_k}$ & local energy consumption of user $k$\\ 
\bottomrule
\end{tabular}
\end{footnotesize}
\end{table}
%%%%%%%%%%%%%%%%%%%%%%%%%%%%%%%%%%%%%%%%%%
%%%%%%%%%%%%%%	Binary Offloading	%%%%%%%%%%%%%%%
%%%%%%%%%%%%%%%%%%%%%%%%%%%%%%%%%%%%%%%%%%
\section{Binary Computation Offloading}
\label{binary_offloading}
In this section, we consider the case in which both users have indivisible computational tasks. As outlined in the previous section, in this setting, the access point will evaluate the total mobile energy consumption in each of the four possible cases of local or offloaded computation. The energy of local computation is presumed to be known by the access point (if local computation can meet the latency deadline). In the following section we will find the closed-form optimal solution of the problem in \eqref{prob_formula_main_inword} when only one user is offloading. The formulation of the ``complete'' computation offloading problem, in which it is decided that both users should offload their tasks, will be presented in Section~\ref{complete_offlaoding} and solutions will be obtained in Sections~\ref{general_case} and \ref{suboptimal_schemes}.

%%%%%%%%%%%%%%%%%%%%%%%%%%%%%%%%%%%%%%%%%%
\subsection{Single Offloading User}
When only one user is offloading, the duration of the first time slot is zero, (i.e., $\tau_1 = 0$), and the solution to the minimal transmission energy problem has a simple closed-form expression \cite{salmani2016multiple}. If we consider the case in which user~1 is offloading, then the problem in \eqref{prob_formula_main_inword} can be written as
 \begin{subequations}
\label{prob_slots_main_single_user}
\begin{align} \notag
\min_{P_{1}, R_{1}} \quad  &  \displaystyle{(\tfrac{B_1}{R_{1}}) P_{1}}+  E_{\text{loc}_2} \\ %\displaystyle{ \tfrac{M B_2^3}{L_2^2}} \\ 
\label{prob_slots_main_single_user_b} %\tag{C1\theequation} %
\text{s.t.}  \quad   & \tfrac{B_1}{R_{1}} \le \tilde{L}_1,\\
\label{prob_slots_main_single_user_c}
& 0 \le P_{1} \le \bar{P}_1,\\
\label{prob_slots_main_single_user_d}
\quad & 0 \le R_{1} \le \log_2 (1+ \alpha_1 P_{1}),
\end{align}
\end{subequations}
where \seprev{$\tilde{L}_k =\tfrac{L_k- t_{{\text{exe}}_k}-t_{\text{DL}_k}}{T_s} $}. In this setting, the local execution energy consumption of user 2 is constant and it can be shown that if $ \tfrac{B_1}{\tilde{L}_1} \le \log_2(1+\alpha_1\bar{P}_1)$, then  the optimal communication resource allocation to user 1 is $R_1= \tfrac{B_1}{\tilde{L}_1}$, and $P_1 = \tfrac{2^{R_1}-1}{\alpha_1}$. Otherwise user 1 cannot meet its latency constraint by offloading.

%%%%%%%%%%%%%%%%%%%%%%%%%%%%%%%%%%%%%%%%%%
\subsection{Both Users Offloading: Complete Computation Offloading}
\label{complete_offlaoding}
In the case in which both users are offloading, we observe that the durations of the second and third time slots can be written as $\tau_2 = \tfrac{B_1- \gamma_{11}B_1}{R_{12}}$ and $\tau_3=\tfrac{B_2-\gamma_{21}B_2}{R_{23}}$, respectively. Using our ordering of the users (so that $L_1 \le L_2$), the problem of minimizing the sum of the users' transmission energies required to meet the latency constraints, subject to the power constraints and the achievable rate region, $\mathcal{R}$, of the chosen multiple access scheme can be formulated as
 \begin{subequations}
\label{prob_slots_main}
\begin{align} 
\min_{\substack {P_{11}, P_{12}, P_{21}, P_{23}, \\ R_{11}, R_{12}, R_{21}, R_{23}, \\ \gamma_{11}, \gamma_{21}, \tau_1}} \quad &  \displaystyle{ \tau_1( P_{11} + P_{21})} 
+  \displaystyle{ \bigl(\tfrac{B_1-\gamma_{11}B_1}{R_{12}}\bigr) P_{12}}+ \displaystyle{ \bigl(\tfrac{B_2-\gamma_{21}B_2}{R_{23}}\bigr) P_{23}} \\ 
\label{prob_slots_main_b1} 
\text{s.t.}  \enspace \quad \quad \quad  & 0 \le \tau_1,\\
\label{prob_slots_main_b} 
 \quad  &  0 \le \gamma_{11}, \gamma_{21} \le 1,\\
 \label{prob_slots_main_b_new} 
 \quad  &  \tau_1 R_{k1} = \gamma_{k1}B_k, \quad k=1, 2,\\
\label{prob_slots_main_c}
\quad & \tau_1+\tfrac{B_1- \gamma_{11}B_1}{R_{12}} \le \tilde{L}_1,\\
\label{prob_slots_main_d}
\quad  & \tau_1+\tfrac{B_1-\gamma_{11}B_1}{R_{12}} +\tfrac{B_2-\gamma_{21}B_2}{R_{23}} \le \tilde{L}_2,\\
\label{prob_slots_main_e}
& 0 \le P_{k1}, P_{k2} \le \bar{P}_k, \quad k=1,2,\\
\label{prob_slots_main_f}
\quad & 0 \le R_{12} \le \log_2 (1+ \alpha_1 P_{12}),\\
\label{prob_slots_main_g}
\quad & 0 \le R_{23} \le \log_2 (1+ \alpha_2 P_{23}),\\
\label{prob_slots_main_h}
\quad & \{R_{11}, R_{21}\} \in \mathcal{R},
\end{align}
\end{subequations}
\changen{where we have used the equality constraints in \eqref{prob_slots_main_b_new} to enhance the connection to the partial offloading in Section~\ref{partial_offloading}.}

One of the key results in this manuscript is a closed-form expression for the optimal solution to the problem in \eqref{prob_slots_main} when the full multiple access scheme is employed; i.e., when $ \mathcal{R}$ in \eqref{prob_slots_main_h} is the capacity region of the multiple access channel; see Section~\ref{general_case}. We also provide a closed-form solution in the case of TDMA (Section~\ref{TDMA}) and quasi-closed-form solutions for the cases of sequential decoding without time sharing (Section~\ref{Sec_Dec}) and independent decoding (Section~\ref{Ind_Dec}). In the case of full multiple access scheme we will show (see Appendix~\ref{Two_Time_Slots}) that only two time slots are required, and, quite naturally, this is also the case for TDMA. However, for the two other multiple access schemes there are scenarios in which all three time slots are employed. 

%%%%%%%%%%%%%%	Complete Offloading 	%%%%%%%%%%%%%%%
\section{Complete Computation Offloading: Full Multiple Access Scheme}
\label{general_case}
\seprev{When the full capabilities of the multiple access channel are used in the first time slot, the achievable rate region $\mathcal{R}$ in \eqref{prob_slots_main_h} becomes the capacity region. Using the standard description of that region \cite{cover2012elements}, and using the equality constraints in \eqref{prob_slots_main_b_new} to determine $\gamma_{k1}$, the problem in \eqref{prob_slots_main} can be written as  }
 \begin{subequations}
\label{prob_slots_general_K}
\begin{align}
\label{slots_general_K_a}
\min_{\substack {P_{11}, P_{12}, P_{21}, P_{23}, \\ R_{11}, R_{12}, R_{21}, R_{23}, \tau_1}} &  \tau_1 (P_{11} + P_{21})+  \bigl(\tfrac{B_1-\tau_1R_{11}}{R_{12}}\bigr) P_{12} + \bigl(\tfrac{B_2-\tau_1R_{21}}{R_{23}}\bigr) P_{23} \\ 
\label{slots_general_K_b}
\text{s.t.}   \quad \quad \quad  & \eqref{prob_slots_main_b1}, \eqref{prob_slots_main_e}-\eqref{prob_slots_main_g},\\
\label{slots_general_K_b1}
\quad & 0 \le \tau_1 R_{k1} \le B_k, \quad k=1, 2,\\
\label{slots_general_K_b2}
\quad & \tau_1+\tfrac{B_1- \tau_1 R_{11} }{R_{12}} \le \tilde{L}_1,\\
\label{slots_general_K_b3}
\quad  & \tau_1+\tfrac{B_1- \tau_1 R_{11} }{R_{12}} +\tfrac{B_2- \tau_1 R_{21} }{R_{23}} \le \tilde{L}_2,\\
\label{slots_general_K_c}
\quad  & 0 \le R_{k1} \le \log_2 (1+ \alpha_k P_{k1}), \quad k=1, 2,\\
\label{slots_general_K_e}
\quad & R_{11}+R_{21}  \le \log_2 (1+ \alpha_1 P_{11} + \alpha_2 P_{21} ). \quad \quad \quad \quad \quad \quad    
\end{align}
\end{subequations}

Although the problem in \eqref{prob_slots_general_K} is cast in the generic three time slot setting, we show in Appendix~\ref{Two_Time_Slots} that it is sufficient to consider a two slot system consisting of a multiple access time slot of length $ \tau_1 = \tilde{L}_1$ and a slot in which user 2 transmits alone. In other words, there is an optimal solution to \eqref{prob_slots_general_K} in which $\tau_1 = \tilde{L}_1$, $P_{12} = 0$, \rev{ $R_{12} = 0$,} and $\tau_1R_{11} = B_1$. This result not only simplifies the derivation of an optimal solution to \eqref{prob_slots_general_K}, it also simplifies the implementation of the system. With this simplification, the problem in \eqref{prob_slots_general_K} reduces to  
\begin{subequations}
\label{Two_slots_general}
\begin{align}
\label{Two_slots_genera_a}
\min_{\substack { P_{11}, P_{21}, P_{23}, \\ R_{21}, R_{23}}} \quad &  \tilde{L}_1 (P_{11} + P_{21}) + \bigl(\tfrac{B_2- \tilde{L}_1R_{21}}{R_{23}}\bigr) P_{23} \\ 
\label{Two_slots_genera_b}
\quad & \tfrac{B_1}{L_1} \le \log_2 (1+ \alpha_1 P_{11}), \\
\label{Two_slots_genera_c}
\quad & 0 \le R_{21} \le \log_2 (1+ \alpha_2 P_{21}), \\
\label{Two_slots_genera_d}
\quad & \tfrac{B_1}{L_1}+R_{21}  \le \log_2 (1+ \alpha_1 P_{11} + \alpha_2 P_{21} ), \\ 
\label{Two_slots_genera_e}
\quad & 0 \le R_{23} \le \log_2 (1+ \alpha_2 P_{23}), \\
\label{Two_slots_genera_f}
\quad  &  \tilde{L}_1+\tfrac{B_2-\tilde{L}_1R_{21}}{R_{23}} \le \tilde{L}_2,\\
\label{Two_slots_genera_g}
& 0 \le P_{11} \le \bar{P}_1, \quad 0 \le P_{21}, P_{23} \le \bar{P}_2.  
\end{align}
\end{subequations}

Our approach to solving the problem in \eqref{Two_slots_general}, and indeed the other problems that we will consider in this manuscript, will be to (i) (precisely) decompose the problem into inner and outer problems, (ii) determine a closed-form or quasi-closed-form expression for the optimal solution of the inner problem in terms of the variables of the outer problem, and (iii) solve the outer problem and subsequently obtain optimal values for the inner problem, e.g., \cite{salmani2016multiple}.
In the first step, we decompose the problem in \eqref{Two_slots_general} in such a way that the transmission rate and transmission power of the time slot in which user 2 transmits alone can be obtained in terms of the rates and powers of the first time slot, namely, 
\begin{eqnarray}
\label{decomposed_prob_general_case}
&\displaystyle{\min_{P_{11}, P_{21},R_{21}}}  \hspace{90pt} &\displaystyle{\min_{P_{23}, R_{23}}} \hspace{8pt}    \eqref{Two_slots_genera_a}\\ 
&\text{s.t.}   \hspace{8pt}   \eqref{Two_slots_genera_b}-\eqref{Two_slots_genera_d}, \eqref{Two_slots_genera_g}  &\hspace{9pt} \text{s.t.} \hspace{13pt}   \eqref{Two_slots_genera_e}-\eqref{Two_slots_genera_g}. \nonumber
\end{eqnarray}
Since the first part of the objective function in \eqref{Two_slots_general} is independent of $P_{23}$ and $R_{23}$, the inner optimization in \eqref{decomposed_prob_general_case} is
\begin{subequations}
\label{general_inner_1}
\begin{align} 
\label{general_inner_1_a}
\min_{P_{23}, R_{23}} \quad & \bigl(\tfrac{B_2-\tilde{L}_1 R_{21}}{R_{23}}\bigr) P_{23} \\ 
\label{general_inner_1_b}
\quad & 0 \le R_{23} \le \log_2 (1+ \alpha_2 P_{23}),\\
\label{general_inner_1_c}
\quad & 0 \le P_{23} \le \bar{P}_2,\\
\label{general_inner_1_d}
\quad  & \tilde{L}_1+\tfrac{B_2-\tilde{L}_1R_{21}}{R_{23}} \le \tilde{L}_2.
\end{align}
\end{subequations}
For a given value of $R_{23}$ the objective in \eqref{general_inner_1} is increasing in terms of $P_{23}$. Since the lower bounds on $P_{23}$ are separable, its optimal value is the minimum feasible value; \rev{i.e., 
\begin{equation}
\label{Opt_P_23}
P_{23} = \tfrac{2^{R_{23}}-1}{\alpha_2}.
\end{equation}}
That enables us to reduce the inner optimization problem to
\begin{subequations}
\label{general_inner_1_R}
\begin{align} 
\label{general_inner_1_R_a}
\min_{R_{23}} \quad & \bigl(\tfrac{B_2-\tilde{L}_1 R_{21}}{\alpha_2}\bigr) \bigl(\tfrac{2^{R_{23}}-1}{R_{23}}\bigr)\\ 
\label{general_inner_1_R_b}
\quad &  0\le R_{23} \le \log_2 (1+ \alpha_2 \bar{P}_{2}), \\
\label{general_inner_1_R_c}
\quad  & \tilde{L}_1+\tfrac{B_2-\tilde{L}_1 R_{21}}{R_{23}} \le \tilde{L}_2.
\end{align}
\end{subequations}
The constraint in  \eqref{general_inner_1_R_b} is obtained from the constraint in \eqref{general_inner_1_c} and it guarantees the feasibility of the problem in \eqref{general_inner_1} in terms of $P_{23}$. 
As shown in \rev{Appendix~\ref{increaseing_pbj_func}}, the objective function in \eqref{general_inner_1_R} is increasing in terms of $R_{23}$. \rev{Since \eqref{general_inner_1_R_c} imposes a lower bound on $R_{23}$ and the right hand side of \eqref{general_inner_1_R_b} imposes an upper bound, the optimal value of} $R_{23}$ is obtained when equality holds in \eqref{general_inner_1_R_c}, so long as that value satisfies \eqref{general_inner_1_R_b}. If it does not, the problem in \eqref{general_inner_1_R} is infeasible and so is that in \eqref{prob_slots_general_K}. Therefore, \rev{when it is feasible} the optimal value of $R_{23}$ is 
\begin{equation}
\label{Opt_R_23}
R_{23} = \tfrac{B_2-\tilde{L}_1 R_{21}}{\tilde{L}_2-\tilde{L}_1}.
\end{equation}
Having found closed-form solutions for $R_{23}$ and $P_{23}$, we can begin to solve the outer problem in \eqref{decomposed_prob_general_case}, namely,
\begin{subequations}
\label{prob_slots_general_MAC}
\begin{align} 
\label{prob_slots_general_MAC_a}
\min_{P_{11}, P_{21},R_{21}} \quad & \tilde{L}_1 (P_{11} + P_{21})+ \bigl(\tfrac{\tilde{L}_2-\tilde{L}_1}{\alpha_2}\bigr) \bigl( 2^{\tfrac{B_2-\tilde{L}_1R_{21}}{\tilde{L}_2-\tilde{L}_1}}-1 \bigr)\\ 
\label{prob_slots_general_MAC_b}
\text{s.t.}   \quad \quad & \rev{ \eqref{Two_slots_genera_b}-\eqref{Two_slots_genera_d}, \eqref{Two_slots_genera_g},}\\
\label{prob_slots_general_MAC_c}
\quad & \changen{0 \le } \tfrac{B_2-\tilde{L}_1 R_{21}}{\tilde{L}_2-\tilde{L}_1} \le \log_2 (1+ \alpha_2 \bar{P}_{2}).
\end{align}
\end{subequations}
To do so, we decompose \eqref{prob_slots_general_MAC} as
\begin{eqnarray}
\label{decomposed_prob_general_case_MAC}
&\displaystyle{\min_{ R_{21}}}  \hspace{80pt} &\displaystyle{\min_{P_{11}, P_{21}}} \hspace{8pt}    \eqref{prob_slots_general_MAC_a} \quad \quad \\ 
&\quad \text{s.t.}   \hspace{8pt} \changen{\eqref{Two_slots_genera_c}},  \eqref{prob_slots_general_MAC_c},  \hspace{20pt} & \quad \text{s.t.} \hspace{8pt}   \rev{\eqref{Two_slots_genera_b}-\eqref{Two_slots_genera_d}},\eqref{Two_slots_genera_g}. \nonumber
\end{eqnarray} 

For a given rate $R_{21}$, the constraints in \eqref{Two_slots_genera_b}, \eqref{Two_slots_genera_c}, and \eqref{Two_slots_genera_g} construct a rectangular feasibility region of power pairs $(P_{11}, P_{21})$. Since the objective function in \eqref{prob_slots_general_MAC_a} is an increasing linear function of $P_{11}$ and $P_{21}$, it can be shown that at optimality the constraint in \eqref{Two_slots_genera_d} holds with equality. Let us define
\begin{subequations}
\label{lines_region}
\begin{align}
\label{ell_def}
\ell &=\{(P_{11},P_{21})|  \alpha_1 P_{11} + \alpha_2 P_{21}+1 = 2^{B_1/\tilde{L}_1+R_{21}} \}, \\
\ell_k &=\{(P_{11},P_{21})|   P_{ki} =(2^{R_{ki}}-1)/\alpha_k \}.
\end{align}
\end{subequations}
The line $\ell$ describes the power pairs for which the constraint in \eqref{Two_slots_genera_d} is active, and the lines $\ell_k$ describe the power pairs for which the constraint in \eqref{Two_slots_genera_b} and the right hand side of the constraint in \eqref{Two_slots_genera_c} are active\changen{, respectively}. Based on the intersection of the line $\ell $ with the rectangular region constructed by the constraints in \eqref{Two_slots_genera_b}, \eqref{Two_slots_genera_c} and \eqref{Two_slots_genera_g}, the feasibility region for the pair $(P_{11}, P_{21})$ will have different shapes; see Fig.~\ref{4_shapes}. If $\ell$ does not have any intersection with the rectangular region, the problem is not feasible.
%% FIG 3 %%
%%%%%%%
\begin{figure}
  \begin{subfigure}[b]{0.5\linewidth}
    \centering
    \includegraphics[width=0.65\linewidth]{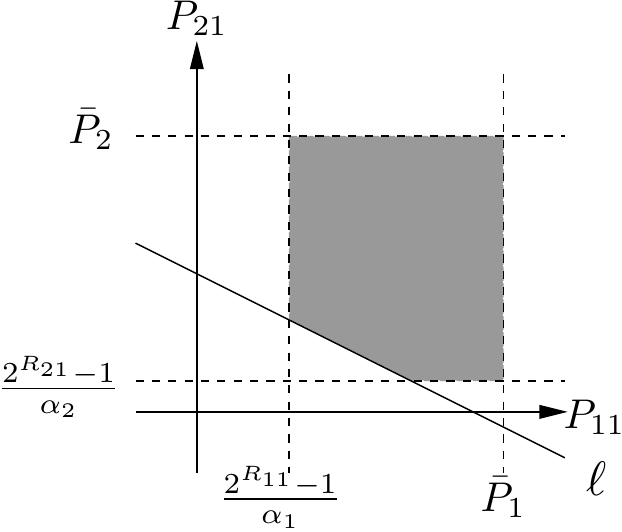} 
    \caption{Case I} 
    \label{case_I_fig} 
    %\vspace{4ex}
  \end{subfigure}%% 
  \begin{subfigure}[b]{0.5\linewidth}
    \centering
    \includegraphics[width=0.65\linewidth]{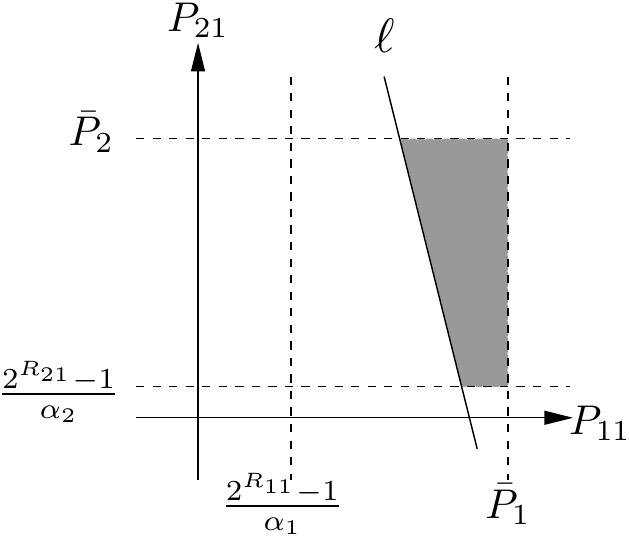} 
    \caption{Case II} 
    \label{case_II_fig} 
   % \vspace{4ex}
  \end{subfigure} 
  \begin{subfigure}[b]{0.5\linewidth}
    \centering
    \includegraphics[width=0.65\linewidth]{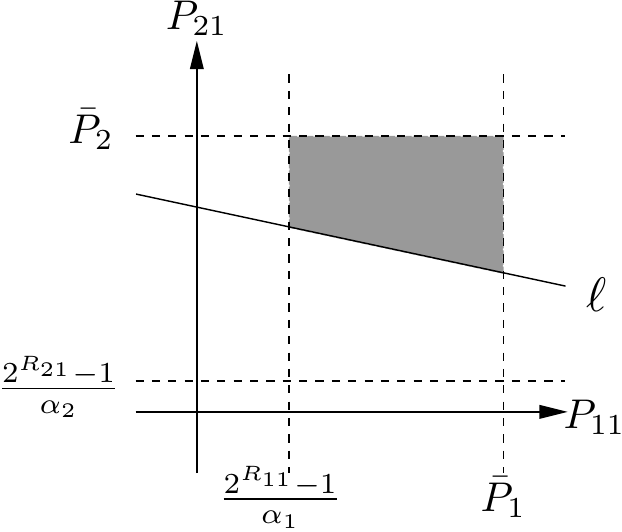} 
    \caption{Case III} 
    \label{case_III_fig} 
  \end{subfigure}%%
  \begin{subfigure}[b]{0.5\linewidth}
    \centering
    \includegraphics[width=0.65\linewidth]{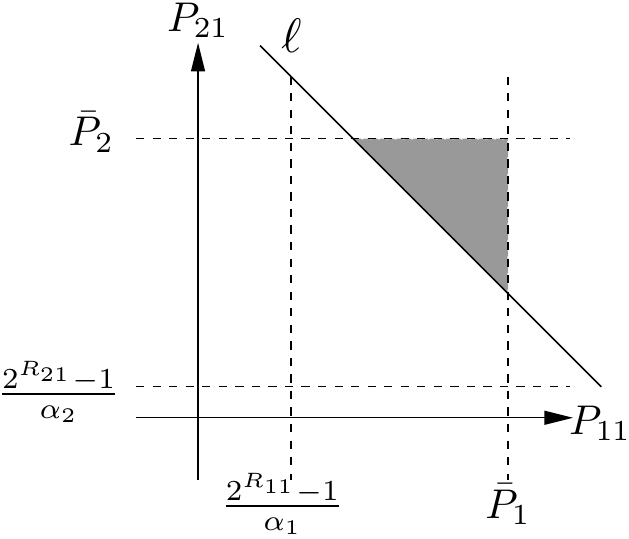} 
    \caption{Case IV} 
    \label{case_IV_fig} 
  \end{subfigure} 
  \caption{$(P_{11}, P_{21})$ feasibility region in different cases.}
  \label{4_shapes}
\end{figure}
%%%%%%%

According to the feasibility regions in Fig.~\ref{4_shapes} and the fact that the optimal values of $P_{11}$ and $P_{21}$ lie on the line $\ell$, the inner problem in \eqref{decomposed_prob_general_case_MAC} can be written, in terms of $P_{11}$, as
\begin{subequations}
\label{P_optimization_general_case_I}
\begin{eqnarray}
\label{P_optimization_general_case_I_a}
& \displaystyle{\min_{P_{11}}}  & \tilde{L}_1 \bigl((1-\tfrac{\alpha_1}{\alpha_2})P_{11} + \tfrac{2^{{B_1}/{\tilde{L}_1}+R_{21}}-1}{\alpha_2}\bigr)\\
\label{P_optimization_general_case_I_b}
&\text{s.t.}  & \eqref{Two_slots_genera_b}, \eqref{Two_slots_genera_c},\eqref{Two_slots_genera_g}.
\end{eqnarray}
\end{subequations}
\changen{The feasible set described by the constraints in \eqref{P_optimization_general_case_I_b} is the interval
\begin{equation}
\label{up_low_bnd}
\max \{ \tfrac{2^{B_1/\tilde{L}_1}-1}{\alpha_1}, \tfrac{2^{B_1/\tilde{L}_1+R_{21}}-1-\alpha_2\bar{P}_2}{\alpha_1}\} \le P_{11} \le \min\{\bar{P}_{1}, \tfrac{2^{R_{21}}(2^{B_1/\tilde{L}_1}-1)}{\alpha_1} \}, 
\end{equation}
and this interval is non-empty if and only if
\begin{subequations}
\begin{align}
\label{feas_constarint_P11}
&  \tfrac{B_1}{\tilde{L}_1} \le \log_2(1+\alpha_1\bar{P}_1), \\
\label{feas_constarint_R21}
&  R_{21} \le \min \{r_a, r_b\},
\end{align} 
\end{subequations}
where
\begin{subequations}
\begin{align}
%\tag{P1}
%& \tfrac{B_1}{\tilde{L}_1} \le \log_2(1+\alpha_1\bar{P}_1), \\
\label{r_a}
&  r_a = \log_2(1+\alpha_2\bar{P}_2), \\
\label{r_b}
&  r_b = \log_2(1+\alpha_1\bar{P}_1+\alpha_2\bar{P}_2) - \tfrac{B_1}{\tilde{L}_1}.
\end{align} 
\end{subequations}
The constraint in \eqref{feas_constarint_P11} depends only on the parameters of the original problem in \eqref{prob_slots_general_K}, and the constraint in \eqref{feas_constarint_R21} will be incorporated into the various cases of the outer problem in \eqref{decomposed_prob_general_case_MAC}. }
Since the objective in \eqref{P_optimization_general_case_I_a} is linear in $P_{11}$, \changen{and the sign of the slope is the sign of $(1-\tfrac{\alpha_1}{\alpha_2})$, when the problem is feasible the parameter $\tfrac{\alpha_1}{\alpha_2}$ determines whether it is the upper bound or the lower bound on $P_{11}$ in \eqref{up_low_bnd} that is the optimal solution; i.e., }
%} it is sufficient to examine the values of $P_{11}$ on the boundary of its \changen{interval}, which can be written as
%\begin{equation}
%\label{up_low_bnd}
%\max \{ \tfrac{2^{B_1/\tilde{L}_1}-1}{\alpha_1}, \tfrac{2^{B_1/\tilde{L}_1+R_{21}}-1-\alpha_2\bar{P}_2}{\alpha_1}\} \le P_{11} \le \min\{\bar{P}_{1}, \tfrac{2^{R_{21}}(2^{B_1/\tilde{L}_1}-1)}{\alpha_1} \}. 
%\end{equation}
%\changen{If the constraint $ \tfrac{B_1}{\tilde{L}_1} \le \log_2(1+\alpha_1\bar{P}_1)$ on the parameters of the problem, and also the following constraints 
%on variable $R_{21}$ are satisfied, the interval of $P_{11}$ in \eqref{up_low_bnd} is feasible and} the parameter $\tfrac{\alpha_1}{\alpha_2}$ defines whether the upper bound or the lower bound on $P_{11}$ is the optimal solution; i.e.,
\begin{subnumcases}{\hspace{-20pt}}
\label{caseI_A}
P_{11} = \max \{ \tfrac{2^{B_1/\tilde{L}_1}-1}{\alpha_1}, \tfrac{2^{B_1/\tilde{L}_1+R_{21}}-1-\alpha_2\bar{P}_2}{\alpha_1}\}, \quad  {\text{if}} ~ \tfrac{\alpha_1}{\alpha_2} \le 1, \\
\label{caseI_B}
 P_{11}= \min\{\bar{P}_{1}, \tfrac{2^{R_{21}}(2^{B_1/\tilde{L}_1}-1)}{\alpha_1} \}, \quad \quad \quad \hspace{26pt} {\text{if}} ~ \tfrac{\alpha_1}{\alpha_2} \ge 1.
\end{subnumcases}
\changen{Therefore, if the interval in \eqref{up_low_bnd} is not empty, then depending on whether $\tfrac{\alpha_1}{\alpha_2}$ is larger or smaller than one, we have one of the two pairs of candidate optimal solutions for $P_{11}$. These candidate pairs will be referred to as Cases A and B in the subsections below. Once the appropriate case is selected, for each candidate optimal solution for $P_{11}$ we obtain the corresponding $P_{21}$ using \eqref{ell_def}. Hence we obtain a candidate optimal solution to the inner problem in \eqref{decomposed_prob_general_case_MAC} and subsequently a formulation for the outer problem in \eqref{decomposed_prob_general_case_MAC}. These formulations will be referred to as subcases I and II of Cases A and B below. The optimal value for $R_{21}$ is then the feasible value that leads to the smaller objective out of subcases I and II.}

%To simplify the discussion of Cases A and B and their subclasses, we define  
%\begin{subequations}
%\begin{align}
%%\tag{P1}
%%& \tfrac{B_1}{\tilde{L}_1} \le \log_2(1+\alpha_1\bar{P}_1), \\
%\label{r_a}
%&  r_a = \log_2(1+\alpha_2\bar{P}_2) \ge R_{21}, \\
%\label{r_b}
%&  r_b = \log_2(1+\alpha_1\bar{P}_1+\alpha_2\bar{P}_2) - \tfrac{B_1}{\tilde{L}_1} \ge R_{21},
%\end{align} 
%\end{subequations}
%and we observe that the interval in \eqref{up_low_bnd}  is non-empty if and only if
%\begin{subequations}
%\begin{align}
%\label{feas_constarint_P11}
%&  \tfrac{B_1}{\tilde{L}_1} \le \log_2(1+\alpha_1\bar{P}_1), \\
%\label{feas_constarint_R21}
%&  R_{21} \le \min \{r_a, r_b\}.
%\end{align} 
%\end{subequations}
%The constraint in \eqref{feas_constarint_P11} depends only on the parameters of the original problem in \eqref{prob_slots_general_K}, and the constraint in \eqref{feas_constarint_R21} will be incorporated into the various cases of the outer problem in \eqref{decomposed_prob_general_case_MAC}.
%For any value of $\tfrac{\alpha_1}{\alpha_2}$, we have two candidates for the optimal solution for $P_{11}$ (and, consequently, for  $P_{21}$). By substituting the closed-form expressions for the transmission powers and solving the problem in \eqref{decomposed_prob_general_case_MAC} the candidate for the minimum objective function in \eqref{decomposed_prob_general_case_MAC} can be obtained. 
%%%%%%%%%%%%%%%%%%%%%%%%%%%%%%%%%%%
\subsection{Case A: $\tfrac{\alpha_1}{\alpha_2} \le 1$}
\label{Case_A_full_offloading}
Using \eqref{caseI_A} and \eqref{ell_def}, the candidate solutions of the inner problem when the channel gain of the second user is larger than that of the first user are
\begin{subnumcases}{\hspace{-20pt}}
\label{caseA_I}
P_{11} = \tfrac{2^{B_1/\tilde{L}_1}-1}{\alpha_1}, ~P_{21}= \tfrac{2^{B_1/\tilde{L}_1}(2^{R_{21}}-1)}{\alpha_2},\\%~ {\text{}}, \\
\label{caseA_II}
 P_{11}= \tfrac{2^{B_1/\tilde{L}_1+R_{21}}-\alpha_2\bar{P}_2-1}{\alpha_1},~ P_{21} = \bar{P}_2.
\end{subnumcases}
 %%%%%%%%%%%%%%%%%%%%%%%%
 %%%%%%%%%%%%%%%%%%%%%%%%
\subsubsection{Case A-I}
\label{case_A_I_full}
By substituting the values in \eqref{caseA_I} the outer optimization problem in \eqref{decomposed_prob_general_case_MAC} can be written as
\begin{subequations}
\label{prob_slots_general_Rs_K}
\begin{align} 
\label{prob_slots_general_Rs_K_a} 
\min_{R_{21}} \quad  &  \tilde{L}_1  \bigl( \tfrac{2^{B_1/\tilde{L}_1}-1}{\alpha_1}+ \tfrac{2^{B_1/\tilde{L}_1}(2^{R_{21}}-1)}{\alpha_2}\bigr) + \bigl(\tfrac{\tilde{L}_2-\tilde{L}_1}{\alpha_2}\bigr) \bigl( 2^{\frac{B_2-\tilde{L}_1R_{21}}{\tilde{L}_2-\tilde{L}_1}}-1 \bigr) \\ 
\label{prob_slots_general_Rs_K_b}
\text{s.t.}   \quad & \max \{0, r_c\}  \le R_{21} \le  \min \{r_a, r_b \},
\end{align}
\end{subequations}
\changen{where %the constraint \eqref{prob_slots_general_MAC_c} is rewritten as 
\begin{equation}
\label{R_c_constraint}
r_c = \tfrac{B_2- (\tilde{L}_2-\tilde{L}_1)\log_2 (1+ \alpha_2 \bar{P}_{2})}{ \tilde{L}_1}.
\end{equation}
We observe that the constraints in \eqref{prob_slots_general_Rs_K_b} depend only on the parameters of the original problem in \eqref{prob_slots_general_K}. }
In Appendix~\ref{app_cvx_R21_case_I} we show that the objective function in \eqref{prob_slots_general_Rs_K} is a convex function of $R_{21}$. Since the constraints are simple bounds on $R_{21}$, the optimization problem in \eqref{prob_slots_general_Rs_K} is convex and there are three possible values for the optimal value of $R_{21}$. If it is feasible, the value for which the first derivative of the objective function is equal to zero, $R_{{21}_{\text{d}}} = \tfrac{B_2}{\tilde{L}_2} - \tfrac{(\tilde{L}_2-\tilde{L}_1)B_1}{\tilde{L}_1 \tilde{L}_2}$, 
is the optimal solution. Otherwise, either the upper bound or the lower bound on $R_{21}$ is the optimal solution, or the problem is infeasible.
 %%%%%%%%%%%%%%%%%%%%%%%%
 %%%%%%%%%%%%%%%%%%%%%%%%
\subsubsection{Case A-II}
\label{case_A_II_full}
Using the transmission powers in \eqref{caseA_II}, the rate optimization problem can be written as
\begin{subequations}
\label{prob_slots_general_Rs_K_case_II}
\begin{align}
\label{prob_slots_general_Rs_K_case_II_a} 
\min_{ R_{21}} \quad &  \tilde{L}_1 \bigl(\tfrac{2^{B_1/\tilde{L}_1+R_{21}}-\alpha_2 \bar{P}_2-1}{\alpha_1}+ \bar{P}_2\bigr)+ \bigl(\tfrac{\tilde{L}_2-\tilde{L}_1}{\alpha_2}\bigr) \bigl(2^{\frac{B_2-\tilde{L}_1R_{21}}{\tilde{L}_2-\tilde{L}_1}}-1\bigr) \\ 
\label{prob_slots_general_Rs_K_case_II_b}
\text{s.t.}  \quad & \eqref{prob_slots_general_Rs_K_b}.
\end{align}
\end{subequations}
The objective function in \eqref{prob_slots_general_Rs_K_case_II} is convex in terms of $R_{21}$ and the constraints are simple bounds. Hence, the optimal value for $R_{21}$ can be at either end point of its feasibility interval or at the point at which the derivative of the objective is equal to zero.
%%%%%%%%%%%%%%%%%%%%%%%%%%%%%%%%%%%%
\subsection{Case B: $\tfrac{\alpha_1}{\alpha_2} \ge 1$}
Using \eqref{caseI_B} and \eqref{ell_def}, the candidate solutions of the inner problem when the channel gain of the first user is greater than that of the second user are
\begin{subnumcases}{\hspace{-20pt}}
\label{caseB_I}
 P_{11}= \tfrac{2^{R_{21}}(2^{B_1/\tilde{L}_1}-1)}{\alpha_1},~ P_{21} = \tfrac{2^{R_{21}}-1}{\alpha_2},\\
\label{caseB_II}
P_{11} =\bar{P}_1, ~P_{21}= \tfrac{2^{B_1/\tilde{L}_1+R_{21}}-\alpha_1 \bar{P}_1-1}{\alpha_2}.
\end{subnumcases}
%%%%%%%%%%%%%%%%%%%%%%%%
%%%%%%%%%%%%%%%%%%%%%%%%
\subsubsection{Case B-I}
\label{case_B_I_full}
By substituting the values in \eqref{caseB_I}, the outer problem in \eqref{decomposed_prob_general_case_MAC} becomes
\begin{subequations}
\label{prob_slots_general_Rs_K_alpha12}
\begin{align} 
\label{prob_slots_general_Rs_K_alpha12_a} 
\min_{R_{21}} \quad  & \tilde{L}_1  \bigl( \tfrac{2^{B_1/\tilde{L}_1 + R_{21}}-2^{R_{21}}}{\alpha_1}+ \tfrac{2^{R_{21}}-1}{\alpha_2}\bigr) + \bigl(\tfrac{\tilde{L}_2-\tilde{L}_1}{\alpha_2}\bigr) \bigl( 2^{\frac{B_2-\tilde{L}_1R_{21}}{\tilde{L}_2-\tilde{L}_1}}-1 \bigr) \\ 
\label{prob_slots_general_Rs_K_alpha12_b}
\text{s.t.}  \quad  & \eqref{prob_slots_general_Rs_K_b}.
\end{align}
\end{subequations}
Applying the techniques used in Appendix~\ref{app_cvx_R21_case_I}, it can be shown that the objective function in \eqref{prob_slots_general_Rs_K_alpha12} is a convex function of $R_{21}$. Since the constraints in \eqref{prob_slots_general_Rs_K_alpha12} are the same as those in \eqref{prob_slots_general_Rs_K}, there are three possible values for the optimal value of $R_{21}$. If it is feasible, the value for which the derivative of the objective is zero, $R_{{21}_{\text{d}}} = \tfrac{B_2}{\tilde{L}_2} - \tfrac{(\tilde{L}_2-\tilde{L}_1) \phi_1}{\tilde{L}_2},$
where $\phi_1 = \log_2 \bigl(\tfrac{\alpha_2}{\alpha_1} (2^{B_1/\tilde{L}_1}-1) +1\bigr)$, is the optimal solution. Otherwise, either the upper bound or the lower bound on $R_{21}$ is the optimal solution, or the problem is infeasible.
%%%%%%%%%%%%%%%%%%%%%%%%
%%%%%%%%%%%%%%%%%%%%%%%%
\subsubsection{Case B-II}
\label{case_B_II_full}
By substituting the values in \eqref{caseB_II}, the rate optimization problem for this candidate solution is 
\begin{subequations}
\label{prob_slots_general_Rs_K_case_III}
\begin{align} 
\label{prob_slots_general_Rs_K_case_III_a}
\min_{R_{21}} \quad & \tilde{L}_1 \bigl(\tfrac{2^{B_1/\tilde{L}_1+R_{21}}-\alpha_1 \bar{P}_1-1}{\alpha_2}+ \bar{P}_1\bigr) + \bigl(\tfrac{\tilde{L}_2-\tilde{L}_1}{\alpha_2}\bigr) \bigl(2^{\frac{B_2-\tilde{L}_1R_{21}}{\tilde{L}_2-\tilde{L}_1}}-1\bigr) \\ 
\label{prob_slots_general_Rs_K_case_III_b}
\text{s.t.}  \quad  &  \eqref{prob_slots_general_Rs_K_b}.
\end{align}
\end{subequations}
The objective function in \eqref{prob_slots_general_Rs_K_case_III} has the same structure as the objective function in \eqref{prob_slots_general_Rs_K_case_II}. Accordingly, the optimal solution for the transmission rate $R_{21}$ is either at one of the end points of the feasible interval, or at the point in which the derivative of the objective function is equal to zero, if that point lies within the feasibility region. 

%%%%%%%%%%%%%%%%%%%%%%%%%%%%%%%%%%%%
\subsection{Algorithm for Solving \eqref{prob_slots_general_K}}
By considering the derivations in the subsections above, we obtain a closed-form expression for an optimal solution to \eqref{prob_slots_general_K}. The steps that need to be taken to obtain this solution are summarized in Algorithm~\ref{main_algorithm}. \changen{The first step is to determine if the problem in \eqref{prob_slots_general_K} is feasible. A necessary and sufficient condition for feasibility is that \eqref{feas_constarint_P11} holds and that there exists at least one $R_{21}$ that satisfies \eqref{prob_slots_general_Rs_K_b}. Both of those expressions depend on the parameters of the problem. If the problem is feasible, then depending on the value of $\tfrac{\alpha_1}{\alpha_2}$ we either solve the two subcases of Case A or the two subcases of Case B. The optimal solution is the better of those two candidate solutions.}

%For a complete solution to the offloading problem for indivisible tasks, the energy obtained from Algorithm~\ref{main_algorithm} would need to be compared to the total mobile energy consumption for the scenarios in which only one user offloads (assuming local computation is feasible for the other user), and for the case where both users compute locally. The optimal offloading decision corresponds to the scenario with the lowest mobile energy consumption. 

%%%%%%%%%%%%%%%%%%%%%%%%%%%%%%%%%%%%%
%% Alg 1 %%
%%%%%%%
 \begin{algorithm}[htp]
\caption{: An optimal solution to \eqref{prob_slots_general_K}} 
\label{main_algorithm}
\begin{algorithmic}
\State \changen{Input data: values of $\{ B_k \} $, $\{ \bar{P}_k \}$, $\{ L_k \}$, $\{ T_k \}$, $\{ \alpha_k \}$, and $T_s$.
	\If {$\log_2(1+\alpha_1 \bar{P}_1) <  \tfrac{B_1}{\tilde{L}_1} $ or $ \max \{0, r_c\}  >  \min \{r_a, r_b \}$ }
	\State The problem is infeasible.
	\Else 
	\If {$\tfrac{\alpha_1}{\alpha_2} \le 1$} 
     		\State Generate a partial candidate solution according to Case A-I (Section~\ref{case_A_I_full}).	
     		That is, let $R_{21}^{\star}$ denote the optimal solution to \eqref{prob_slots_general_Rs_K}, and calculate $P_{11}^{\star}$ and $P_{21}^{\star}$ using \eqref{caseA_I}.    		
		\State Generate a partial candidate solution according to Case A-II (Section~\ref{case_A_II_full}).
		That is, let $R_{21}^{\star}$ denote the optimal solution to \eqref{prob_slots_general_Rs_K_case_II}, and calculate $P_{11}^{\star}$ and $P_{21}^{\star}$ using \eqref{caseA_II}.
		
	\Else  
     		\State Generate a partial candidate solution according to Case B-I (Section~\ref{case_B_I_full}).	
     		That is, let $R_{21}^{\star}$ denote the optimal solution to \eqref{prob_slots_general_Rs_K_alpha12}, and calculate $P_{11}^{\star}$ and $P_{21}^{\star}$ using \eqref{caseB_I}.
     		
		\State Generate a partial candidate solution according to Case B-II (Section~\ref{case_B_II_full}).
		That is, let $R_{21}^{\star}$ denote the optimal solution to \eqref{prob_slots_general_Rs_K_case_III}, and calculate $P_{11}^{\star}$ and $P_{21}^{\star}$ using \eqref{caseB_II}.
	\EndIf 
    	\State Complete each partial candidate solution by choosing $R_{23}^{\star}$ according to \eqref{Opt_R_23}, $P_{23}^{\star}$ according to \eqref{Opt_P_23}, and setting $\tau_1= \tilde{L}_1$, $P_{12} = 0$, $R_{12}=0$ and $R_{11} = \tfrac{B_1}{\tau_1}$. Calculate the objective value for each candidate solution, and choose the solution that corresponds to the minimum value. 
			\EndIf} 
\end{algorithmic}
\end{algorithm}
%%%%%%%

%%%%%%%%%%%%%%	Suboptimal Methods 	%%%%%%%%%%%%%%%
\section{Complete Computation Offloading: Suboptimal Multiple Access Methods}
\label{suboptimal_schemes}
Having obtained a closed-form expression for an optimal solution to the minimum energy offloading problem when the full capabilities of the multiple access channel are employed, we now address that problem when suboptimal multiple access methods are employed. That is, we will solve the variant of the problem in \eqref{prob_slots_main} in which the achievable rate region constraint of the first time slot, cf. \eqref{prob_slots_main_h}, is the rate region of the chosen suboptimal scheme rather than the capacity region. We will consider the cases of TDMA, sequential decoding without time sharing (SDwts), and independent decoding (ID).  
%%%%%%%%%%%%%%%%%%%%%%%%%%%%%%%%%%%%%%%%%%%%
\subsection{Time Division Multiple Access}
\label{TDMA}
In the TDMA scheme only one user is transmitting at a time and the communication resource is fully assigned to that user. Quite naturally, that means that the optimal energy consumption can be achieved using only two time slots. By simplifying our notation in an intuitive way and observing that the durations of the time slots are $\tfrac{B_1}{R_1}$ and $\tfrac{B_2}{R_2}$, respectively, the optimization problem in the TDMA case becomes 
\begin{subequations}
\label{TDMA_prob}
\begin{align}
\min_{\substack{P_1, P_2,\\ R_1, R_2}} \quad & \bigl(  \tfrac{B_1 P_1}{R_1} + \tfrac{B_2 P_2}{R_2} \bigr)\\ 
\label{TDMA_prob_b}
\text{s.t.}  \enspace  \quad  & 0 \le R_k \le \log_2(1+\alpha_k P_k), \quad k=1, 2, \\
  \label{TDMA_prob_c}
& 0 \le P_k \le \bar{P}_k, \quad k=1, 2,  \\
 \label{TDMA_prob_d}
&  \tfrac{B_1}{R_1} \le \tilde{L}_1,\\
 \label{TDMA_prob_e}
&  \tfrac{B_1}{R_1}+\tfrac{B_2}{R_2} \le \tilde{L}_2.
\end{align}
\end{subequations}

The optimal powers in \eqref{TDMA_prob} are those that achieve equality in the upper bounds in \eqref{TDMA_prob_b}, and hence the problem can be simplified to 
\begin{subequations}
\label{TDMA_rates_prob}
\begin{align}
\label{TDMA_rates_prob_a}
\min_{ R_1, R_2} \quad & \bigl( \tfrac{B_1}{R_1}\bigr) \bigl(\tfrac{2^{R_1}-1}{\alpha_1} \bigr) + \bigl(\tfrac{B_2}{R_2}\bigr) \bigl(\tfrac{2^{R_2}-1}{\alpha_2}\bigr) \\ 
\label{TDMA_rates_prob_b}
\text{s.t.}  \quad  & 0 \le R_1 \le \log_2(1+\alpha_1 \bar{P}_1), \\
 \label{TDMA_rates_prob_c}
  \quad  & 0 \le R_2 \le \log_2(1+\alpha_2 \bar{P}_2), \\
   \label{TDMA_rates_prob_d}
  \quad &  \rev{\eqref{TDMA_prob_d} \text{ and } \eqref{TDMA_prob_e} }.
\end{align}
\end{subequations}
The problem in \eqref{TDMA_rates_prob} can be decomposed as
\begin{eqnarray}
\label{decomposed_prob_rates}
& \displaystyle{\min_{R_1}}  \hspace{65pt} & \displaystyle{\min_{R_2}} \hspace{8pt} \eqref{TDMA_rates_prob_a} \\ 
&\text{s.t.}  \hspace{8pt}   \eqref{TDMA_prob_d}, \eqref{TDMA_rates_prob_b},  & \text{s.t.} \hspace{8pt} \eqref{TDMA_prob_e}, \eqref{TDMA_rates_prob_c}. \nonumber
\end{eqnarray}
The objective function of the problem in \eqref{decomposed_prob_rates} is increasing in terms of $R_2$ and the optimal transmission rate for the second user is achieved when \rev{\eqref{TDMA_prob_e}} holds with equality, i.e., $R_2 = \tfrac{B_2 R_1}{\tilde{L}_2 R_1 - B_1}$. The outer optimization problem is then
\begin{subequations}
\label{TDMA_final_prob}
\begin{align}
\min_{R_1} \quad &  \bigl(\tfrac{B_1}{R_1}\bigr) \bigl(\tfrac{2^{R_1}-1}{\alpha_1}\bigr) + \bigl(\tfrac{\tilde{L}_2 R_1- B_1}{\alpha_2 R_1}\bigr) \bigl(2^{\frac{B_2 R_1 }{\tilde{L}_2 R_1- B_1}}-1\bigr)\\ 
\label{TDMA_final_prob_b}
\text{s.t.}  \quad  & \rev{\eqref{TDMA_prob_d} \text{ and } \eqref{TDMA_rates_prob_b}}, \\
 \label{TDMA_final_prob_c}
  \quad  & \tfrac{B_2 R_1}{\tilde{L}_2 R_1 - B_1} \le \log_2(1+\alpha_2 \bar{P}_2).
\end{align}
\end{subequations}
It is shown in \cite{Mahsaasilomar2017} that the objective in \eqref{TDMA_final_prob} is convex. Since the constraints are linear, the optimal solution is either where the derivative of the objective is zero or at one of the end points of the feasibility interval. For each of those values for $R_1$ the corresponding values for $R_2$, $P_1$ and $P_2$ can be obtained and the quadruple that provides the smallest objective value in \eqref{TDMA_prob} is the optimal solution.
%%%%%%%%%%%%%%%%%%%%%%%%%%%%%%%%%%%%%%%%%%%%%
\subsection{Sequential Decoding without time sharing}
\label{Sec_Dec}
In the sequential decoding without time sharing scheme, the received signal from one user is decoded considering the interference from the other user as noise. Presuming that this message is correctly decoded, the interference of the decoded user is then reconstructed and subtracted from the received signal and the other user is decoded without interference. We will look at the case in which the system can choose the order in which the users are decoded, but that order remains fixed for the duration of the multiple access interval. Hence, we use the qualifier ``without time sharing" (wts) in our description. The achievable rate region of sequential decoding (wts) is shown in Fig.~\ref{feas_region_all_sec_dec}. For this scheme there are scenarios in which the optimal energy consumption requires that all three time slots of the system be employed. Hence, the problem of finding the minimum energy consumption for this scheme is 
\begin{subequations}
\label{prob_sec_dec}
\begin{align} 
\label{prob_sec_dec_a}
\min_{\substack {P_{11}, P_{12}, P_{21}, P_{23}, \\ R_{11}, R_{12}, R_{21}, R_{23}, \tau_1}} \quad & \tau_1 (P_{11} + P_{21})+\bigl(\tfrac{B_1-\tau_1R_{11}}{R_{12}}\bigr) P_{12} + \bigl(\tfrac{B_2-\tau_1R_{21}}{R_{23}}\bigr) P_{23} \\ 
\label{prob_sec_dec_b}
\text{s.t.}  \enspace \quad \quad \quad & \eqref{prob_slots_main_b1}-\eqref{prob_slots_main_g},\\
\label{prob_sec_dec_c}
\quad & \{ R_{11}, R_{21} \} \in \tilde{\mathcal{R}}_1 \cup  \tilde{\mathcal{R}}_2, 
\end{align}
\end{subequations}
where
\begin{subequations}
\begin{align} \nonumber
 \tilde{\mathcal{R}}_1 = & \bigl\{ \{ R_{11}, R_{21} \} |  0 \le R_{11}  \le \log_2(1+ \alpha_1 P_{11}), \quad  \\ 
 & \quad \quad \quad \quad \quad \hspace{5pt} 0 \le R_{21}  \le \log_2(1+\tfrac{\alpha_2 P_{21}}{1+\alpha_1 P_{11}})  \bigr\},  \\ \nonumber
\tilde{\mathcal{R}}_2 = & \bigl\{ \{ R_{11}, R_{21} \} |  0 \le R_{11}  \le \log_2(1+\tfrac{\alpha_1 P_{11}}{1+\alpha_2 P_{21}}), \quad \\
 & \quad \quad \quad \quad \quad \hspace{5pt} 0 \le R_{21}  \le  \log_2(1+ \alpha_2 P_{21})  \bigr\}.
 \end{align}
\end{subequations}
The problem in \eqref{prob_sec_dec} can be decomposed and a closed-form solution for the transmission rates and the transmission powers of the second and third time slots can be obtained in terms of the transmission rates and transmission powers of the first time slot \cite{Mahsaasilomar2017}. The remaining optimization problem is
\begin{subequations}
\label{prob_sec_dec_out}
\begin{align}
\label{prob_sec_dec_out_a}
\min_{\substack {P_{11}, P_{21}, \\ R_{11}, R_{21}, \tau_1}} \quad & \tau_1 (P_{11} + P_{21})+ \bigl(\tfrac{\tilde{L}_1-\tau_1}{\alpha_1}\bigr) \bigl(2^{\frac{B_1-\tau_1R_{11}}{\tilde{L}_1-\tau_1}} -1 \bigr) + \bigl(\tfrac{\tilde{L}_2-\tilde{L}_1}{\alpha_2}\bigr) \bigl(2^{\frac{B_2-\tau_1R_{21}}{\tilde{L}_2-\tilde{L}_1}} -1 \bigr) \\ 
\label{prob_sec_dec_out_b}
\text{s.t.}  \enspace \quad \quad  & \eqref{prob_slots_main_b1}, \eqref{prob_slots_main_b}\\
\label{prob_sec_dec_out_c}
& \tfrac{B_1-\tau_1R_{11}}{\tilde{L}_1-\tau_1} \le \log_2(1+\alpha_1 \bar{P}_1), \\
\label{prob_sec_dec_out_d}
 & \tfrac{B_2-\tau_1R_{21}}{\tilde{L}_2-\tilde{L}_1} \le \log_2(1+\alpha_2 \bar{P}_2),\\
\label{prob_sec_dec_out_e}
\quad & \{ R_{11}, R_{21} \} \in \tilde{\mathcal{R}}_1 \cup \tilde{\mathcal{R}}_2.
\end{align}
\end{subequations}

Depending on the corner point at which sequential decoding scheme is operating, a closed-form solution for the powers can be obtained in terms of the rates \cite{Mahsaasilomar2017}, namely,
\[ \begin{cases}
    & P_{11} = \tfrac{2^{R_{11}}-1}{\alpha_1}, \quad P_{21} = \tfrac{2^{R_{11}} (2^{R_{21}}-1)}{\alpha_2} \quad \text{for } \tilde{\mathcal{R}}_1\\
    & P_{11} = \tfrac{2^{R_{21}} (2^{R_{11}}-1)}{\alpha_1}, \quad P_{21} = \tfrac{2^{R_{21}}-1}{\alpha_2} \quad \text{for } \tilde{\mathcal{R}}_2
  \end{cases}
\]
Since the rate regions for each decoding order, $\tilde{\mathcal{R}}_1$ and $\tilde{\mathcal{R}}_2$, are rectangular \rev{(see Fig.~\ref{feas_region_all_sec_dec})}, the optimal rate pair lies at the ``dominant'' corner \rev{(i.e., the ``North-East'' corner)} of one of \rev{the} rectangles. \rev{To} determine \rev{which rectangle, and hence} the optimal triple $(R_{11}, R_{21}, \tau_1)$ \rev{in \eqref{prob_sec_dec_out}, we will first determine the optimal triple} for each decoding order and then select the triple that generates the lower energy solution. The optimization problem for $(R_{11}, R_{21}, \tau_1)$ has a similar structure in each case, and in the case of $\tilde{\mathcal{R}}_1$ it is
\begin{subequations}
\label{prob_sec_dec_out_Rs}
\begin{align} 
\label{prob_sec_dec_out_Rs_a}
\min_{\substack {R_{11}, R_{21}, \tau_1}} \quad &   \tau_1 \bigl(\tfrac{2^{R_{11}}-1}{\alpha_1} +  \tfrac{2^{R_{11}} (2^{R_{21}}-1)}{\alpha_2} \bigr)  + \bigl(\tfrac{\tilde{L}_1-\tau_1}{\alpha_1}\bigr) \bigl(2^{\frac{B_1-\tau_1R_{11}}{\tilde{L}_1-\tau_1}} -1 \bigr) \\
& \quad \quad + \bigl(\tfrac{\tilde{L}_2-\tilde{L}_1}{\alpha_2}\bigr) \bigl(2^{\frac{B_2-\tau_1R_{21}}{\tilde{L}_2-\tilde{L}_1}} -1 \bigr) \\ 
\label{prob_sec_dec_out_Rs_b}
\text{s.t.}   \quad  \quad & \eqref{prob_slots_main_b1}, \eqref{prob_slots_main_b},\\
\label{prob_sec_dec_out_Rs_c}
\quad  & \tfrac{2^{R_{11}}-1}{\alpha_1} \le \bar{P}_1,\\
\label{prob_sec_dec_out_Rs_d}
\quad & \tfrac{2^{R_{11}} (2^{R_{21}}-1)}{\alpha_2} \le \bar{P}_2, \\
\label{prob_sec_dec_out_Rs_e}
& \rev{\eqref{prob_sec_dec_out_c} \text{ and } \eqref{prob_sec_dec_out_d}}.
\end{align}
\end{subequations}
\rev{This problem is not known to be convex, but a variety of solution strategies can be developed, including algorithms based on augmented Lagrangian techniques \cite[Chapter~17]{nocedal2006numerical}, sequential quadratic programming \cite[Chapter~18]{nocedal2006numerical}, \cite{lawrence2001computationally}, and successive convex aproximation \cite{razaviyayn2013unified, scutari2017parallel}. In this manuscript we will employ a simpler strategy that is based on the observation that when} any two of the variables in \eqref{prob_sec_dec_out_Rs} are fixed, the objective function is convex in the remaining variable and the constraints can be written so that they are linear in that variable. \rev{That} observation \rev{ suggests the adoption of} a coordinate descent method for solving \eqref{prob_sec_dec_out_Rs}. \rev{The convexity of \eqref{prob_sec_dec_out_Rs} in each coordinate (alone), and other properties of the objective and the constraints, enable us to show that the coordinate descent method} is guaranteed to converge to a stationary solution of \eqref{prob_sec_dec_out_Rs} \cite[Theorem 1]{hong2016unified}. In \rev{all} our numerical experiments, \rev{only some of which are shown in Section~\ref{simulation}, our coordinate descent method} converged to the globally optimal solution.
%%%%%%%%%%%%%%%%%%%%%%%%%%%%%%%%%%%%%%%%%%%%%
\subsection{Independent Decoding}
\label{Ind_Dec}
In the independent decoding scheme, the received signals of both users are decoded independently, with the interference of the other user being considered as noise. The achievable rate region in this scheme is depicted in Fig.~\ref{feas_region_all_ind}. As in the case of sequential decoding (wts), the optimal solution may activate all three time slots. Therefore, the minimum energy consumption optimization problem in the independent decoding case \rev{can be written as}
\begin{subequations}
\label{prob_ind_dec}
\begin{align} 
\label{prob_ind_dec_a}
\min_{\substack {P_{11}, P_{12}, P_{21}, P_{23}, \\ R_{11}, R_{12}, R_{21}, R_{23}, \tau_1}} \quad &  \tau_1 (P_{11} + P_{21})+ \bigl(\tfrac{B_1-\tau_1R_{11}}{R_{12}}\bigr) P_{12} + \bigl(\tfrac{B_2-\tau_1R_{21}}{R_{23}}\bigr) P_{23} \\ 
\label{prob_ind_dec_b}
\text{s.t.}  \enspace \quad \quad  \quad & \eqref{prob_slots_main_b1}-\eqref{prob_slots_main_g},\\
\label{prob_ind_dec_c}
\quad  & 0 \le R_{11} \le \log_2 (1+ \tfrac{\alpha_1 P_{11}}{1+\alpha_2 P_{21}}),\\
\label{prob_ind_dec_d}
\quad & 0 \le R_{21} \le \log_2 (1+ \tfrac{\alpha_2 P_{21}}{1+\alpha_1 P_{11}}), 
\end{align}
\end{subequations}
where \eqref{prob_ind_dec_c} and \eqref{prob_ind_dec_d} \rev{describe} the achievable rate \rev{region} in the first time slot. In terms of the variables of the second and third time slots, the problem in \eqref{prob_ind_dec} is similar to that in \eqref{prob_sec_dec} for the sequential decoding (wts) case. Accordingly, the optimal transmission rates and transmission powers of these slots can be obtained in terms of the rates and powers of the first time slot (cf.\ Section~\ref{Sec_Dec}), and the problem can be simplified to
\begin{subequations}
\label{prob_ind_dec_out}
\begin{align} \notag
\label{prob_ind_dec_out_a}
\min_{\substack {P_{11}, P_{21}, \\ R_{11}, R_{21}, \tau_1}} \quad & \tau_1 (P_{11} + P_{21})+\bigl(\tfrac{\tilde{L}_1-\tau_1}{\alpha_1}\bigr) \bigl(2^{\tfrac{B_1-\tau_1R_{11}}{\tilde{L}_1-\tau_1}} -1 \bigr)  \\
& \quad + \bigl(\tfrac{\tilde{L}_2-\tilde{L}_1}{\alpha_2}\bigr) \bigl(2^{\tfrac{B_2-\tau_1R_{21}}{\tilde{L}_2-\tilde{L}_1}} -1\bigr) \\ 
\label{prob_ind_dec_out_b}
\text{s.t.}   \quad \quad  & \eqref{prob_slots_main_b1}-\eqref{prob_slots_main_b_new},\eqref{prob_slots_main_e}, \rev{\eqref{prob_sec_dec_out_c}, \eqref{prob_sec_dec_out_d}, \eqref{prob_ind_dec_c}, \eqref{prob_ind_dec_d}}.
\end{align}
\end{subequations}
By decomposing the problem in \eqref{prob_ind_dec_out}, the powers of the first time slot can be obtained in terms of the rates of that slot \cite{Mahsaasilomar2017}, 
\begin{align}
\label{Ind_Dec_powers_rates}
P_{11}&=\tfrac{2^{R_{21}}(2^{R_{11}}-1)}{\alpha_1 (2^{R_{11}}+2^{R_{21}}-2^{R_{11}+R_{21}})},\\
P_{21} &=\tfrac{2^{R_{11}}(2^{R_{21}}-1)}{\alpha_2 (2^{R_{11}}+2^{R_{21}}-2^{R_{11}+R_{21}})}.
\end{align}
Accordingly, the \rev{remaining} optimization problem \rev{reduces to}
\begin{subequations}
\label{prob_ind_dec_out_Rs}
\begin{align}
\label{prob_ind_dec_out_Rs_a}
\min_{\substack {R_{11}, R_{21}, \tau_1}} \quad &  f(R_{11}, R_{21}, \tau_1) \\ 
\label{prob_ind_dec_out_Rs_b}
\text{s.t.}  \quad \quad  & \eqref{prob_slots_main_b1}-\eqref{prob_slots_main_b_new}, \rev{\eqref{prob_sec_dec_out_c}, \eqref{prob_sec_dec_out_d},}\\
\label{prob_ind_dec_out_Rs_e}
\quad  & \tfrac{2^{R_{21}}(2^{R_{11}}-1)}{\alpha_1 (2^{R_{11}}+2^{R_{21}}-2^{R_{11}+R_{21}})} \le \bar{P}_1,\\
\label{prob_ind_dec_out_Rs_f}
\quad & \tfrac{2^{R_{11}}(2^{R_{21}}-1)} { \alpha_2 (2^{R_{11}}+2^{R_{21}}-2^{R_{11}+R_{21}})} \le \bar{P}_2,
\end{align}
\end{subequations}
where $f(R_{11}, R_{21}, \tau_1)$ is shown  in \eqref{f_function} at the top of this page.

Similar to the problem in \eqref{prob_sec_dec_out_Rs}, we can show that the objective function in \eqref{prob_ind_dec_out_Rs} is convex in each of the variables when the other two are given, and the constraints can be written so that they are linear in the corresponding variable. Hence, we will adopt a coordinate descent approach to solving \eqref{prob_ind_dec_out_Rs}. It can be shown that that approach is guaranteed to converge to a stationary solution, and in \rev{all} our numerical experiments it converged to the globally optimal solution.  
\begin{figure*}[tp]
    \begin{equation}
\begin{aligned}
\label{f_function}
f(R_{11}, R_{21}, \tau_1) = & \tau_1 \bigl(\tfrac{2^{R_{21}}(2^{R_{11}}-1)}{\alpha_1 (2^{R_{11}}+2^{R_{21}}-2^{R_{11}+R_{21}})}+ \tfrac{2^{R_{11}}(2^{R_{21}}-1)} { \alpha_2 (2^{R_{11}}+2^{R_{21}}-2^{R_{11}+R_{21}})} \bigr) \\
 & \hspace{20pt} +  \bigl(\tfrac{\tilde{L}_1-\tau_1}{\alpha_1}\bigr) \bigl(2^{\frac{B_1-\tau_1R_{11}}{\tilde{L}_1-\tau_1}} -1 \bigr) + \bigl(\tfrac{\tilde{L}_2-\tilde{L}_1}{\alpha_2}\bigr) \bigl(2^{\frac{B_2-\tau_1R_{21}}{\tilde{L}_2-\tilde{L}_1}} -1 \bigr).
\end{aligned}
    \end{equation}
    \noindent\rule{18.2cm}{0.4pt}
    \end{figure*}
    
%%%%%%%%%%%%%%	Multiple Access Choice	%%%%%%%%%%%%%%%
\rev{\section{On the Choice of the Multiple Access Scheme for Complete Computation Offloading}
\label{insight}
The closed-form expressions that we have obtained for the optimal communication resource allocation in the case of complete computation offloading with the full multiple access scheme and TDMA, and the quasi-closed-form expressions that we have obtained for the cases of independent decoding and sequential decoding (without time sharing), enable us to gain insight into the impact of the choice of the multiple access scheme. 

The first result is that whenever TDMA produces a solution that is feasible for the offloading problem, that solution is also an optimal solution for the independent decoding case. (This is consistent with the fact that in the TDMA case the decoders work independently.) As one might expect, there are scenarios in which a three-time-slot independent decoding scheme provides a feasible solution to the offloading problem, but TDMA does not, and we will provide some examples of such scenarios in Section~\ref{simulation_full}. However, whenever TDMA is feasible it is optimal for the independent decoding case, and in certain circumstances it will have a more straightforward implementation. We formally state this property in the following proposition.

\begin{prop}
\label{prob_1}
If the problem in \eqref{TDMA_final_prob} is feasible, let $P_k^{\star}$ and $R_k^{\star}$ denote the solution to the problem in \eqref{TDMA_prob} that is derived in Section~\ref{TDMA}. In that case, an optimal solution to the problem in \eqref{prob_ind_dec} is $P_{11}=P_{21}=0, P_{12} = P_1^{\star}, P_{23} = P_2^{\star}, R_{11}=R_{21}=0, R_{12} = R_1^{\star}, R_{23} = R_2^{\star}$ and $\tau_1 = 0$.
\end{prop}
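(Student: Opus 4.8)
\textit{Proof proposal.} The plan is to establish two facts and combine them: (i) the stated point is feasible for \eqref{prob_ind_dec}, and (ii) the optimal value of \eqref{prob_ind_dec} is no smaller than the optimal value of the TDMA problem \eqref{TDMA_prob}. Since the stated point has objective value equal to the optimal value of \eqref{TDMA_prob}, (i) and (ii) together imply that it is an optimal solution of \eqref{prob_ind_dec}.

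\emph{Feasibility.} First I would substitute $\tau_1=0$ into \eqref{prob_ind_dec}: this forces $\gamma_{11}=\gamma_{21}=0$ through \eqref{prob_slots_main_b_new}, collapses the first time slot, and lets \eqref{prob_ind_dec_c}--\eqref{prob_ind_dec_d} hold with $R_{11}=R_{21}=0$. With the identification $R_{12}\!\leftrightarrow\! R_1$, $R_{23}\!\leftrightarrow\! R_2$, $P_{12}\!\leftrightarrow\! P_1$, $P_{23}\!\leftrightarrow\! P_2$, the surviving constraints \eqref{prob_slots_main_c}, \eqref{prob_slots_main_d}, \eqref{prob_slots_main_e}--\eqref{prob_slots_main_g} become exactly \eqref{TDMA_prob_b}--\eqref{TDMA_prob_e}, which $(P_k^{\star},R_k^{\star})$ satisfy by construction (Section~\ref{TDMA}), and the objective \eqref{prob_ind_dec_a} evaluates to $\tfrac{B_1P_1^{\star}}{R_1^{\star}}+\tfrac{B_2P_2^{\star}}{R_2^{\star}}$, the optimal value of \eqref{TDMA_prob}. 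This step is immediate.

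\emph{Lower bound.} For any feasible point of \eqref{prob_ind_dec}, write $\mathcal{E}_\alpha(b,t)=t(2^{b/t}-1)/\alpha$ for the minimum energy needed to convey $b$ bits over $t$ clean channel uses of gain $\alpha$; this function is jointly convex and positively homogeneous (hence subadditive, $\mathcal{E}_\alpha(b+b',t+t')\le\mathcal{E}_\alpha(b,t)+\mathcal{E}_\alpha(b',t')$) and strictly decreasing in $t$. Making \eqref{prob_slots_main_f}--\eqref{prob_slots_main_g} tight shows the second- and third-slot energies are at least $\mathcal{E}_{\alpha_1}(B_1-\tau_1R_{11},\tau_2)$ and $\mathcal{E}_{\alpha_2}(B_2-\tau_1R_{21},\tau_3)$, and \eqref{prob_ind_dec_c}--\eqref{prob_ind_dec_d} force the first-slot powers to dominate the pair in \eqref{Ind_Dec_powers_rates}, so the first-slot energy is at least $\tau_1(P_{11}^{\mathrm{ID}}+P_{21}^{\mathrm{ID}})$. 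The key ingredient is a \emph{slot-splitting inequality}: there exist $s_1,s_2\ge 0$ with $s_1+s_2=\tau_1$, $\tau_1R_{11}/s_1\le\log_2(1+\alpha_1\bar P_1)$, $\tau_1R_{21}/s_2\le\log_2(1+\alpha_2\bar P_2)$, for which $\tau_1(P_{11}^{\mathrm{ID}}+P_{21}^{\mathrm{ID}})\ge\mathcal{E}_{\alpha_1}(\tau_1R_{11},s_1)+\mathcal{E}_{\alpha_2}(\tau_1R_{21},s_2)$. Granting this, subadditivity merges each user's share of the shared slot with its exclusive slot, giving total energy at least $\mathcal{E}_{\alpha_1}(B_1,s_1+\tau_2)+\mathcal{E}_{\alpha_2}(B_2,s_2+\tau_3)$; with $t_1:=s_1+\tau_2$, the constraints $\tau_1+\tau_2\le\tilde L_1$, $\tau_1+\tau_2+\tau_3\le\tilde L_2$ and $0\le s_1\le\tau_1$ yield $t_1\le\tilde L_1$ and $s_2+\tau_3\le\tilde L_2-t_1$, so monotonicity gives a lower bound of $\mathcal{E}_{\alpha_1}(B_1,t_1)+\mathcal{E}_{\alpha_2}(B_2,\tilde L_2-t_1)$. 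The rate/power conditions on $s_1,s_2$, together with $R_{12},R_{23}\le\log_2(1+\alpha_k\bar P_k)$ (from \eqref{prob_slots_main_f}--\eqref{prob_slots_main_g}), force $B_1/t_1\le\log_2(1+\alpha_1\bar P_1)$ and $B_2/(\tilde L_2-t_1)\le\log_2(1+\alpha_2\bar P_2)$, so $R_1=B_1/t_1$, $R_2=B_2/(\tilde L_2-t_1)$ with the associated minimal powers is feasible for \eqref{TDMA_prob}; hence the original point has energy at least the optimal value of \eqref{TDMA_prob}. Taking the infimum over feasible points of \eqref{prob_ind_dec} and combining with the feasibility step finishes the argument.

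\emph{Main obstacle.} Everything above is routine except the slot-splitting inequality, and that is precisely where the hypothesis ``\eqref{TDMA_final_prob} feasible'' must be used. The obvious choice $s_1=\tau_1R_{11}/\log_2(1+\alpha_1P_{11})$, $s_2=\tau_1R_{21}/\log_2(1+\alpha_2P_{21})$ (which makes the sub-slot powers equal $P_{11},P_{21}$) need not satisfy $s_1+s_2\le\tau_1$, so one must exploit the freedom to \emph{lower} the sub-slot powers, and must separately handle the regime $R_{11}/\log_2(1+\alpha_1\bar P_1)+R_{21}/\log_2(1+\alpha_2\bar P_2)>1$, in which no time split of the shared slot alone respects the power budgets; there, the latency slack implied by feasibility of \eqref{TDMA_final_prob} is what lets the merged single-user transmissions spread over enough channel uses for the argument to land inside the feasible set of \eqref{TDMA_prob}. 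I expect the bulk of the proof to be a case analysis establishing this inequality, using the identity $1-(2^{R_{11}}-1)(2^{R_{21}}-1)=2^{R_{11}}+2^{R_{21}}-2^{R_{11}+R_{21}}$ that underlies \eqref{Ind_Dec_powers_rates} and the concavity of $\log_2$.
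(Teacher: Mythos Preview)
Your plan is essentially the paper's: embed the TDMA optimum as a feasible point of \eqref{prob_ind_dec}, then lower-bound any ID-feasible energy by replacing the shared slot with an orthogonal time split and merging each user's sub-slot with its exclusive slot via the subadditivity of $\mathcal E_\alpha$. The core step---your slot-splitting inequality---is exactly what Appendix~\ref{TDMA_opt_IndDec} proves. The difference is packaging: the paper establishes the inequality for \emph{every} split $\lambda=s_1/\tau_1\in[0,1]$, with no per-sub-slot power restriction, by substituting \eqref{Ind_Dec_powers_rates} and showing via a single derivative argument that the resulting one-variable function of $\lambda$ touches the chord $\lambda P_1+(1-\lambda)P_2$ only at the endpoints; no case analysis is needed. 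Because the inequality holds unconstrained for all $\lambda$, the conditions $\tau_1R_{k1}/s_k\le\log_2(1+\alpha_k\bar P_k)$ you impose on $s_1,s_2$ are unnecessary, and the ``awkward regime'' you flag as the main obstacle disappears: you may choose $s_1$ \emph{after} invoking the inequality so that the merged duration $t_1=s_1+\tau_2$ lands in the TDMA-feasible window $[\,B_1/\log_2(1+\alpha_1\bar P_1),\;\tilde L_2-B_2/\log_2(1+\alpha_2\bar P_2)\,]$. That window is nonempty precisely when \eqref{TDMA_final_prob} is feasible, and it intersects $[\tau_2,\tau_1+\tau_2]$ because the ID constraints already force $R_{11},R_{12}\le\log_2(1+\alpha_1\bar P_1)$ and $R_{21},R_{23}\le\log_2(1+\alpha_2\bar P_2)$. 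So the hypothesis enters only at this final feasibility check, not inside the key inequality.
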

\begin{proof}
See Appendix~\ref{TDMA_opt_IndDec}.
\end{proof}

Our second result shows that when the channel gains of both users are equal and the power budgets are above an explicit threshold then the optimal resource allocation for the TDMA scheme reduces the energy consumption to the same level as the optimal resource allocation for the full multiple access scheme. In other words, when the channel gains are equal, simplifying the implementation by constraining the multiple access scheme to be TDMA does not result in loss of optimality (for sufficiently large power budgets). Having said that, as we will show in Section~\ref{simulation}, when the channel gains are significantly different, exploiting the full capabilities of the multiple access channel enables substantial reduction in the energy required to offload the tasks. The optimality of the TDMA scheme is formalized in the following proposition. 

\begin{prop}
\label{prob_2}
Let $P_{ki}^{\star}$ denote an optimal solution for $P_{ki}$ in the problem in \eqref{prob_slots_general_K} when $|h_1|^2 = |h_2|^2$. If $P_{11}^{\star}+P_{21}^{\star} \le \min \{ \bar{P}_1, \bar{P}_2 \}$, then TDMA can obtain the optimal energy consumption of the full multiple access scheme.
\end{prop}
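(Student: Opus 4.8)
The plan is to show that the optimal two-slot solution for the full multiple access scheme, when $\alpha_1 = \alpha_2 =: \alpha$, can be ``unrolled'' into a valid TDMA solution with the same total energy. First I would recall from the reduction leading to \eqref{Two_slots_general} that an optimal full-MAC solution uses only two slots: a multiple access slot of length $\tilde{L}_1$ in which both users transmit (user~1 sends all $B_1$ bits, so $R_{11} = B_1/\tilde{L}_1$), and a slot of length $\tilde{L}_2 - \tilde{L}_1$ in which user~2 transmits alone at rate $R_{23}$ given by \eqref{Opt_R_23} with power $P_{23} = (2^{R_{23}}-1)/\alpha$. In that first slot, at optimality the joint constraint \eqref{Two_slots_genera_d} is active, so $\alpha(P_{11}^{\star} + P_{21}^{\star}) + 1 = 2^{B_1/\tilde{L}_1 + R_{21}^{\star}}$, and the first-slot energy is $\tilde{L}_1(P_{11}^{\star} + P_{21}^{\star})$.

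The key step is the following substitution. I would replace the multiple access slot of length $\tilde{L}_1$ by two back-to-back TDMA sub-slots: one of length $t_1 := \tilde{L}_1 \, \frac{B_1/\tilde{L}_1}{B_1/\tilde{L}_1 + R_{21}^{\star}}$ in which user~1 alone sends $B_1$ bits at rate $B_1/t_1 = B_1/\tilde{L}_1 + R_{21}^{\star}$, and one of length $t_2 := \tilde{L}_1 - t_1$ in which user~2 alone sends $\tilde{L}_1 R_{21}^{\star}$ bits at the same per-channel-use rate $B_1/\tilde{L}_1 + R_{21}^{\star}$. Because $\alpha_1 = \alpha_2 = \alpha$, in both sub-slots the required power to support rate $r := B_1/\tilde{L}_1 + R_{21}^{\star}$ is the same value $P^\dagger := (2^{r}-1)/\alpha$, and the combined energy of the two sub-slots is $(t_1 + t_2) P^\dagger = \tilde{L}_1 (2^{r}-1)/\alpha = \tilde{L}_1(P_{11}^{\star} + P_{21}^{\star})$, exactly matching the first-slot energy of the full-MAC solution (here I use the active joint constraint). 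The remaining ``user~2 alone'' slot of length $\tilde{L}_2 - \tilde{L}_1$ is carried over unchanged. Concatenating user~1's sub-slot with the carried-over material, this is precisely a TDMA schedule with $R_1 = r$ and an appropriate $R_2$, and its total energy equals that of the optimal full-MAC solution; since TDMA is a restriction of the full-MAC problem, it is in fact optimal for both.

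The one genuine obstacle is feasibility of the power constraints: the constructed TDMA solution uses transmit power $P^\dagger = P_{11}^{\star} + P_{21}^{\star}$ for user~1 in its sub-slot, and the same $P^\dagger$ for user~2 in its sub-slot, whereas the original full-MAC solution only required $P_{11}^{\star} \le \bar{P}_1$ and $P_{21}^{\star} \le \bar{P}_2$ individually. This is exactly where the hypothesis $P_{11}^{\star} + P_{21}^{\star} \le \min\{\bar{P}_1, \bar{P}_2\}$ is used: it guarantees $P^\dagger \le \bar{P}_1$ and $P^\dagger \le \bar{P}_2$, so the new schedule respects \eqref{prob_slots_main_e}. (One should also check that user~2's power in the final carried-over slot, $P_{23} = (2^{R_{23}}-1)/\alpha$, still satisfies $P_{23} \le \bar{P}_2$ — but this is inherited directly from feasibility of the full-MAC solution, since that slot is unchanged; and the latency constraints \eqref{prob_slots_main_c}, \eqref{prob_slots_main_d} hold with equality in the total durations by construction.) I would close by noting that the converse inequality — full MAC cannot do worse than TDMA — is immediate because the TDMA rate region is contained in the capacity region, so the two optimal energies coincide.
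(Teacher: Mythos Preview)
Your proof is correct and takes essentially the same approach as the paper: both replace the multiple access slot of length $\tilde{L}_1$ by two TDMA sub-slots in which the active user transmits at the common rate $R_{11}+R_{21}^{\star}$, exploit the active sum-rate constraint to show the total energy equals $\tilde{L}_1(P_{11}^{\star}+P_{21}^{\star})$, carry over the final user-2 slot unchanged, and invoke the hypothesis to ensure the per-user power $(2^{R_{11}+R_{21}^{\star}}-1)/\alpha$ is feasible. Your construction is slightly more direct than the paper's, which first sets up a general matching system and then uses strict convexity of the exponential to conclude that the two sub-slot rates must coincide; you simply exhibit that equal-rate solution from the outset.
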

\begin{proof}
See Appendix~\ref{app_equal_channel_gain}.
\end{proof}

Since in a time-slotted system the optimal TDMA scheme is an optimal independent decoding scheme whenever it is feasible (Proposition~\ref{prob_1}), a consequence of Proposition~\ref{prob_2} is that, for power budgets above the threshold, independent decoding is also optimal when the channel gains are the same. Since the achievable rate region of the sequential decoding (wts) scheme is no smaller than that of independent decoding, Proposition~\ref{prob_2} also implies that sequential decoding (wts) is optimal when the channel gains are equal. }

%%%%%%%%%%%%%%	Partial Offloading 	%%%%%%%%%%%%%%%
\section{Partial Computation Offloading}
\label{partial_offloading}
Up until this point, we have considered indivisible computational tasks that are either completely offloaded or executed locally. For divisible computational tasks, we have the opportunity to take advantage of the implicit parallelism of the mobile station and the access point by offloading a portion of the computational task to the access point, with the remainder being executed locally.

Our first observation in the development of resource allocation algorithms for the partial offloading case is that the transmission energy and the communication latency associated with offloading a portion of the task depend on its description length, whereas the computational energy and latency associated with executing the remaining portion locally depend on the number of operations required. In this section we will focus on the class of ``data-partitionable'' tasks \cite{wang2016mobile}. Such tasks involve a relatively simple-to-describe action being applied, independently, to multiple blocks of data. As such, the number of operations required to complete a fraction of the task can be modeled as being a function of the description length \cite{wang2016mobile, munoz2015optimization, zhang2013energy}. For simplicity we will consider the limiting case in which the tasks can be partitioned finely enough that the partition can be modeled by a continuous variable. 

In the generic scenario of partial offloading, both users will be offloading a portion of their tasks and the time slotted communication model in Fig.~\ref{time_slots} applies. (Note that without loss of generality we have ordered the users so that $L_1 \le L_2$.) Based on the outcomes of the indivisible case, we will focus on the full multiple access and TDMA schemes, and hence, we need only consider two of the time slots ($\tau_2 = 0$). As in our earlier model, $\gamma_{ki}$ denotes the fraction of its task description that user $k$ offloads in time slot $i$, but in the partial offloading case $(\gamma_{11}+\gamma_{12})$ and $(\gamma_{21}+\gamma_{23})$ lie in the interval $[0, 1]$ rather than at one of the end points. The total energy consumption of a user in the partial offloading case is the summation of the energy consumed in transmitting the offloaded portion to the access point, $E_{\text{off}_k}$, and the energy consumed in the local execution of the remaining fraction, $E_{\text{loc}_k}$. Moreover, the latency constraint of each user must be applied to both the execution time of the local component, $t_{{\text{loc}}_k}$, and the total time that it takes to transmit the offloaded component, execute it at the access point, and send the results back to the user.

For a given choice of offloading fractions, $E_{\text{off}_k}$ takes the same form as in the indivisible case, and the time taken to upload, compute, and return the results of the offloaded portion has the same three components as in \eqref{latency_const}. The uploading time takes a  form $t_{{\text{UL}}_k} = T_s \sum_i \tfrac{\gamma_{ki} B_k }{R_{ki}}$, and we will assume that the time taken to return the results to the user, $t_{{\text{DL}}_k}$, is independent of the fraction of the task that is offloaded. For the class of problems that we are considering, the number of operations to be performed depends on the description length, and hence the execution time at the access point (which has plentiful computational resources) can be modeled as $t_{\text{exe}_k} = \delta_c \sum_i \gamma_{ki} B_k$, where $\delta_c$ denotes the constant processing time of one bit. 

The energy consumed in computing a portion of the task locally, and the time incurred in doing so, are dependent on the computational architecture at the user. Hence, in our initial formulation we will represent them generically using a function $\mathcal{F}_k(\cdot)$ of the number of bits in the retained description, and $t_{{\text{loc}}_k}$, respectively. In this manuscript we will solve the optimal offloading problems for the dynamic voltage scaling architecture \cite{zhang2013energy}, which provides energy-optimal local computation; see Section~\ref{energy_local_opt}. A solution to a related problem for a local computational model that resembles the one we have used for the access point was provided in \cite{salmaniapcc2017}.

With the above computation and communication models in place, the problem of minimizing the total energy consumption of a system with partial offloading can be formulated as 
\begin{subequations}
\label{prob_slots_partial_main}
\begin{align} \notag
\min_{\substack {P_{11}, P_{21}, P_{23}, \\ R_{11}, R_{21}, R_{23}, \\ \gamma_{11}, \gamma_{21}, \gamma_{23}}} \quad & \displaystyle{ \bigl(\tfrac{\gamma_{11} B_1}{R_{11}}\bigr) P_{11}} + \displaystyle{ \bigl(\tfrac{\gamma_{21} B_2}{R_{21}}\bigr) P_{21}}+\displaystyle{\bigl(\tfrac{\gamma_{23} B_2}{R_{23}}\bigr) P_{23}} \\ 
&\quad \quad + \displaystyle{\mathcal{F}_1 \big( (1-\gamma_{11}) B_1\big)} + \displaystyle{\mathcal{F}_2 \big( (1-\gamma_{21} - \gamma_{23}) B_2\big)}\\ 
\label{prob_slots_partial_main_b} 
\text{s.t.}  \enspace \quad \quad \quad & 0 \le \gamma_{11} \le 1,\\
\label{prob_slots_partial_main_c}
\quad & 0 \le \gamma_{21} \le 1,\quad  0 \le \gamma_{23} \le 1,\\
\label{prob_slots_partial_main_d}
 \quad &0 \le \gamma_{21}+\gamma_{23} \le 1,\\
 \label{prob_slots_partial_main_e}
\quad  & T_s (\tfrac{\gamma_{11} B_1}{R_{11}}) + \delta_c \gamma_{11}B_1 \le \bar{L}_1,\\
 \label{prob_slots_partial_main_f}
\quad  & T_s (\tfrac{\gamma_{21}}{R_{21}}+\tfrac{\gamma_{23}}{R_{23}}) B_2+ \delta_c (\gamma_{21}+ \gamma_{23})B_2 \le \bar{L}_2,\\
\label{prob_slots_partial_main_g}
\quad  & t_{{\text{loc}}_k} \le  \bar{L}_k, \quad k=1,2, \\
\label{prob_slots_partial_main_h}
& 0 \le P_{k1}, P_{k2} \le \bar{P}_k, \quad k=1,2,\\
\label{prob_slots_partial_main_i}
\quad & 0 \le R_{23} \le \log_2 (1+ \alpha_2 P_{23}),\\
\label{prob_slots_partial_main_k}
\quad & \{R_{11}, R_{21}\} \in \mathcal{R},
\end{align}
\end{subequations}
where $\bar{L}_k = L_k - t_{\text{DL}_k}$. 

%%%%%%%%%%%%%%%%%%%%%%%%%%%%%%%%%%%%%%%%%%%%%
\subsection {Energy-Optimal Local Execution}
\label{energy_local_opt}
In this manuscript, we will consider the dynamic voltage scaling approach to local computation \cite{zhang2013energy, wang2016mobile}. This approach involves adjusting the CPU cycle frequency of the mobile devices so as to minimize the energy required to complete a task within a given deadline. Indeed, for the class of problems that we are considering, the minimum energy required for local processing of $\mu_k B_k$ bits  with a latency constraint of $L_k$ is \cite{zhang2013energy}
\begin{equation}
\label{loc_opt_energy}
E_{\text{loc}_k} = \tfrac{M_k (\mu_k B_k)^3}{{L}_k^2},
\end{equation}
where $M_k$ is a constant that depends on the chip architecture. This expression not only gives us the form of $\mathcal{F}_k(\cdot)$ in \eqref{prob_slots_partial_main}, it also ensures that the local component of the task is completed before the deadline. Therefore, we can remove the local computational latency constraints in \eqref{prob_slots_partial_main_g}.

%%%%%%%%%%%%%%%%%%%%%%%%%%%%%%%%%%%%%%%%%%%%
\subsection{Full Multiple Access Scheme}
\label{partial_mac}
Using the same insights as those used in the complete computation offloading case, we can show that the optimal solution for the problem in \eqref{prob_slots_partial_main} is obtained when the constraints in \eqref{prob_slots_partial_main_e} and \eqref{prob_slots_partial_main_f} hold with equality. Therefore, we can find closed-form expressions for the optimal solutions for $\gamma_{ki}$ in terms of the other parameters of the problem, 
\begin{subequations}
\label{gamm_equs}
\begin{align}
\label{gamm_equs_11}
&\gamma_{11} = \tfrac{\bar{L}_1 R_{11}}{B_1 (T_s+\delta_c R_{11})},\\
\label{gamm_equs_21}
&\gamma_{21} = \tfrac{\bar{L}_1 R_{21}}{B_2 (T_s+\delta_c R_{11})},\\
\label{gamm_equs_23}
&\gamma_{23} = \tfrac{R_{23}}{B_2 (T_s+\delta_c R_{23})} \big(\bar{L}_2 - \tfrac{T_s+\delta_c R_{21}}{T_s + \delta_c R_{11}} \bar{L}_1 \big),
\end{align}
\end{subequations}
where \eqref{gamm_equs_21} results from the fact that $\tau_1= \tfrac{\gamma_{11} B_1}{R_{11}} = \tfrac{\gamma_{21} B_2}{R_{21}}$.

By substituting these closed-form expressions for $\gamma_{ki}$ into \eqref{prob_slots_partial_main}, the remaining optimization problem can be written as
\begin{subequations}
\label{prob_slots_partial_no_gamma}
\begin{align} \notag
\label{prob_slots_partial_no_gamma_a} 
\min_{\substack {P_{11}, P_{21}, P_{23}, \\ R_{11}, R_{21}, R_{23}}} \quad &    \displaystyle{ \tfrac{ \bar{L}_1 }{T_s+\delta_c R_{11}} } (P_{11} + P_{21})+\tfrac{1}{T_s+\delta_c R_{23}}\displaystyle{\big(\bar{L}_2 - \tfrac{T_s+\delta_c R_{21}}{T_s + \delta_c R_{11}} \bar{L}_1 \big) P_{23}} \\ \notag
&~ +\tfrac{M_1}{{L}_1^2}  \big( B_1-\tfrac{\bar{L}_1 R_{11}}{T_s+\delta_c R_{11}} \big)^3\\ 
& ~~ +\tfrac{M_2}{{L}_2^2} \Big( B_2- \tfrac{\bar{L}_2 R_{23}(T_s+\delta_c R_{11}) + \bar{L}_1 T_s (R_{21}-R_{23})}{(T_s+\delta_c R_{11})(T_s+\delta_c R_{23})} \Big)^3\\ 
\label{prob_slots_partial_no_gamma_b} 
\text{s.t.}  \enspace \quad \quad \quad & \eqref{prob_slots_partial_main_b}-\eqref{prob_slots_partial_main_d}, \eqref{prob_slots_partial_main_h}-\eqref{prob_slots_partial_main_i},\\
\label{prob_slots_partial_no_gamma_c}
\quad & 0 \le R_{k1} \le \log_2 (1+ \alpha_k P_{k1}), \quad k=1, 2,\\
\label{prob_slots_partial_no_gamma_d}
\quad & R_{11}+ R_{21} \le \log_2 (1+ \alpha_1 P_{11}+ \alpha_2 P_{21}).
\end{align}
\end{subequations}

In the next step toward solving the problem we obtain closed-form expressions for the transmission powers by decomposing the problem in \eqref{prob_slots_partial_no_gamma} as
\begin{eqnarray}
\label{decomposed_prob_slots_partial_MAC}
&\displaystyle{\min_{ R_{11}, R_{21}, R_{23}}}  \hspace{65pt} \displaystyle{\min_{P_{11}, P_{21}, P_{23}}}    &\eqref{prob_slots_partial_no_gamma_a} \\ 
&\quad \text{s.t.}   \hspace{5pt} \eqref{prob_slots_partial_main_b}-\eqref{prob_slots_partial_main_d},  \hspace{25pt} \text{s.t.} \hspace{5pt}   &\eqref{prob_slots_partial_main_h}, \eqref{prob_slots_partial_main_i}, \eqref{prob_slots_partial_no_gamma_c}-\eqref{prob_slots_partial_no_gamma_d}. \nonumber
\end{eqnarray} 
Given a set of transmission rates $(R_{11}, R_{21}, R_{23})$, the optimal solution for $P_{23}$ is the minimum feasible value, i.e., $P_{23} = \tfrac{2^{R_{23}}-1}{\alpha_2}$ and closed-form expressions for the transmission powers of the users in the first time slot can be obtained by employing the technique that was explained in Section~\ref{general_case}. In this section, we solve the problem for the first subcase of the scenario $\tfrac{\alpha_1}{\alpha_2} \le 1$, which forms an analogy with the first subcase of the complete computation offloading scenario (see Section~\ref{case_A_I_full}). The problem in the other cases can be solved by following similar steps. 

Given the closed-form solutions for all the fractions  $\gamma_{ki}$ and all the transmission powers $P_{ki}$ in terms of the transmission rates, the problem of minimizing the total energy consumption of the users can be reduced to the following three-variable optimization problem 
\begin{subequations}
\label{prob_slots_partial_powers}
\begin{align} \notag
\label{prob_slots_partial_powers_a} 
\min_{R_{11}, R_{21}, R_{23}} \quad &    \displaystyle{ \tfrac{ \bar{L}_1 }{T_s+\delta_c R_{11}} } \bigl(\tfrac{2^{R_{11}}-1}{\alpha_1}+ \tfrac{2^{R_{11}} (2^{R_{21}}-1)}{\alpha_2}\bigr)\\ \notag
&~  +\tfrac{1}{T_s+\delta_c R_{23}}\displaystyle{\big(\bar{L}_2 - \tfrac{T_s+\delta_c R_{21}}{T_s + \delta_c R_{11}} \bar{L}_1 \big) \tfrac{2^{R_{23}}-1}{\alpha_2}} \\ \notag
&~~+\tfrac{M_1}{{L}_1^2}  \big( B_1-\tfrac{\bar{L}_1 R_{11}}{T_s+\delta_c R_{11}} \big)^3\\ 
& ~~~ +\tfrac{M_2}{{L}_2^2} \Big( B_2- \tfrac{\bar{L}_2 R_{23}(T_s+\delta_c R_{11}) + \bar{L}_1 T_s (R_{21}-R_{23})}{(T_s+\delta_c R_{11})(T_s+\delta_c R_{23})} \Big)^3\\ \label{prob_slots_partial_powers_b} 
\text{s.t.}  \enspace \quad \quad \quad & \eqref{prob_slots_partial_main_b}-\eqref{prob_slots_partial_main_d},\\
\label{prob_slots_partial_powers_c1}
\quad & 0 \le R_{23} \le \log_2 (1+ \alpha_2 \bar{P}_{2}),\\
\label{prob_slots_partial_powers_c}
\quad & 0 \le R_{11} \le \log_2 (1+ \alpha_1 \bar{P}_{1}),\\
\label{prob_slots_partial_powers_c2}
\quad & 0 \le R_{21} \le \log_2 (1+ \tfrac{\alpha_2 \bar{P}_{2}}{2^{R_{11}}}).
\end{align}
\end{subequations}
It is shown in Appendix~\ref{App_F_Quasi_convexity} that the objective function of the problem in \eqref{prob_slots_partial_powers} is a quasi-convex function of each of the variables when the other two variables are given. Therefore, the coordinate descent algorithm can be applied to find a stationary solution for the transmission rates \cite[Theorem 1]{hong2016unified}. In all our numerical experiments that approach converged to the globally optimal solution.

%%%%%%%%%%%%%%%%%%%%%%%%%%%%%%%%%%%%%%%%%%%%%
\subsection{Time Division Multiple Access Scheme}
\label{partial_TDMA}
For a two-user offloading system that employs the TDMA scheme, each user operates in its own time slot. By simplifying the notation in a natural way,  the total energy minimization problem can be written as
\begin{subequations}
\label{prob_slots_partial_TDMA}
\begin{align}
\label{prob_slots_partial_TDMA_b}
\min_{\substack {P_{1}, P_{2}, R_{1}, R_{2}, \\ \gamma_{1}, \gamma_{2}}} \quad &  \displaystyle{ \bigl(\tfrac{\gamma_{1} B_1}{R_{1}}\bigr) P_{1}} + \displaystyle{ \bigl(\tfrac{\gamma_{2} B_2}{R_{2}}\bigr) P_{2}}   +\tfrac{M_1}{{L}_1^2}\big( (1-\gamma_{1}) B_1\big)^3+\tfrac{M_2}{{L}_2^2} \big( (1-\gamma_{2} ) B_2 \big)^3\\ 
\label{prob_slots_partial_TDMA_b} 
\text{s.t.} \quad \quad & 0 \le \gamma_{k} \le 1, \quad k=1, 2,\\
 \label{prob_slots_partial_TDMA_c}
\quad  & T_s (\tfrac{\gamma_{1} B_1}{R_{1}}) + \delta_c \gamma_{1}B_1 \le \bar{L}_1,\\
 \label{prob_slots_partial_TDMA_d}
\quad  & T_s (\tfrac{\gamma_{1} B_1}{R_{1}})+ T_s (\tfrac{\gamma_{2} B_2}{R_{2}})+ \delta_c \gamma_{2}B_2 \le \bar{L}_2,\\
\label{prob_slots_partial_TDMA_e}
& 0 \le P_{k1}, P_{k2} \le \bar{P}_k, \quad k=1,2,\\
\label{prob_slots_partial_TDMA_f}
\quad & 0 \le R_{k} \le \log_2 (1+ \alpha_k P_{k}), \quad k=1, 2.
\end{align}
\end{subequations}
Since only one user is transmitting during each time slot, it can be shown that the optimal transmission power of each user is the minimum feasible value, i.e., $P_k = \tfrac{2^{R_k}-1}{\alpha_k}$. Moreover, it can be shown that for any optimal solution of the problem in \eqref{prob_slots_partial_TDMA}, the constraint in \eqref{prob_slots_partial_TDMA_d} holds with equality, i.e.,
\begin{equation}
\label{gamma_2}
\gamma_2 = \tfrac{\bar{L}_2-T_s({\gamma_1 B1}/{R_1})}{B_2 ({T_s}/{R_2}+\delta_c)}.
\end{equation}
By substituting the obtained closed-form expressions we can rewrite the problem in \eqref{prob_slots_partial_TDMA} as
\begin{subequations}
\label{prob_slots_partial_TDMA_no_power}
\begin{align} \notag
\label{prob_slots_partial_TDMA_no_power_a}
\min_{R_{1}, R_{2}, \gamma_{1}} \quad &  \displaystyle{ \bigl(\tfrac{\gamma_{1} B_1}{R_{1}}\bigr) \bigl(\tfrac{2^{R_1}-1}{\alpha_1}\bigr)} +\displaystyle{ \bigl(\tfrac{\bar{L}_2 - T_s (\gamma_1 B_1/R_1)}{T_s +R_2 \delta_c}\bigr) \bigl(\tfrac{2^{R_2}-1}{\alpha_2}\bigr)}  \\ 
&~  +\tfrac{M_1}{{L}_1^2}\big( (1-\gamma_{1}) B_1\big)^3+\tfrac{M_2}{{L}_2^2} \big( B_2-\tfrac{\bar{L}_2-T_s({\gamma_1 B1}/{R_1})}{{T_s}/{R_2}+\delta_c}\big)^3 \\
\label{prob_slots_partial_TDMA_no_power_b} 
\text{s.t.}   \quad \quad & 0 \le \gamma_{1} \le 1,\\
\label{prob_slots_partial_TDMA_no_power_c}
\quad & 0 \le \tfrac{\bar{L}_2-T_s({\gamma_1 B1}/{R_1})}{B_2 ({T_s}/{R_2}+\delta_c)} \le 1,\\
\label{prob_slots_partial_TDMA_no_power_d}
\quad & 0 \le R_{k} \le \log_2 (1+ \alpha_k \bar{P}_k), \quad k=1, 2.
\end{align}
\end{subequations}
The three-variable optimization problem in \eqref{prob_slots_partial_TDMA_no_power} is convex in terms of each of the variables when the other two variables are given. Hence, by applying coordinate descent optimization methods a stationary solution of the problem can be obtained. In all of our numerical experiments the coordinate descent algorithm converged to the globally optimal solution. 

In the case of complete computation offloading, we were able to show that when the channel gains of both users are equal and the power budgets are above a threshold, the optimized TDMA scheme obtains the optimal energy consumption of the full multiple access scheme; see Proposition~\ref{prob_2}. As we will formalize in the following proposition, we can extend that result to the case of partial offloading. 
\begin{prop}
\label{prob_2_partial}
Let $\gamma_{ki}^{\star}$ and $P_{ki}^{\star}$ denote an optimal solution for $\gamma_{ki}$ and $P_{ki}$ in \eqref{prob_slots_partial_powers} when $|h_1|^2 = |h_2|^2$. If $P_{11}^{\star}+P_{21}^{\star} \le \min \{ \bar{P}_1, \bar{P}_2 \}$, then the TDMA scheme can obtain the optimal energy consumption of the full multiple access scheme with the offloaded portions of the first and second users equal to $\gamma_1=\gamma_{11}^{\star}$ and $\gamma_2=\gamma_{21}^{\star}+ \gamma_{23}^{\star} $, respectively.  
\end{prop}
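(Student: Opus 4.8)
The plan is to prove this exactly as Proposition~\ref{prob_2} is proved (Appendix~\ref{app_equal_channel_gain}), the only extra work being to check that the two local-computation terms are reproduced verbatim, which will hold because each user's total offloaded fraction is preserved. Write $\alpha := \alpha_1 = \alpha_2$ and let $(R_{ki}^{\star},P_{ki}^{\star},\gamma_{ki}^{\star})$ be the optimal solution of \eqref{prob_slots_partial_powers}; when $\alpha_1=\alpha_2$ this is a global optimum of the (unrestricted) full multiple access partial offloading problem, and by the form of \eqref{prob_slots_partial_powers} it satisfies $P_{11}^{\star}=(2^{R_{11}^{\star}}-1)/\alpha$ and $P_{21}^{\star}=2^{R_{11}^{\star}}(2^{R_{21}^{\star}}-1)/\alpha$, so that $R_{11}^{\star}+R_{21}^{\star}=\log_2\bigl(1+\alpha(P_{11}^{\star}+P_{21}^{\star})\bigr)$ and $(2^{R_{11}^{\star}+R_{21}^{\star}}-1)/\alpha = P_{11}^{\star}+P_{21}^{\star}$. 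I would write $\tau_1=\gamma_{11}^{\star}B_1/R_{11}^{\star}=\gamma_{21}^{\star}B_2/R_{21}^{\star}$ for the length of the multiple access slot and $\tau_3=\gamma_{23}^{\star}B_2/R_{23}^{\star}$ for the ``user~2 alone'' slot, and set $n_1=\gamma_{11}^{\star}B_1$, $n_2=\gamma_{21}^{\star}B_2$, $n_3=\gamma_{23}^{\star}B_2$, so that (using \eqref{gamm_equs}) the optimal value of \eqref{prob_slots_partial_powers} equals $\tau_1(P_{11}^{\star}+P_{21}^{\star})+\tau_3 P_{23}^{\star}+\tfrac{M_1}{L_1^2}((1-\gamma_{11}^{\star})B_1)^3+\tfrac{M_2}{L_2^2}((1-\gamma_{21}^{\star}-\gamma_{23}^{\star})B_2)^3$.

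The TDMA point I would construct is: $\gamma_1=\gamma_{11}^{\star}$, $\gamma_2=\gamma_{21}^{\star}+\gamma_{23}^{\star}$; split the multiple access slot into a sub-slot of length $\tau_a=\tau_1 R_{11}^{\star}/(R_{11}^{\star}+R_{21}^{\star})$ for user~1 and a sub-slot of length $\tau_b=\tau_1 R_{21}^{\star}/(R_{11}^{\star}+R_{21}^{\star})$ for user~2, and in each of these have the active user transmit at the \emph{combined} power $P:=P_{11}^{\star}+P_{21}^{\star}$, hence at rate $\log_2(1+\alpha P)=R_{11}^{\star}+R_{21}^{\star}$; then user~1 delivers $\tau_a(R_{11}^{\star}+R_{21}^{\star})=n_1$ bits and user~2 delivers $n_2$ bits, at combined sub-slot energy $(\tau_a+\tau_b)P=\tau_1 P$. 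For user~2's single TDMA slot I would merge the sub-slot of length $\tau_b$ with the ``user~2 alone'' slot of length $\tau_3$ into one slot of length $\tau_b+\tau_3$ carrying $n_2+n_3$ bits at the single rate $R_2=(n_2+n_3)/(\tau_b+\tau_3)$ and power $P_2=(2^{R_2}-1)/\alpha$; user~1's TDMA slot is $\tau_a$ at rate $R_1=n_1/\tau_a=R_{11}^{\star}+R_{21}^{\star}$ and power $P_1=P$. (The degenerate cases $\gamma_{11}^{\star}=0$, $\tau_b=0$ or $\tau_3=0$ are handled by inspection.)

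The verification I would then carry out has two parts. Feasibility: by hypothesis $P=P_{11}^{\star}+P_{21}^{\star}\le\min\{\bar P_1,\bar P_2\}$, so $R_1=\log_2(1+\alpha P)\le\log_2(1+\alpha\bar P_1)$ with $P_1=P\le\bar P_1$, while $R_2$ is a mediant of $n_2/\tau_b=R_{11}^{\star}+R_{21}^{\star}=\log_2(1+\alpha P)\le\log_2(1+\alpha\bar P_2)$ and $n_3/\tau_3=R_{23}^{\star}\le\log_2(1+\alpha\bar P_2)$ (the latter by \eqref{prob_slots_partial_powers_c1}), hence $R_2\le\log_2(1+\alpha\bar P_2)$ and $P_2\le\bar P_2$; the latency constraints hold because $\tau_a\le\tau_1$ makes \eqref{prob_slots_partial_TDMA_c} follow from \eqref{prob_slots_partial_main_e}, and because the two users jointly occupy $\tau_a+\tau_b+\tau_3=\tau_1+\tau_3$ channel uses, so \eqref{prob_slots_partial_TDMA_d} reads $T_s(\tau_1+\tau_3)+\delta_c(\gamma_{21}^{\star}+\gamma_{23}^{\star})B_2\le\bar L_2$, which is exactly \eqref{prob_slots_partial_main_f} for the full-scheme solution; the fraction constraints follow from \eqref{prob_slots_partial_main_d}. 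Energy: the two local-computation terms coincide by the choice of $\gamma_1,\gamma_2$; for the offloading energy, $(n,t)\mapsto t(2^{n/t}-1)/\alpha$ is the perspective of the convex function $x\mapsto(2^x-1)/\alpha$, hence positively homogeneous of degree one and therefore subadditive, giving $(\tau_b+\tau_3)(2^{R_2}-1)/\alpha\le\tau_b(2^{R_{11}^{\star}+R_{21}^{\star}}-1)/\alpha+\tau_3(2^{R_{23}^{\star}}-1)/\alpha=\tau_b(P_{11}^{\star}+P_{21}^{\star})+\tau_3 P_{23}^{\star}$; adding user~1's $\tau_a(P_{11}^{\star}+P_{21}^{\star})$, the total TDMA offloading energy is at most $\tau_1(P_{11}^{\star}+P_{21}^{\star})+\tau_3 P_{23}^{\star}$. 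Thus the constructed TDMA point has objective in \eqref{prob_slots_partial_TDMA} no larger than the optimal value of \eqref{prob_slots_partial_powers}; since every TDMA schedule is a feasible point of the full multiple access partial offloading problem, the optimal value of the latter is no larger than any TDMA objective, so the constructed point in fact attains exactly the optimal full multiple access energy, with offloaded fractions $\gamma_1=\gamma_{11}^{\star}$ and $\gamma_2=\gamma_{21}^{\star}+\gamma_{23}^{\star}$ as claimed.

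The main obstacle, and the only step requiring genuine care, is that in the full multiple access solution user~2's offloaded bits are carried by two transmissions at different rates and powers, whereas the TDMA formulation \eqref{prob_slots_partial_TDMA} allots user~2 a single slot at a single rate: subadditivity of the perspective function is what lets us merge them without increasing energy, and the hypothesis $P_{11}^{\star}+P_{21}^{\star}\le\min\{\bar P_1,\bar P_2\}$ is precisely what keeps the merged rate (and user~1's rate) inside the interval on which the power budgets are respected. I would also need to confirm, as in the proof of Proposition~\ref{prob_2}, that when $\alpha_1=\alpha_2$ the optimum of \eqref{prob_slots_partial_powers} coincides with the optimum of the unrestricted full multiple access partial offloading problem, which holds because the coefficient of $P_{11}$ in the inner power minimization vanishes and the remaining subcases collapse onto Case~A--I.
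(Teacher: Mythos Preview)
Your proof is correct. The paper's own proof is far shorter: it simply observes that once the offloading fractions $\hat{\gamma}_{ki}$ are fixed, the local-computation terms become constants and the remaining transmission-energy minimization is exactly an instance of the complete-offloading problem to which Proposition~\ref{prob_2} applies; taking $\hat{\gamma}_{ki}=\gamma_{ki}^{\star}$ then gives the result in one line.

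What you do differently is re-enact the construction of Appendix~\ref{app_equal_channel_gain} inside the partial-offloading framework rather than invoking Proposition~\ref{prob_2} as a black box. This costs you length but buys two things the paper glosses over. First, you explicitly handle the fact that the TDMA formulation \eqref{prob_slots_partial_TDMA} gives user~2 a \emph{single} slot at a \emph{single} rate, whereas the construction in Appendix~\ref{app_equal_channel_gain} (and hence the paper's one-line reduction) leaves user~2 with two separate transmissions; your subadditivity-of-the-perspective argument is precisely the clean way to merge them without increasing energy, and your mediant observation is what keeps the merged rate within the power budget. Second, your closing sandwich (constructed TDMA energy $\le$ optimal full-scheme energy $\le$ any TDMA energy) makes explicit why the inequality from subadditivity must in fact be an equality at the optimum. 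So your route is the same in spirit but more self-contained, and it patches a small gap in the paper's presentation of how the three-slot TDMA structure of Appendix~\ref{app_equal_channel_gain} collapses to the two-slot formulation \eqref{prob_slots_partial_TDMA}.
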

\begin{proof}
Let $\hat{\gamma}_{ki}$ denote offloading fractions of an arbitrary instance of the full multiple access scheme. If we select $\gamma_1=\hat{\gamma}_{11}$ and $\gamma_2=\hat{\gamma}_{21}+ \hat{\gamma}_{23}$, then we can apply Proposition~\ref{prob_2} to show that the optimized TDMA scheme achieves the same energy consumption as the full multiple access scheme with offloading fractions $\hat{\gamma}_{ki}$. The proposition follows by looking at the case where the offloading fractions of the full multiple access scheme are optimal; i.e., $\hat{\gamma}_{ki} = \gamma_{ki}^{\star}$.
\end{proof}

\subsection{\seprev{Mixed Binary-Partial Offloading}}
\changen{The problem of minimizing the total energy consumption of a system in which one user has an indivisible computational task while the other user has a divisible task can be treated as a special case of the two-user partial offloading problem in \eqref{prob_slots_partial_main}, in which the fraction of offloaded bits for the user with the indivisible task is either zero or one. In particular, if the first user (the user with the shorter latency) has the indivisible task, then the inequality constraints in \eqref{prob_slots_partial_main_b} will change to $\gamma_{11} \in \{0,1\}$. In the case in which the second user has the indivisible computational task, the inequalities in \eqref{prob_slots_partial_main_d} will change to $ \gamma_{21} + \gamma_{23} \in\{0, 1\}$. These restrictions on the values of $\gamma_{ki}$ for the binary offloading user enable us to obtain simplified expressions for the energy minimization parameters. In particular, when that user does not offload we have a single-user partial offloading problem that has closed-form optimal solution; see Appendix~\ref{single_user_partial}. When the user with the indivisible task is offloading, we can use our decomposition approach to obtain a quasi-closed-form expression that is based on the solution of a two-dimensional optimization problem that is quasi-convex in each variable when the other is fixed; see Appendix~\ref{mixed_off_both}. That is, the fact that we know that one user is completely offloading its task enables us to reduce the dimensions of the optimization problem from the three that were required when both users are partially offloading; (cf. \eqref{prob_slots_partial_powers} and \eqref{prob_slots_partial_TDMA_no_power}).

Furthermore, Proposition~\ref{prob_2_partial} can be extended to the mixed offloading system using a similar proof technique explained for the partial offloading case. That is, when the channel gains are equal, the optimal TDMA scheme achieves the same performance as the optimal FullMA scheme. We will illustrate that result in our numerical results; see Fig.~\ref{main_energy_vs_dist}.}

%%%%%%%%%%%%%%	Numerical Results 	%%%%%%%%%%%%%%%
\section{Numerical Results}
\label{simulation}
In this section we will illustrate the performance of the multiple access computation offloading schemes that we have considered in some simple proof-of-concept experiments that highlight the insights that have been developed. We consider a two-user communication system in which the users have the opportunity to offload their latency-constrained computational tasks to a computationally-rich access point. In Section~\ref{simulation_full} we will illustrate the impact of the choice of the multiple access scheme in the case of complete computation offloading. Then, in Section~\ref{simulation_partial} we will compare the performance of binary and partial computation offloading under the full multiple access and TDMA schemes.

%%%%%%%%%%%%%%%%%%%%%%%%%%%%%%%%%%%%%%%%%%
\subsection{Complete Computation Offloading}
\label{simulation_full}
In our first experiment, we examine the total energy usage of two offloading users as the \rev{power} gain of user 1\rev{'s channel}, $|h_1|^2$, changes. The symbol period of the channel is set to be $T_s=10^{-6} \text{s}$, and we set the power budgets, the latencies, the channel gain of the second user and the receiver noise variance to be constant values, namely, $\bar{P}_1= 0.3 T_s$, $\bar{P}_2= 0.5 T_s$, $L_1 = 2.5$s, $L_2=3.3$s, $|h_2|^2=0.1$, and $\sigma^2 = 0.1T_s$, respectively. The number of bits that are needed to describe the tasks to be offloaded are $B_1=B_2=10^6$ and we set the sum of the time of execution of the application in the cloud and the time it takes to download the result to the mobile users to be $T_1=T_2 =0.5$s. 
%% FIG 4 %%
%%%%%%%
\begin{figure}
\centering
\includegraphics[width=1\linewidth,width=0.48\textwidth]{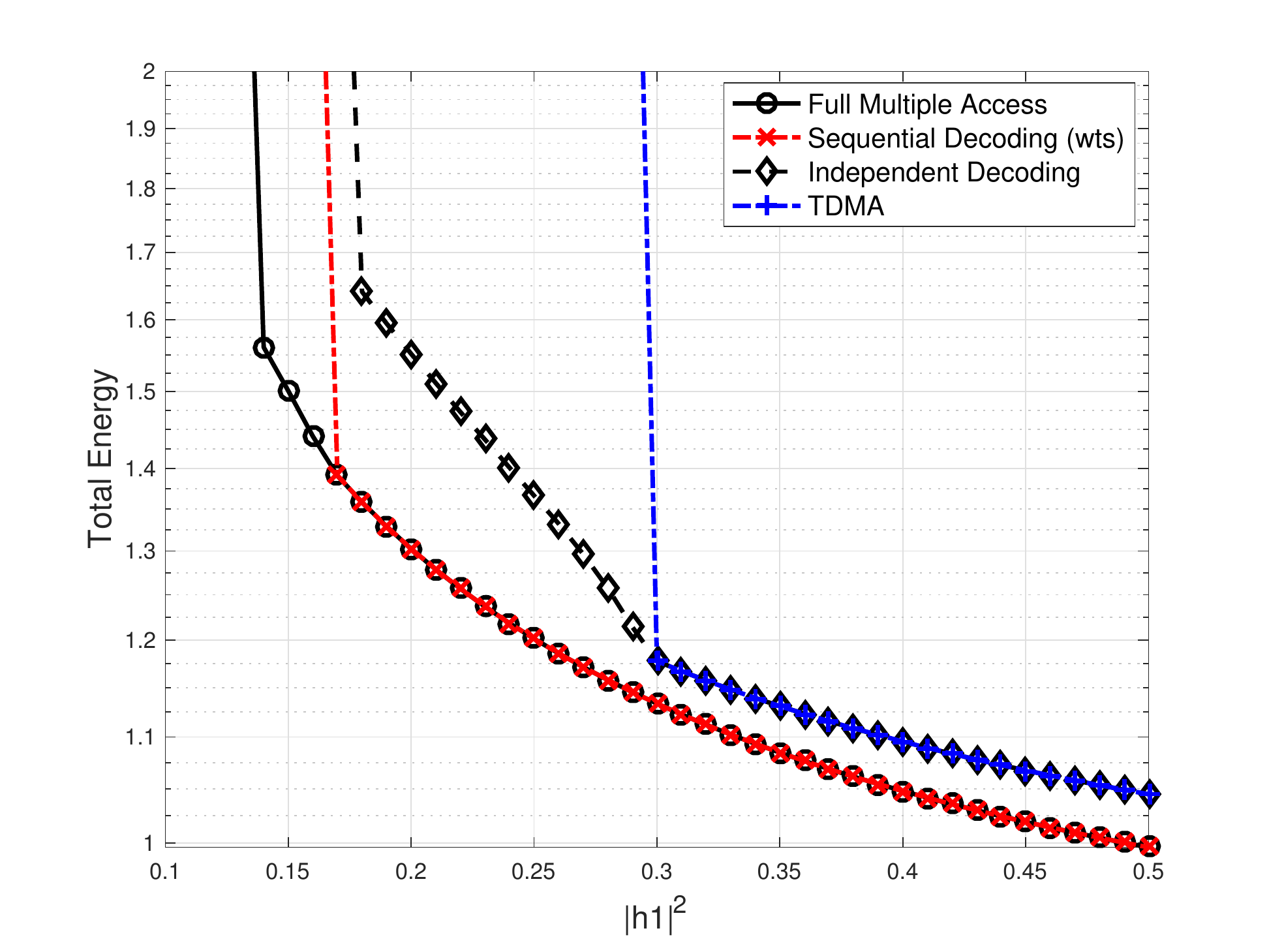} 
\caption{Energy required to offload the tasks for the optimal, sequential decoding without time sharing, TDMA, and independent decoding schemes \rev{as a function of $|h_1|^2$}.} 
\label{main_all_Infeas_TDMA}
\end{figure}
%%%%%%%

It can be seen from Fig.~\ref{main_all_Infeas_TDMA} that for small values of the \rev{power} gain of user 1 only the full multiple access scheme is able to offload both tasks while meeting the constraints. This implies that the optimized transmission rates of the first time slot are on the dominant face of the capacity region of the multiple access channel, cf. Fig.~\ref{feas_region_all_MAC}. For larger values of $|h_1|^2$, sequential decoding (wts) is also feasible and it can be seen that it has the same total energy consumption as the full multiple access scheme. 
%Since rates on the dominant face of the capacity region can be achieved by sequential decoding with time sharing, this observation shows that when $|h_1|^2$ is large enough the \textbf{full multiple access scheme} does not have to employ time sharing. \textcolor{red}{Note8- Sequential decoding we mean?} 
As the value of $|h_1|^2$ increases, the independent decoding scheme becomes feasible, but the total energy consumption of independent decoding is significantly greater than the energy consumed by the full multiple access and sequential decoding (wts) schemes. As $|h_1|^2$ is increased further, the TDMA scheme eventually becomes feasible. As argued in Section~\ref{Ind_Dec}, once it becomes feasible, it achieves the same energy consumption as the independent decoding scheme. (In TDMA, the decoders work independently.) However, for this range of values of $|h_1|^2$ the full multiple access and sequential decoding (wts) schemes are able to offload both tasks using less energy.

In \rev{our next} experiment we will \rev{illustrate} the impact of the latency of the second user's task, $L_2$. To do so, we tighten the first user's latency constraint to $L_1 = 1.8$s and we provide user 2 with a larger channel gain, \rev{$|h_2|^2=0.24$}. The number of bits needed to describe the tasks are changed to $B_1=3 \times 10^6$ and $B_2=5 \times 10^6$, \rev{and the receiver noise variance is set to be $\sigma^2 = 2\times 10^{-3}T_s$.} The other system parameters remain the same. 

In Fig.~\ref{main_fig} we plot the energy required to offload both tasks as a function of the first user's channel gain, $|h_1|^2$, for two values of the latency for the second user, namely $L_2=2$s and $L_2= 2.6$s.
%% FIG 5 %%
%%%%%%%
\begin{figure}
\centering
\includegraphics[width=1\linewidth,width=0.48\textwidth]{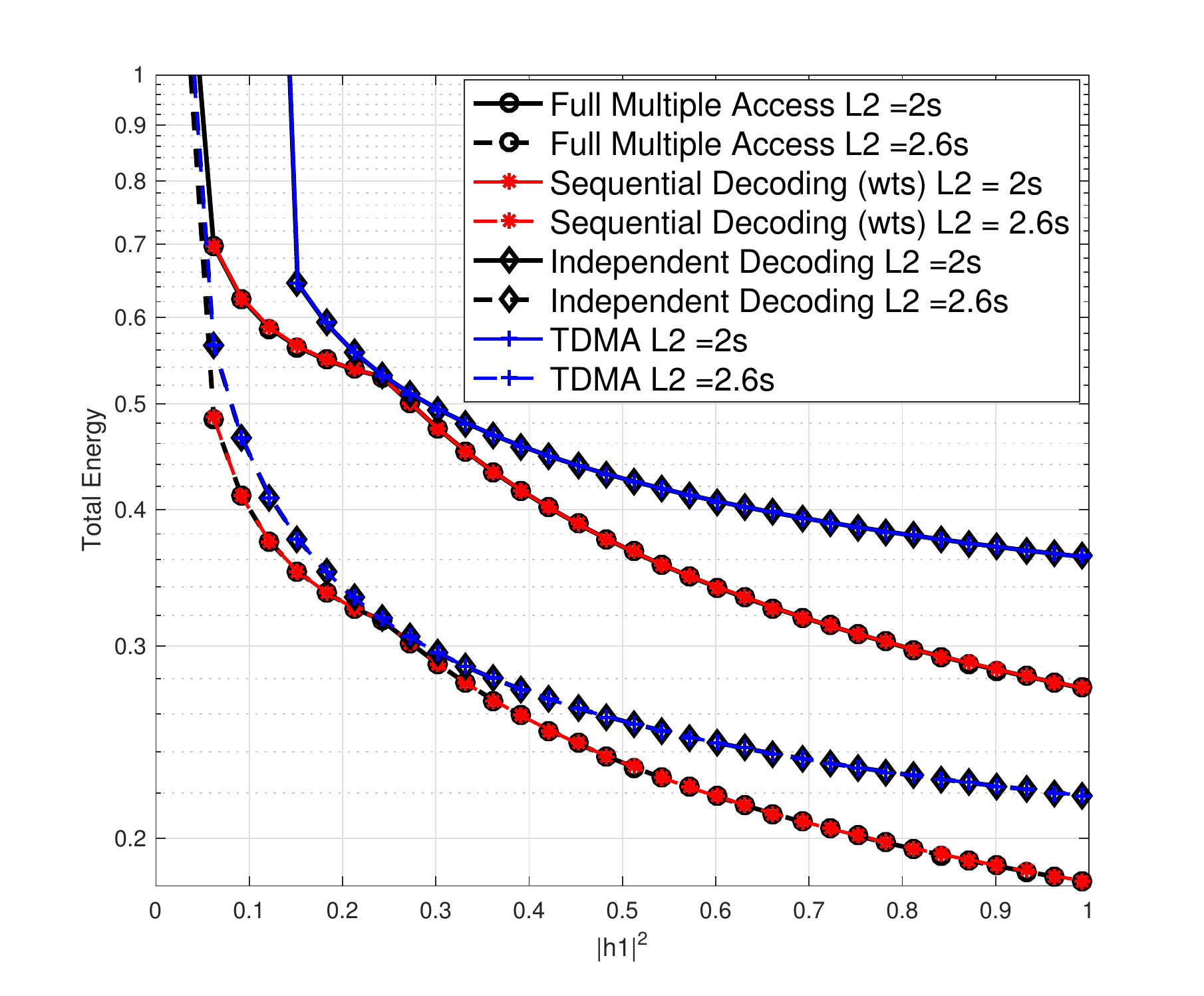} 
\caption{Energy required to offload the tasks \rev{as a function of $|h_1|^2$} for two different values for the second user's latency, $L_2=2$s and $L_2= 2.6$s. } 
\label{main_fig}
\end{figure}
%%%%%%%

%% FIG 6 %%
%%%%%%%
\begin{figure}
\centering
\includegraphics[width=1\linewidth,width=0.48\textwidth]{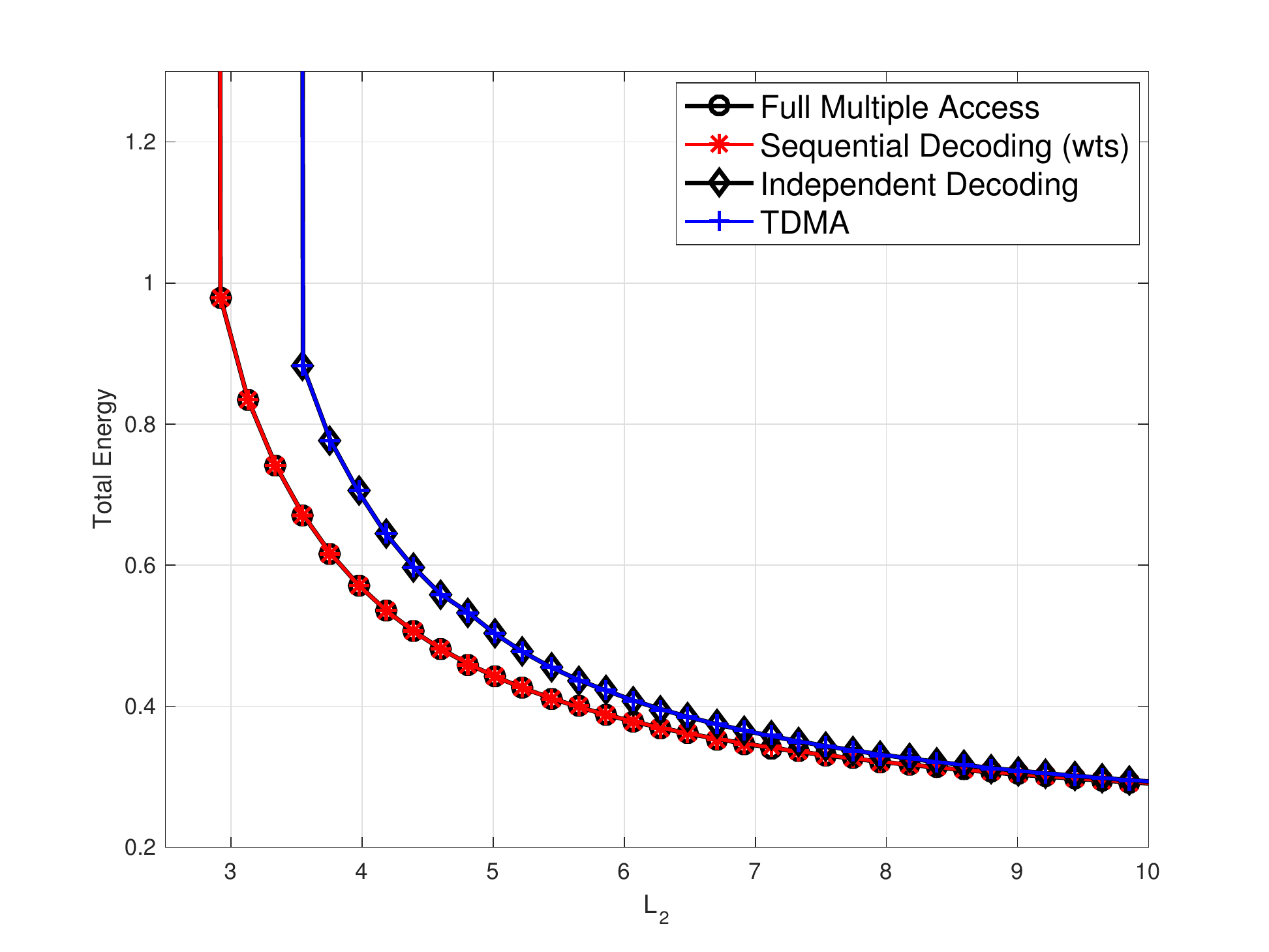} 
\caption{Energy required to offload the tasks \rev{as a function of $L_2$, the latency of the second user's application.}} 
\label{main_fif_2}
\end{figure}
%%%%%%%
The sets of curves for the two latencies exhibit similar characteristics, but these characteristics, and the reduced energy consumption of the full multiple access scheme, are more pronounced in the case where the latency is tighter. (As expected more energy is required to offload the tasks in that case.) Both sets of curves in Fig.~\ref{main_fig} demonstrate the ability of the full multiple access and sequential decoding (wts) schemes to take advantage of skewed channel conditions. In particular, when $|h_1|^2$ is small these schemes are able to offload both tasks, whereas TDMA and independent decoding are unable to do so. (The extended range of the full multiple access and sequential decoding (wts) schemes is quite significant in the lower latency case.) When $|h_1|^2$ is large, the full multiple access and sequential decoding (wts) schemes are able to provide a substantial reduction in the energy required to perform the offloading. Fig.~\ref{main_fig} also illustrates the impact of Proposition~\ref{prob_2}; namely that when the channel gains are equal and the power budgets are above an explicit threshold, TDMA, independent decoding and sequential decoding (wts) are all able to achieve the same minimum energy consumption as the full multiple access scheme.  

In Fig.~\ref{main_fif_2} we plot the energy required to offload both tasks as a function of the second user's latency constraint, $L_2$, for the case in which $|h_1| ^2 = 0.6$ and $|h_2| ^2 = 0.06$. In order to guarantee the feasibility of all four multiple access schemes we set the latency of user 1 equal to $L_1 = 2$s and the transmitted number of bits for the first and the second users are $B_1 = 4\times 10^6$ bits  and $B_2 = 8\times 10^6$ bits, respectively. The other parameters are the same as those that were used for Fig.~\ref{main_fig}. 

Fig.~\ref{main_fif_2} reinforces the observation from Fig.~\ref{main_fig} that as the latency constraints are tightened, the ability of the full multiple access scheme to exploit all the capabilities of the multiple access channel offers increasing performance gains. Fig.~\ref{main_fif_2} also illustrates the fact that the full multiple access scheme can satisfy tighter latency constraints than the TDMA and independent decoding schemes. (For all the values of $L_2$ that we considered in Fig.~\ref{main_fif_2}, sequential decoding (wts) is optimal and TDMA is an optimal scheme for independent decoding.)

 %%%%%%%%%%%%%%%%%%%%%%%%%%%%%%%%%%%%%%%%%%
\subsection{Binary and Partial Computation Offloading}
\label{simulation_partial}
In the second phase of our numerical analysis we study the case in which the computational tasks of the users are infinitesimally divisible and hence partial offloading can be employed. In this phase, we consider the full multiple access and TDMA schemes, and we examine the total energy consumption of partial computation offloading under different parameter settings. We also compare the energy consumption of partial offloading to that of the binary offloading scheme that would be used if the tasks were considered to be indivisible. 

We first illustrate the performance of the full multiple access and TDMA schemes as a function of the channel gain of user 1, $|h_1|^2$. We set the power budgets, the latencies, the channel gain of the second user and the receiver noise variance to be constant values, namely, $\bar{P}_1= \bar{P}_2= 0.5T_s$, $L_1 = 1.5$s, $L_2=2$s, $|h_2|^2=0.5$, and $\sigma^2 = 10^{-3}T_s$, respectively. The number of bits to describe the problems are $B_1=2 \times 10^6$ and $B_2=6 \times 10^6$ and we set the time it takes to download the result to the mobile users to be $t_{\text{DL}_1}=t_{\text{DL}_2} =0.2$s. As explained in Section~\ref{partial_offloading}, we consider data-partitionable computational tasks for which the (optimal) local energy consumption can be modeled as a function of number of bits; see Section~\ref{energy_local_opt}. In order to be consistent with the measurements in \cite{miettinen2010energy}, we set the constants $M_k$ in the local energy consumption expression in \eqref{loc_opt_energy} to $M_1 = M_2 = 10^{-18}$ \cite{zhang2013energy, wang2016mobile}.

Fig.~\ref{partial_sweeping_h_stochastic_energy} illustrates the total energy consumption of the users in the partial and binary computation offloading scenarios for the full multiple access and TDMA schemes, and Fig.~\ref{partial_sweeping_h_stochastic_gammas} illustrates the corresponding fraction of the bits that each user offloads to the access point in the partial offloading scenario. It can be seen from Fig.~\ref{partial_sweeping_h_stochastic_energy} that in both the partial and binary offloading scenarios taking advantage of the full capabilities of the channel enables the users to execute their tasks with substantially less energy consumption compared to the TDMA scheme and the gap between the energy usage of these schemes becomes larger as the channel gain of the first user increases. Fig.~\ref{partial_sweeping_h_stochastic_energy} also illustrates the fact that, since the power budgets are above the threshold in Proposition~\ref{prob_2_partial}, when the channel gains are equal TDMA can achieve the minimum energy consumption.
Another observation in Fig.~\ref{partial_sweeping_h_stochastic_energy} is that for large values of $|h_1|^2$, binary offloading with the full multiple access scheme achieves lower energy consumption than partial offloading using TDMA. This is due to the fact that the full multiple access scheme's ability to utilize all the capabilities of the channel overcomes the limitations of binary offloading when the channel gains are sufficiently different. 

It can be seen in Fig.~\ref{partial_sweeping_h_stochastic_gammas} that using the full capabilities of the channel enables the users to offload larger fraction of bits to the access point than TDMA. This results in a lower total energy consumption; see Fig.~\ref{partial_sweeping_h_stochastic_energy}. Moreover, when the channel gains are equal, the portions that the users offload in the TDMA scheme are the optimal portions offloaded by the corresponding users in the full multiple access scheme, which verifies Proposition~\ref{prob_2_partial}. 

Fig.~\ref{partial_sweeping_h_stochastic_gammas} exhibits that in the TDMA case, as one would expect, an increase in the channel gain of the first user leads to an increased fraction of bits that each user offloads. For the full multiple access scheme, an increase in the channel gain of the first user  leads to an increase in the fraction of bits offloaded by that user. This observation can be verified from the expression in \eqref{gamm_equs_11} and the fact that by increasing the channel gain of the first user, a higher transmission rate will be employed by that user. However, the offloaded fraction of the second user does not change in a monotonic manner. When $|h_1|^2 \le |h_2|^2$, an increase in the channel gain of user 1 results in a decrease in the portion of bits offloaded by user 2, while for $|h_1|^2 \ge |h_2|^2$ the offloaded portion of user 2 increases. That is because the minimum energy consumption of the system depends on the ratio between the channel gains (see Section~\ref{general_case}), which also affects the portion of  bits offloaded by the second user; see  \eqref{gamm_equs_21} and \eqref{gamm_equs_23}. 
%% FIG 7 %%
%%%%%%%
\begin{figure}
\centering
\includegraphics[width=1\linewidth,width=0.48\textwidth]{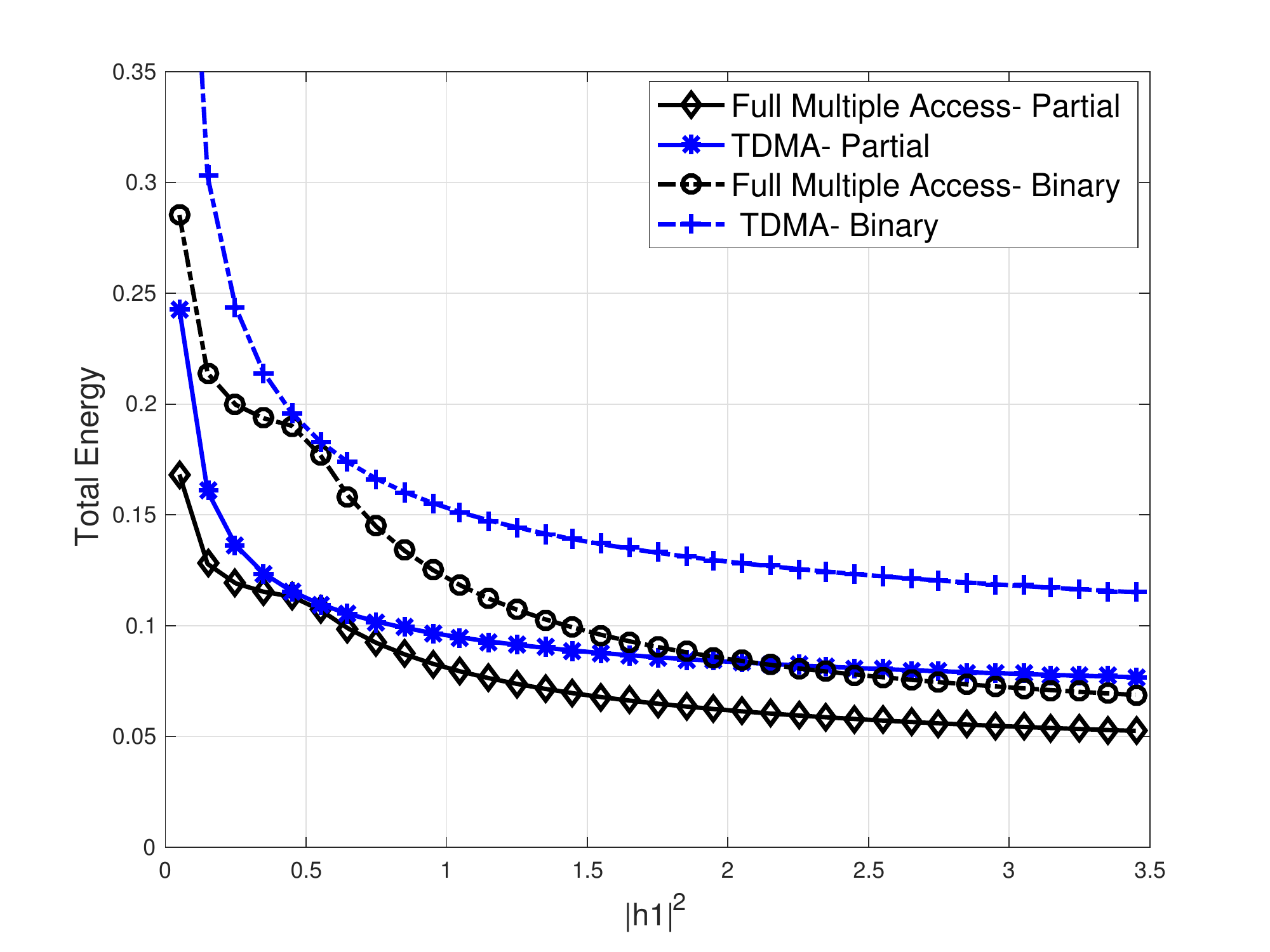} 
\caption{The total energy consumption of the two-user system with the full multiple access and TDMA schemes in the cases of binary and partial computation offloading \rev{as a function of $|h_1|^2$.}} 
\label{partial_sweeping_h_stochastic_energy}
\end{figure}
%%%%%%%

%% FIG 8 %%
%%%%%%%
\begin{figure}
\centering
\includegraphics[width=1\linewidth,width=0.48\textwidth]{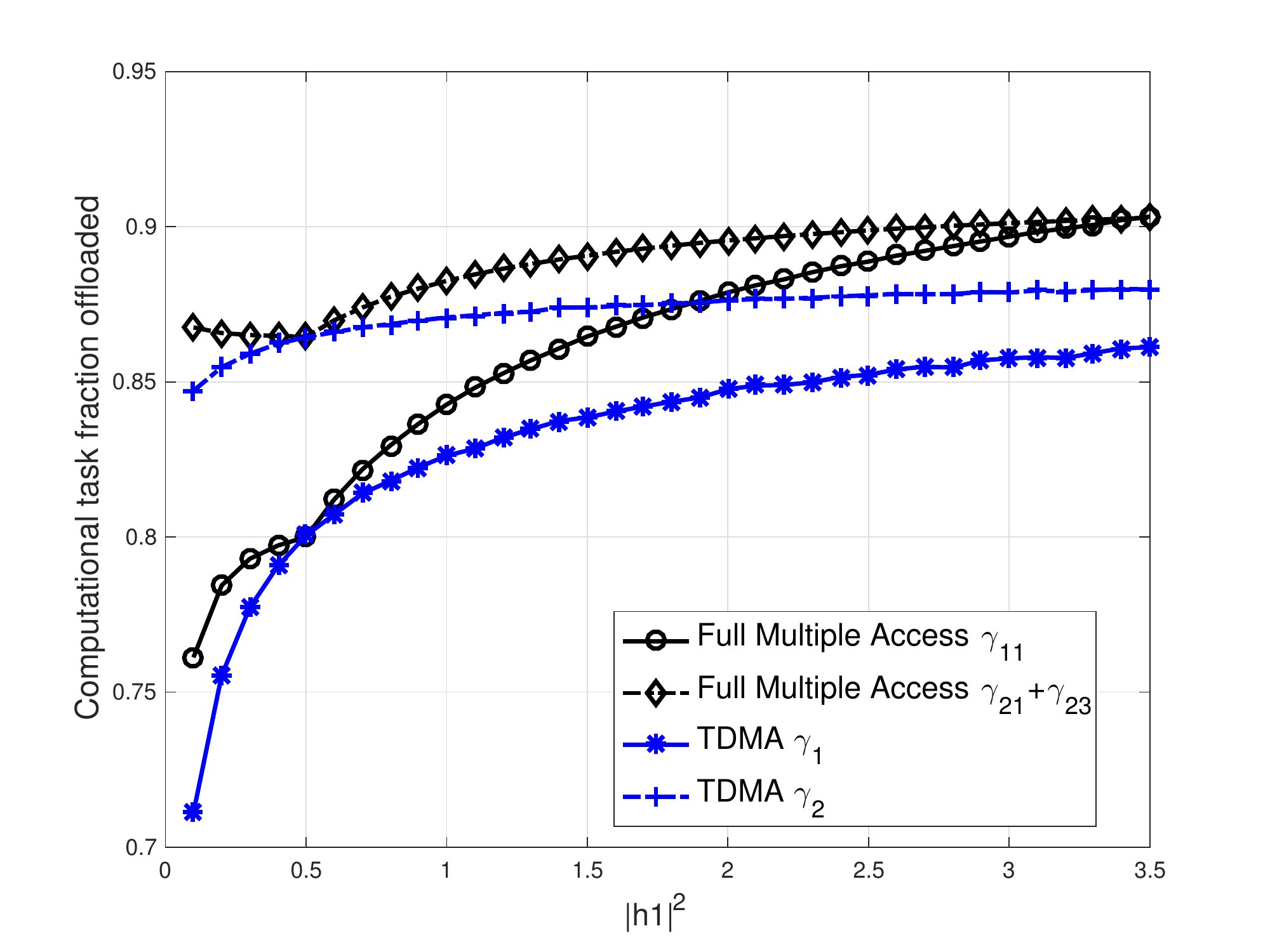} 
\caption{The fraction of the total number of bits offloaded by each user with the full multiple access and TDMA schemes in the partial computation offloading case \rev{as a function of $|h_1|^2$.}} 
\label{partial_sweeping_h_stochastic_gammas}
\end{figure}
%%%%%%%

Our final numerical experiments examine the average performance of the full multiple access and TDMA schemes under a simple fading channel model for both binary and partial computation offloading. The channel model has a (large-scale) path-loss exponent of $3$ and the small-scale fading has a Rayleigh distribution, and the variance of Gaussian noise is set to be $\sigma^2 = 10^{-13}$. The latency constraints of the tasks are $L_1 = 1.7$s, $L_2 = 2$s and the descriptions of the tasks require $B_1 = 2 \times 10^6$ and $B_2 = 5\times 10^6$ bits, respectively. The second user is placed at a distance of $500$m from the access point, and in Fig.~\ref{main_energy_vs_dist} we plot the average energy required to offload the tasks as user 1 moves from a position $100$m from the access point to a position $900$m away. (In the mixed offloading case, user 1 makes a binary offloading decision and user 2 partially offloads its task.)
The average is taken over $10^5$ realizations of the channel pairs for which the two schemes provided a feasible solution in the complete and partial computation offloading cases. The other parameters of the problem are set to be the same as those in the experiment that produced Figs~\ref{partial_sweeping_h_stochastic_energy} and \ref{partial_sweeping_h_stochastic_gammas}.

Figs~\ref{main_energy_vs_dist} and \ref{main_fraction_vs_dist} demonstrate that the insights that were developed analytically in Section~\ref{insight} for individual channel realizations also reflect the performance on average. In particular, when the users are at similar distances from the access point, then the channel gains are likely to be similar and hence we would expect the performance of TDMA to be close to that of the full multiple access scheme. This is indeed the case in Figs~\ref{main_energy_vs_dist} and \ref{main_fraction_vs_dist}. When the users are at significantly different distances from the access point, their channel gains are likely to be quite different, and hence we would expect the full multiple access scheme to have significantly better performance than TDMA. Once again, Fig.~\ref{main_energy_vs_dist} confirms that insight, and Fig.~\ref{main_fraction_vs_dist} shows how the full multiple access scheme enables a greater fraction of each task to be offloaded. Indeed, when user 1 is far from the access point, binary offloading with the full multiple access scheme consumes less energy than partial offloading with TDMA. 
%% FIG 9 %%
%%%%%%%
\begin{figure}
\centering
\includegraphics[width=1\linewidth,width=0.45\textwidth]{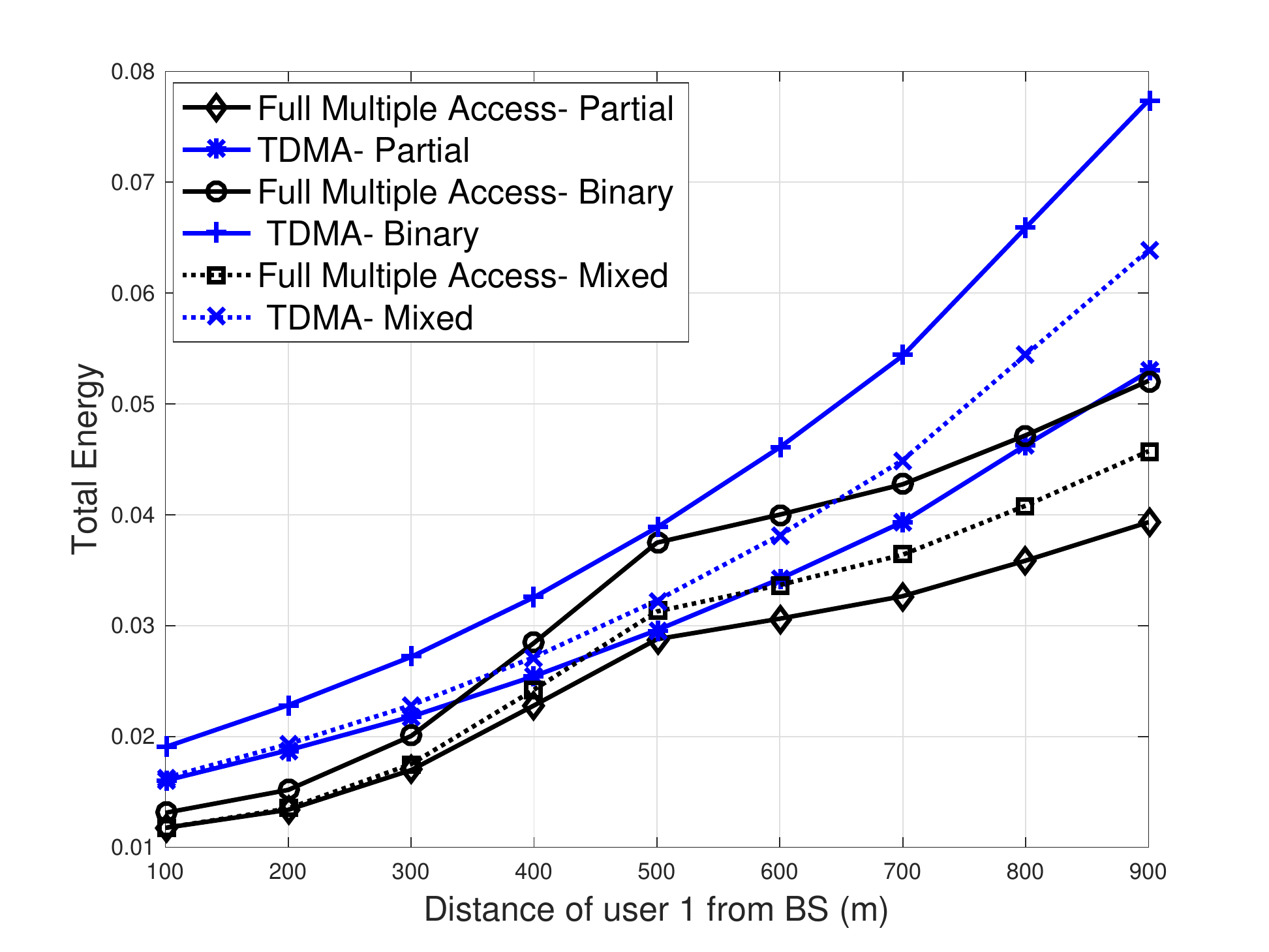}
\caption{Average energy required to offload the computational tasks for the full multiple access and TDMA schemes against the distance of user 1 from the access point in the binary and partial computation offloading scenarios. User 2 is $500$m from the access point.} 
\label{main_energy_vs_dist}
\end{figure}
%%%%%%%

%% FIG 10 %%
%%%%%%%
\begin{figure}
\centering
\includegraphics[width=1\linewidth,width=0.45\textwidth]{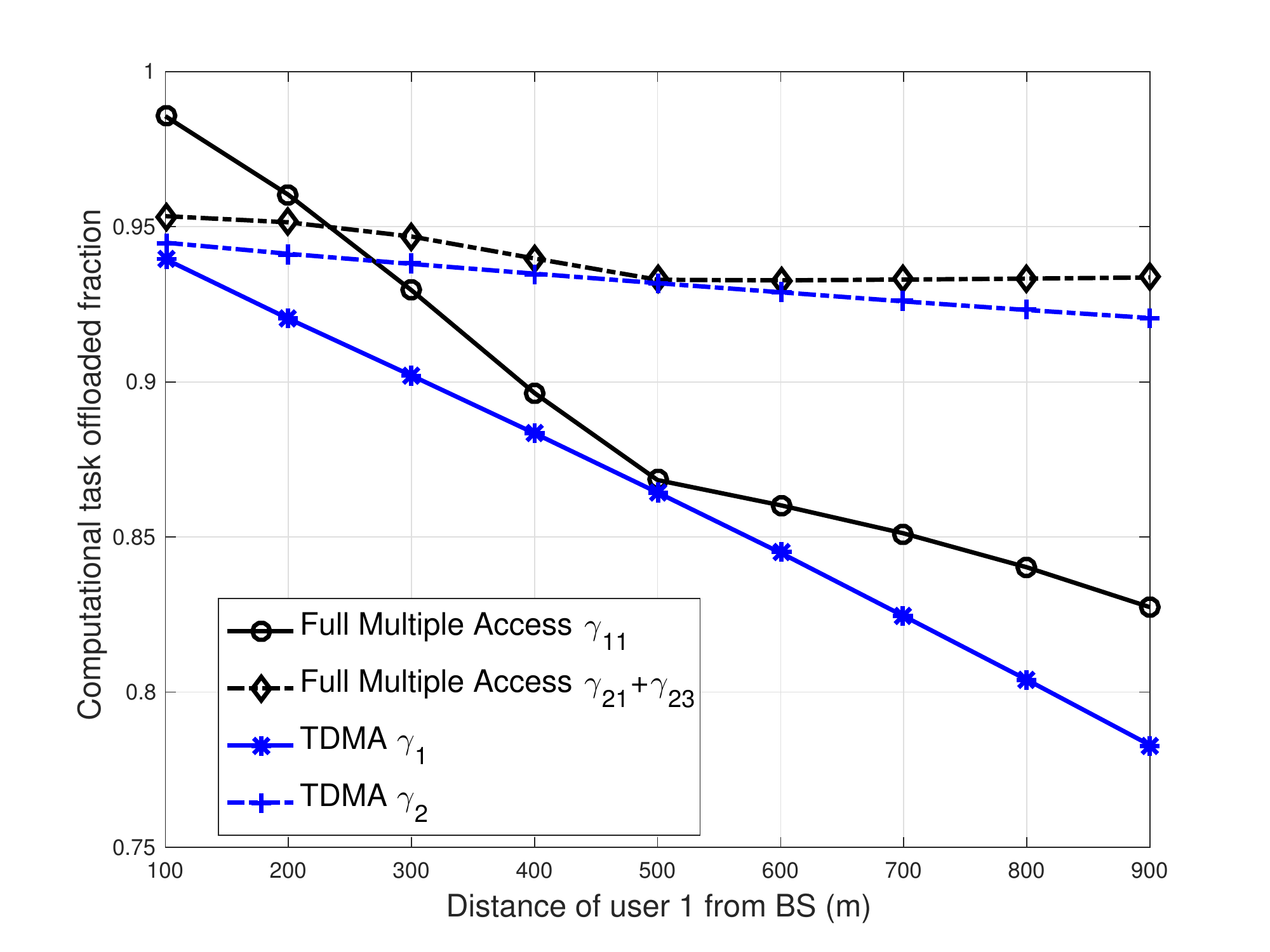} 
\caption{Average computation fraction offloaded by the users for the full multiple access and TDMA schemes against the distance of user 1 from the access point in the partial computation offloading scenario. User 2 is $500$m from the access point.} 
\label{main_fraction_vs_dist}
\end{figure}
%%%%%%%

%%%%%%%%%%%%%%	Conclusion 	%%%%%%%%%%%%%%%
\section{Conclusion}
In this work, we have obtained closed-form and quasi-closed-form solutions to problems of optimizing the communication resource allocation so as to minimize the energy that the users expend in a computational offloading system with two users \rev{and plentiful computational resources}. We have provided solutions for both indivisible and infinitesimally divisible computational tasks, and we consider the full multiple access scheme along with some of the simplified schemes, namely, TDMA, sequential decoding (without time sharing), and independent decoding. In broad terms, the structure of our solutions suggests that if the channel gains of the users are similar (or if the latency constraints are loose), then the implementation simplicity of TDMA may outweigh the energy reduction offered by the full multiple access scheme. However, when the latency constraints are tight and the channel gains are quite different from each other, the full multiple access scheme provides a substantial reduction in the energy required to complete the tasks.

%%%%%%%%%%%%	APPENDICES	 %%%%%%%%%%%%%%%%%%
\appendices
\section{Objective Function in \eqref{general_inner_1_R} is Increasing}
\label{increaseing_pbj_func}
After removing the positive coefficient $(\tfrac{B_2-\tilde{L}_1 R_{21}}{\alpha_2})$, the first derivative of the objective function in \eqref{general_inner_1_R} with respect to $R_{23}$ is
\begin{align}
\label{dervative_R1_one_user_spl_case}
\tfrac{d }{d R_{23}}  \bigl(\tfrac{ 2^{R_{23}}-1}{R_{23}}\bigr)& =  \bigl[   \tfrac{R_{23} 2^{R_{23}} \ln 2  - (2^{R_{23}}-1)}{R_{23}^2}  \bigr]. 
\end{align}
To show that the numerator on the right hand side of \eqref{dervative_R1_one_user_spl_case} is positive for $R_{23} > 0$, we observe that 
\begin{align}
\label{dervative_R1_one_user_2_spl_case}
\tfrac{d}{d R_{23}}  \bigl(R_{23} 2^{R_{23}} \ln 2 - (2^{R_{23}}-1)\Bigr)& = R_{23} 2^{R_{23}} (\ln 2)^2,
\end{align}
which is positive for $R_{23}>0$. Since the numerator of \eqref{dervative_R1_one_user_spl_case} is zero for $R_{23}=0$, the positivity of \eqref{dervative_R1_one_user_2_spl_case} implies that the expression in \eqref{dervative_R1_one_user_spl_case} is positive for $R_{23}> 0$ and hence that the objective in \eqref{general_inner_1_R} is an increasing function of $R_{23}$ over the feasible set.

%%%%%%%%%%%%%%%%%%%%%%%%%%%%%%%%%%%%%%%%%
\section{Optimality of Two Time Slot Scenario}
\label{Two_Time_Slots}
Let us consider the three-time-slot system illustrated in Fig.~\ref{Time_Slots_Three_Two_a}, with the notation for the time slot durations, the transmission powers and transmission rates being as defined in Section~\ref{sys_model}. Let us also consider the two-time-slot system depicted in Fig.~\ref{Time_Slots_Three_Two_b}, in which the  power and the rate of the $k$\textsuperscript{th} user in the first time slot are denoted by $P'_k$ and $R'_k$, respectively. We will show that for the full multiple access scheme, if the users' tasks can be offloaded using the three-time-slot system with a given energy, then there exists a power and rate allocation for the two-time-slot system that can offload the tasks using the same energy. It will suffice to assume that the power and rate allocations in the second time slot of the two-time-slot system are the same as those in the third time slot of the three-time-slot system. Furthermore, it will suffice to add the constraint that the energy consumption of each user is the same in both systems. 
%% FIG 11 %%
%%%%%%%
\begin{figure}
\begin{subfigure}{.45\linewidth}
\centering
\includegraphics[width=0.65\linewidth]{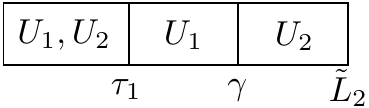}
\caption{Three-time-slot system}
\label{Time_Slots_Three_Two_a}
\end{subfigure}%
\begin{subfigure}{.45\linewidth}
\centering
\includegraphics[width=0.65\linewidth]{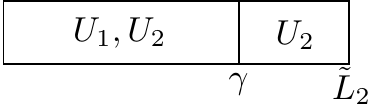}
\caption{Two-time-slot system}
\label{Time_Slots_Three_Two_b}
\end{subfigure}\\%
\caption{In the three-time-slot system there is one time slot (the second) in which user 1 is the only user transmitting. In the two-time-slot system user 1 completes its transmission in the first time slot and only user 2 has a slot in which it transmits alone. }
 \label{Time_Slots_Three_Two}
\end{figure}
%%%%%%%

For the energy and the number of transmitted bits of each user to be the same, $P_k^\prime$ and $R_k^\prime$ must satisfy 
\begin{subequations}
\label{power_rate_two}
\begin{align} 
& \tau_1P_{11} + (\gamma-\tau_1)  P_{12} = \gamma P'_{1}  \\
& \tau_1 P_{21} + (\tilde{L}_2-\gamma)  P_{22}  = \gamma P'_{2} + (\tilde{L}_2-\gamma)  P_{22}, \\
&  \tau_1 R_{11} +  (\gamma-\tau_1) R_{12}  = \gamma R'_{1} ,\\
& \tau_1 R_{21} + (\tilde{L}_2-\gamma) R_{22} = \gamma R'_{2} + (\tilde{L}_2-\gamma)R_{22}. 
\end{align}
\end{subequations}
The solution of that set of linear equations is
\begin{subequations}
\begin{align}
\label{powers_sec_case}
& P'_{1} = \tfrac{\tau_1}{\gamma}  P_{11} + \tfrac{\gamma-\tau_1}{\gamma} P_{12} \quad  \text{and} \quad   P'_{2} = \tfrac{\tau_1}{\gamma} P_{21},\\ 
\label{rates_sec_case}
& R'_{1} = \tfrac{\tau_1}{T}  R_{11} + \tfrac{\gamma-\tau_1}{ \gamma} R_{12} \quad  \text{and} \quad   R'_{2} = \tfrac{\tau_1}{\gamma} R_{21}.  
\end{align}
\end{subequations}
What remains is to show that these power and rate allocations satisfy the rate region constraint for the first time slot of the two-time-slot system, namely,
\begin{subequations}
\label{rates_constraints_sec_case_two_slots}
\begin{align}
& R'_{1} \le \log_2(1+\alpha_1 P'_{1}  ), \quad R'_{2} \le \log_2(1+\alpha_2 P'_{2}  ),\\
& R'_{1} +  R'_{2} \le \log_2(1+\alpha_1 P'_{1}+ \alpha_2 P'_{2}   ).
\end{align}
\end{subequations}
The inequalities in \eqref{rates_constraints_sec_case_two_slots} can be rewritten in terms of the rates and powers of the three-time-slot case as follows,
\begin{subequations}
\label{rates_constraints_sec_case}
\begin{align}
\label{rates_constraints_sec_case_1}
& \tfrac{\tau_1}{T}  R_{11} + \tfrac{\gamma-\tau_1}{ \gamma} R_{12} \le \log_2 \bigl(1+\alpha_1\rho   \bigr),\\
\label{rates_constraints_sec_case_2}
& \tfrac{\tau_1}{\gamma} R_{21} \le \log_2 \bigl(1+\alpha_2 (\tfrac{\tau_1}{\gamma} P_{21})   \bigr),\\
\label{rates_constraints_sec_case_3}
&\tfrac{\tau_1}{T}  R_{11} + \tfrac{\gamma-\tau_1}{ \gamma} R_{12} +\tfrac{\tau_1}{\gamma} R_{21}  \le \log_2(1+\alpha_1 \rho + \alpha_2 (\tfrac{\tau_1}{\gamma} P_{21})  \bigr),
\end{align}
\end{subequations}
where $\rho = \tfrac{\tau_1}{\gamma}  P_{11} + \tfrac{\gamma-\tau_1}{\gamma} P_{12}$.
To establish the validity of the inequalities in \eqref{rates_constraints_sec_case}, we use the rate constraints of the three-time-slot case and the concavity of the logarithm. For example, since the power and rate allocations for the three-time-slot case are assumed to be valid, we have that $R_{21} \leq \log_2(1 + \alpha_2P_{21})$. Using that and the concavity of the logarithm we have that
\begin{equation}
\label{rates}
\tfrac{\tau_1}{\gamma}  R_{21} \le \tfrac{\tau_1}{\gamma}  \log_2 (1+\alpha_2  P_{21}) \le \log_2 \bigl(1+\alpha_2 (\tfrac{\tau_1}{\gamma}  P_{21})   \bigr),
\end{equation}
and hence that \eqref{rates_constraints_sec_case_2} holds. The inequalities in \eqref{rates_constraints_sec_case_1} and \eqref{rates_constraints_sec_case_3} can be established in an analogous way.

%%%%%%%%%%%%%%%%%%%%%%%%%%%%%%%%%%%%%%%%
\section{Convexity of Objective Function in \eqref{prob_slots_general_Rs_K_a}}
\label{app_cvx_R21_case_I}
Since the exponential function is convex, the second term of \eqref{prob_slots_general_Rs_K_a} is convex. To prove the convexity of the third term, and hence the convexity of the function as a whole, we let $f_3 (R_{21})$ denote the third term of the objective function and evaluate its second derivative,
$\textstyle{f_3''(R_{21}) = \ln (2)^2 (\tfrac{\tilde{L}_1^2}{\alpha_2 (\tilde{L}_2-\tilde{L}_1)}) ~ 2^{\frac{B_2-\tilde{L}_1R_{21}}{\tilde{L}_2-\tilde{L}_1}}}$.
By our standing assumption that $\textstyle{ \tilde{L}_2 \ge \tilde{L}_1}$, $f_3''(R_{21}) \ge 0 $ and hence $f_3 (R_{21})$ is convex.  

%%%%%%%%%%%%%%%%%%%%%%%%%%%%%%%%%%%%%%%%
\rev{\section{Optimality of TDMA for Independent Decoding}
\label{TDMA_opt_IndDec}
Consider the structure of TDMA signalling illustrated in Fig.~\ref{TDMA_over_Ind_Dec_a}, where $x$ and $y$ are portions of the duration of the first and second users' transmissions, respectively. In the independent decoding signalling structure illustrated  in Fig.~\ref{TDMA_over_Ind_Dec_b}, the intervals $x$ and $y$ are combined, with both users transmitting simultaneously and being decoded independently. Let $P_k$ and $R_k$ denote the transmission power and transmission rate of the $k^{th}$ user in the TDMA scheme, respectively, and $P'_k$ and $R'_k$ denote those quantities in the first time slot of the independent decoding scheme. Since the number of transmitted bits in both cases must be the same, we have that
\begin{subequations}
\label{TDMA_over_Ind_Energy}
\begin{align}
& (x+y) R'_1 = xR_1= x\log_2(1+\alpha_1P_1) \\
& (x+y) R'_2 = yR_2 = y\log_2(1+\alpha_2P_2).
\end{align}
\end{subequations}
We will show that we cannot find a set of transmission rates and transmission powers for which the energy consumption in independent decoding scheme is less than the energy consumption of the TDMA scheme; i.e., there are no allocations that satisfy $(x+y) (P'_1+P'_2) \le xP_1+yP_2$.
%% FIG 12 %%
%%%%%%%
\begin{figure}
    \centering
  \begin{subfigure}[b]{0.45\linewidth}
    \centering
    \includegraphics[width=0.8\linewidth]{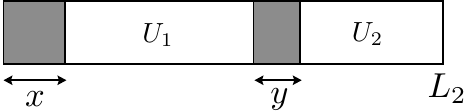} 
   \caption{TDMA scheme with two time slots}
    \label{TDMA_over_Ind_Dec_a} 
  \end{subfigure}
  \begin{subfigure}[b]{0.45\linewidth}
    \centering
    \includegraphics[width=0.8\linewidth]{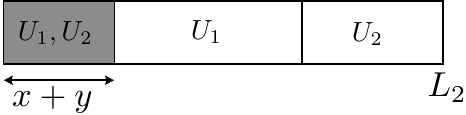} 
	\caption{Independent decoding scheme with three time slots}
    \label{TDMA_over_Ind_Dec_b} 
  \end{subfigure} 
  \caption{The structure of a TDMA scheme (a), and the corresponding independent decoding scheme (b).}
  \label{Time_Slots_Three_Two}
\end{figure}
%%%%%%%

As shown in Section~\ref{Ind_Dec}, the optimal transmission rates and transmission powers in the independent decoding scheme can be written as
\begin{gather}
\label{TDMA_over_Ind_Rates}
R'_1 = \log_2 \bigl( 1+\tfrac{\alpha_1 P'_1}{1+\alpha_2 P'_2}\bigr) \qquad R'_2 = \log_2 \bigl( 1+\tfrac{\alpha_2 P'_2}{1+\alpha_1 P'_1}\bigr).
\end{gather}
Considering \eqref{TDMA_over_Ind_Energy} and \eqref{TDMA_over_Ind_Rates} the problem of finding the rates for the independent decoding scheme such that the total energy consumption is less than that of TDMA can be written as
\begin{subequations}
\label{TDMA_over_INd_Dec_Opt_Prob}
\begin{align}
\text{find} \quad & P'_1, P'_2 \quad \\
\label{TDMA_over_INd_Dec_Opt_Prob_b}
\text{s.t.} \quad & \log_2 \bigl( 1+\tfrac{\alpha_1 P'_1}{1+\alpha_2 P'_2}\bigr) = \lambda \log_2(1+\alpha_1P_1), \\
\label{TDMA_over_INd_Dec_Opt_Prob_c}
\quad & \log_2 \bigl( 1+\tfrac{\alpha_2 P'_2}{1+\alpha_1 P'_1}\bigr) = (1-\lambda) \log_2(1+\alpha_2P_2), \\
\quad & P'_1+ P'_2 \le \lambda P_1+ (1-\lambda) P_2,\\
\quad & 0 \le \lambda \le 1,
\end{align}
\end{subequations}
where $\lambda = \tfrac{x}{x+y}$. Using the constraints in \eqref{TDMA_over_INd_Dec_Opt_Prob_b} and \eqref{TDMA_over_INd_Dec_Opt_Prob_c} we can rewrite $P'_1$ and $P'_2$ in terms of $P_1$ and $P_2$, namely, 
\begin{gather}
\label{P'_1_P'_2}
\alpha_1 P'_1 = \tfrac{\phi_2^{(1-\lambda)} (\phi_1^{\lambda}-1) }{\phi_1^{\lambda}+\phi_2^{(1-\lambda)}-\phi_1^{\lambda}\phi_2^{(1-\lambda)}}, \quad \alpha_2 P'_2 = \tfrac{\phi_1^{\lambda} (\phi_2^{(1-\lambda)}-1) }{\phi_1^{\lambda}+\phi_2^{(1-\lambda)}-\phi_1^{\lambda}\phi_2^{(1-\lambda)}}
\end{gather}
in which $\phi_1 = (1+\alpha_1P_1)$ and $\phi_2 = (1+\alpha_2P_2)$. Accordingly, the problem in \eqref{TDMA_over_INd_Dec_Opt_Prob} can be written as
\begin{subequations}
\label{TDMA_over_INd_Dec_Opt_Prob_lambda}
\begin{align}
\text{find} \quad & \lambda \quad \\ \nonumber
\label{TDMA_over_INd_Dec_Opt_Prob_lambda_b}
\text{s.t.} \quad & \tfrac{\bigl(\phi_2^{(1-\lambda)} (\phi_1^{\lambda}-1)\bigr)/\alpha_1 + \bigl(\phi_1^{\lambda} (\phi_2^{(1-\lambda)}-1)\bigr)/\alpha_2}{\phi_1^{\lambda}+\phi_2^{(1-\lambda)}-\phi_1^{\lambda}\phi_2^{(1-\lambda)}} \\ 
& \quad \quad \quad \quad \quad   \quad \quad  \le  \lambda P_1+ (1-\lambda) P_2 ,\\ 
\quad & 0 \le \lambda \le 1.
\end{align}
\end{subequations}
It can be seen that the right hand side of the constraint in \eqref{TDMA_over_INd_Dec_Opt_Prob_lambda_b} is a linear function of $\lambda$, and that for the values $\lambda=0$ and $\lambda=1$ the constraint holds with equality. To show that for values of $\lambda$ that lie within the interval $(0,1)$ the constraint in \eqref{TDMA_over_INd_Dec_Opt_Prob_lambda_b} cannot be satisfied, we write the first derivative of the left hand side function of \eqref{TDMA_over_INd_Dec_Opt_Prob_lambda_b} as
\begin{align} \nonumber
\label{derivative}
\tfrac{d}{d\lambda} = & \tfrac{\phi_1^{\lambda} \phi_2^{(1-\lambda)}}{(\phi_1^{\lambda}+\phi_2^{(1-\lambda)}-\phi_1^{\lambda}\phi_2^{(1-\lambda)})^2} \\ 
& \times \bigl( (\ln \phi_1 + \ln \phi_2) (\tfrac{1}{\alpha_1}-\tfrac{1}{\alpha_2}) - \tfrac{\ln \phi_2 }{\alpha_1} \phi_1^{\lambda}+\tfrac{\ln \phi_1 }{\alpha_2} \phi_2^{(1-\lambda)}\bigr),
\end{align}
which has a positive value for $\lambda = 0$ and a negative value for $\lambda =1$. In addition, there is only one value of $\lambda$ for which the derivative is equal to zero. This is because the function $f(\lambda) = \tfrac{\ln \phi_2 }{\alpha_1} \phi_1^{\lambda}-\tfrac{\ln \phi_1 }{\alpha_2} \phi_2^{(1-\lambda)} $ is increasing in terms of $\lambda$ and accordingly it has only one intersection with the constant value $(\ln \phi_1 + \ln \phi_2) (\tfrac{1}{\alpha_1}-\tfrac{1}{\alpha_2})$. Based on these observations we can conclude that for any value of $\lambda \in (0,1)$ the function in the left hand side of \eqref{TDMA_over_INd_Dec_Opt_Prob_lambda_b} is greater than the function in the right hand side. Hence, the objective function of TDMA has a value that is no larger than that of independent decoding if TDMA is feasible. }

%%%%%%%%%%%%%%%%%%%%%%%%%%%%%%%%%%%%
\rev{\section{Optimality of Suboptimal Methods for Equal Channel Gains}
\label{app_equal_channel_gain}
Consider a two-time-slot system that employs the full multiple access scheme in the first time slot which is of the length $\tau_1 = \tilde{L}_1$, and a three-time-slot system that employs independent decoding in the first time slot. For the latter system the first user has the opportunity to use the first and the second time slots to transmit its task to the access point, and hence $\tau_1+\tau_2 = \tilde{L}_1$. It will suffice to assume that the total energy consumption and the number of transmitted bits in the last slot of both systems are equal, and to show that we can find a set of rate and power allocations and the slot durations for the independent decoding scheme so that in its first two time slots it transmits the required bits with the same energy as the first slot of the optimal scheme. Our resource allocation for the independent decoding scheme will adopt a TDMA structure, and hence the result applies to TDMA, too. 

Let $R_{k}$ and $P_{k}$ denote the parameters of the $k$\textsuperscript{th} user in the first slot of the full multiple access scheme and let $R'_{ki}$ and $P'_{ki}$ denote the corresponding parameters in the $i$\textsuperscript{th} time slot of the independent decoding scheme. In order to guarantee that the energy consumption and the transmitted bits in both schemes are equal, the independent decoding scheme must satisfy
\begin{subequations}
\label{equal_channel_optimality}
\begin{align}
\label{equal_channel_optimality_a}
& \tilde{L}_1 R_{1}  = \tau_1 R'_{11} + (\tilde{L}_1-\tau_1) R'_{12},\\
\label{equal_channel_optimality_b}
& \tilde{L}_1 R_{2}  = \tau_1 R'_{21},\\
\label{equal_channel_optimality_c}
& \tilde{L}_1 (P_{1}+P_2)  = \tau_1 (P'_{11}+ P'_{21}) + (\tilde{L}_1-\tau_1) P'_{12}.
\end{align}
\end{subequations}
Using the closed-form solutions obtained in Sections~\ref{general_case} and \ref{Ind_Dec}, we can rewrite \eqref{equal_channel_optimality_c} for equal channel gains as
\begin{equation}
\label{energy_ind_dec_opt}
\tfrac{\tilde{L}_1}{\alpha} 2^{R_{1}+R_2}  = \tfrac{\tau_1}{\alpha} \bigl( \tfrac{2^{R'_{11}+ R'_{21}}}{2^{R'_{11}}+ 2^{R'_{21}}-2^{R'_{11}+ R'_{21}}} \bigr) + \tfrac{\tilde{L}_1-\tau_1}{\alpha} 2^{R'_{12}},
\end{equation}
where $\alpha = \alpha_1 = \alpha_2$. It will suffice to set $R'_{11} = 0$, so the independent decoding scheme adopts a TDMA structure. In that case, \eqref{energy_ind_dec_opt} can be written as
\begin{equation}
\label{energy_ind_dec_opt_omit_R11}
\tfrac{\tilde{L}_1}{\alpha} 2^{R_{1}+R_2}  = \tfrac{\tau_1}{\alpha} 2^{R'_{21}}  + \tfrac{\tilde{L}_1-\tau_1}{\alpha} 2^{R'_{12}}.
\end{equation}
From \eqref{equal_channel_optimality_a} and \eqref{equal_channel_optimality_b}, we have that 
$R_{1}+R_{2}  = \tfrac{\tau_1}{\tilde{L}_1} R'_{21} + \tfrac{\tilde{L}_1-\tau_1}{\tilde{L}_1} R'_{12}$. Accordingly, \eqref{energy_ind_dec_opt_omit_R11} takes the form
\begin{equation}
\label{energy_ind_dec_opt_convex_style}
2^{(\lambda R'_{21} + (1-\lambda) R'_{12} )} =  \lambda 2^{R'_{21}}+(1-\lambda) 2^{R'_{12}},
\end{equation}
where $\lambda = \tau_1 / \tilde{L}_1$. Since the exponential function is strictly convex, if $\tau_1 \in (0, \tilde{L}_1)$ the equality in \eqref{energy_ind_dec_opt_convex_style} holds if and only if $R'_{12} = R'_{21}$. In that case, using \eqref{energy_ind_dec_opt_omit_R11}, $R'_{12} = R'_{21} = R_1 +R_2$, and from \eqref{equal_channel_optimality_b} $\textstyle{\tau_1 = \tilde{L}_1R_2 / (R_1+R_2)}$. If the power constraints satisfy $\textstyle{\bar{P}_1, \bar{P}_2 \ge (2^{R_1+R_2}-1)/{\alpha}}$, the rates $R'_{12}$ and $R'_{21}$ are achievable and TDMA (and hence independent decoding) can achieve the minimum energy consumption.}

%%%%%%%%%%%%%%%%%%%%%%%%%%%%%%%%%%%%%%%%%%%
\section{Quasi-convexity of the objective function in \eqref{prob_slots_partial_powers}}
\label{App_F_Quasi_convexity}
A function $f$ with a scalar argument is quasi-convex if at least one of the following conditions holds \cite{boyd2004convex}, 
\begin{itemize}
\item[(a)] $f$ is non-increasing,
\item[(b)] $f$ is non-decreasing,
\item[(c)] there is a (turning) point, $c$, such that for any $x \le c$ the function $f(x)$ is non-increasing and for any $x \ge c$ the function $f(x)$ is non-decreasing. 
\end{itemize}
%
%A sufficient condition for a function to be quasi-convex is that its derivative is monotonic.
%Here, the quasi-convexity of the objective function in \eqref{prob_slots_partial_powers} with respect to $R_{11}$ will be shown. A similar approach can be employed to prove the quasi-convexity with respect to $R_{21}$ and $R_{22}$.
Here, we will show that 
when $R_{21}$ and $R_{22}$ are held constant and the objective function in \eqref{prob_slots_partial_powers} is viewed as
a function of $R_{11}$, 
either the condition (b) or the condition (c) holds.
% for the objective function in \eqref{prob_slots_partial_powers} with respect to $R_{11}$. In other words, we will show that the derivative of the objective function in \eqref{prob_slots_partial_powers} intersects with the $x$-axis  at most once. 
An analogous approach can be employed to prove the quasi-convexity with respect to $R_{21}$ and $R_{22}$.

%The function $f$ is quasi-convex if a) its derivative is either negative, b) its derivative is positive, or c) there is a point $c$ such that for any $x \le c$ the derivative of $f$ is negative and for any $x \ge c$ the derivative of $f$ is positive.  

For a given pair $(R_{21}, R_{22})$, the derivative of the objective with respect to $R_{11}$ can be written as
%\begin{subequations}
%\label{partial_derivative_to_R11}
%\begin{align} \notag 
%\tfrac{d}{dR_{11}} = & \displaystyle{ \tfrac{- \delta_c T_s \tilde{L}_1 }{(T_s+\delta_c R_{11})^2} } \big(\tfrac{2^{R_{11}}-1}{\alpha_1}+ \tfrac{2^{R_{11}} (2^{R_{21}}-1)}{\alpha_2}\big)\\ \notag
%& + \displaystyle{ \tfrac{T_s \tilde{L}_1 }{T_s+\delta_c R_{11}} } \big( \tfrac{2^{R_{11}} \operatorname{ln}2}{\alpha_1} + \tfrac{2^{R_{11}} \operatorname{ln}2 (2^{R_{21}}-1)}{\alpha_2}\big)\\ \notag
%&  +\tfrac{T_s}{T_s+\delta_c R_{23}}\displaystyle{\big(\tfrac{\delta_c \tilde{L}_1(T_s+\delta_c R_{21})}{(T_s + \delta_c R_{11})^2}\big) \tfrac{2^{R_{23}}-1}{\alpha_2}} \\ \notag
%&\ + \displaystyle{\mathcal{F}^{'}_1\big( B_1-\tfrac{\tilde{L}_1 R_{11}}{T_s+\delta_c R_{11}} \big)} \times \big( \tfrac{- T_s \tilde{L}_1}{(T_s + \delta_c R_{11})^2}\bigr) \\ \notag 
%& \quad + \displaystyle{\mathcal{F}^{'}_2\big( B_2- \tfrac{\tilde{L}_2 R_{23}}{T_s+\delta_c R_{23}}-\tfrac{\tilde{L}_1 T_s (R_{21}-R_{23})}{(T_s+\delta_c R_{11})(T_s+\delta_c R_{23})} \big)} \times \\ 
%&\quad \quad  \big( \tfrac{- \tilde{L}_1 T_s (R_{21}-R_{23})}{T_s+\delta_c R_{23}}\big) \bigl( \tfrac{-\delta_c}{(T_s + \delta_c R_{11})^2}\bigr).
%\end{align}
%\end{subequations}
%The function in \eqref{partial_derivative_to_R11} can be written as 
\begin{equation}
\label{factoring}
%\tfrac{d}{dR_{11}} = 
\tfrac{T_s \bar{L}_1}{(T_s + \delta_c R_{11})^2} \times F_r, 
\end{equation}
where
\begin{subequations}
\label{partial_derivative_to_R11_factorized}
\begin{align} \notag 
 F_r  =~ & \displaystyle{- \delta_c } \big(\tfrac{2^{R_{11}}-1}{\alpha_1}+ \tfrac{2^{R_{11}} (2^{R_{21}}-1)}{\alpha_2}\big)\\ \notag
& + \displaystyle{ (T_s+\delta_c R_{11}) } \big( \tfrac{2^{R_{11}} \operatorname{ln}2}{\alpha_1} + \tfrac{2^{R_{11}} \operatorname{ln}2 (2^{R_{21}}-1)}{\alpha_2}\big)\\ \notag
&  +\displaystyle{\big(\tfrac{\delta_c (T_s+\delta_c R_{21})}{T_s+\delta_c R_{23}}\big) \tfrac{2^{R_{23}}-1}{\alpha_2}} \\ \notag
& - \displaystyle{\mathcal{F}_1^{'}\big( B_1-\tfrac{\bar{L}_1 R_{11}}{T_s+\delta_c R_{11}} \big)} \\ \notag 
& + \displaystyle{\mathcal{F}_2^{'}\big( B_2- \tfrac{\bar{L}_2 R_{23}}{T_s+\delta_c R_{23}}-\tfrac{\bar{L}_1 T_s (R_{21}-R_{23})}{(T_s+\delta_c R_{11})(T_s+\delta_c R_{23})} \big)} \\  \notag
&\quad \times  \big( \tfrac{ \delta_c (R_{21}-R_{23})}{T_s+\delta_c R_{23}}\big).
\end{align}
\end{subequations}
As $\tfrac{T_s \bar{L}_1}{(T_s + \delta_c R_{11})^2}$ is always positive, 
to show that either condition (b) or condition (c) holds, it is sufficient to show that $F_r$ is non-decreasing.
%$F_r$ being non-decreasing is sufficient to prove that the derivative of the objective function in \eqref{prob_slots_partial_powers} intersects with the $x$-axis at most once, and hence, it is quasi-convex. 
In order to show that,
% $F_r$ is non-decreasing 
we will show that the derivative of $F_r$ with respect to $R_{11}$ is always non-negative. The derivative is  
\begin{subequations}
\label{partial_derivative_to_R11_sec}
\begin{align} \notag 
 \tfrac{dF_r}{dR_{11}} = ~&  \displaystyle{ (T_s+\delta_c R_{11}) } \big( \tfrac{2^{R_{11}} (\operatorname{ln}2)^2}{\alpha_1} + \tfrac{2^{R_{11}} (\operatorname{ln}2)^2 (2^{R_{21}}-1)}{\alpha_2}\big)\\ \notag
& + \displaystyle{\mathcal{F}^{''}_1\big( B_1-\tfrac{\bar{L}_1 R_{11}}{T_s+\delta_c R_{11}} \big)} 
%\times 
\big( \tfrac{\bar{L}_1 T_s}{(T_s + \delta_c R_{11})^2}\big) \\ \notag 
& + \displaystyle{\mathcal{F}^{''}_2\big( B_2- \tfrac{\bar{L}_2 R_{23}}{T_s+\delta_c R_{23}}-\tfrac{\bar{L}_1 T_s (R_{21}-R_{23})}{(T_s+\delta_c R_{11})(T_s+\delta_c R_{23})} \big)} \\  \notag
& \quad \times  \big( \tfrac{ \delta_c (R_{21}-R_{23})}{T_s+\delta_c R_{23}}\big)^2 \big( \tfrac{\bar{L}_1 T_s}{(T_s + \delta_c R_{11})^2}\big).
\end{align}
\end{subequations}
Since $\mathcal{F}_k(\cdot)$ in \eqref{loc_opt_energy} is a convex function for non-negative arguments, its second derivative is non-negative.
%(it can be considered as cost function), 
Accordingly, $\tfrac{dF_r}{dR_{11}}$ is non-negative, and hence $F_r$ is non-decreasing. 
 
%%%%%%%%%%%%%%%%%%%%%%%%%%%%%%%%%%%%%%%%%%%%%%%%%%%%%%%
\changen{\section{Single-user Partial offloading} 
\label{single_user_partial}
The energy minimization problem for a system that employs the full multiple access scheme and has a single user seeking to complete its divisible task within a specific deadline takes a form that is similar to that in \eqref{prob_slots_partial_main} with the transmission rate, transmission power, and the portion of offloaded bits of that user as the only variables. In particular, the problem in single-user partial offloading can be written as  
 \begin{subequations}
\label{resp_1}
\begin{align} 
\min_{\substack {P_{1}, R_{1}, \gamma_{1}}} \quad &  \displaystyle{ (\tfrac{\gamma_{1} B_1}{R_{1}}) P_{1}} +\tfrac{M_1}{{L}_1^2} \bigl((1-\gamma_{1}) B_1\bigr)^3\\ 
\label{resp_1_b} 
\text{s.t.} \quad \quad & 0 \le \gamma_{1} \le 1,\\
 \label{resp_1_c}
\quad  & T_s (\tfrac{\gamma_{1} B_1}{R_{1}}) + \delta_c \gamma_{1}B_1 \le \bar{L}_1,\\
%\label{prob_slots_partial_main_g}
%\quad  & t_{{\text{loc}}_k} \le  \bar{L}_k, \quad k=1,2, \\
\label{resp_1_d}
& 0 \le P_{1} \le \bar{P}_1,\\
\label{resp_1_e}
\quad & 0 \le R_{1} \le \log_2 (1+ \alpha_1 P_{1}).
\end{align}
\end{subequations}
At optimality, the constraint in \eqref{resp_1_c} holds with equality,
\begin{equation}
\gamma_{1} = \tfrac{\bar{L}_1 R_{1}}{B_1 (T_s+\delta_c R_{1})},
\end{equation}
and for a given $R_{1}$ the closed-form optimal solution for transmission power is $P_{1} = \tfrac{2^{R_{1}}-1}{\alpha_1}$. Accordingly, the energy minimization problem in this case is reduced to the following single-variable optimization problem
\begin{subequations}
\label{prob_slots_single_user}
\begin{align}
\label{prob_slots_single_user_a}
\min_{\substack {R_{1}}} \quad &  \displaystyle{ \tfrac{ \bar{L}_1 }{\alpha_1}\bigl( \tfrac{2^{R_{1}}-1}{T_s+\delta_c R_{1}}}\bigr)+\tfrac{M_1}{{L}_1^2}  \big( B_1-\tfrac{\bar{L}_1 R_{1}}{T_s+\delta_c R_{1}} \big)^3\\ 
\label{prob_slots_single_user_b} 
\text{s.t.}    \quad & 0 \le \tfrac{\bar{L}_1 R_{1}}{B_1 (T_s+\delta_c R_{1})} \le 1,\\
\label{prob_slots_single_user_c}
& 0 \le R_{1} \le \log_2(1+\alpha_1 \bar{P}_{1}).
\end{align}
\end{subequations}
This is a convex optimization problem and can be efficiently solved. Indeed, the solution is either the point at which the derivative of the objective in \eqref{prob_slots_single_user_a} is zero (if that point is feasible) or one of the end points of the feasibility interval for $R_1$ imposed by \eqref{prob_slots_single_user_b} and \eqref{prob_slots_single_user_c}. If there is no such interval, the problem is infeasible.}
%%%%%%%%%%%%%%%%%%%%%%%%%%%%%%%%%%%%%%%%%%%%%%%%%%%%%%
\changen{\section{Two-dimensional Mixed Offloading Problem}
\label{mixed_off_both}
The mixed offloading case, in which one user has an indivisible task and the other user has a divisible task, can be considered as a special case of the partial offloading problem with the offloading fraction of the task of the user with the indivisible task being set to one. However, the problem decomposition and the closed-form optimal solutions we have obtained in the partial offloading case enable us to simplify the energy minimization problem in the mixed offloading case by reducing dimension of the final optimization problem, problem \eqref{prob_slots_partial_powers}, by one. 

Let us first consider the case in which the user with the indivisible computational task has the shorter latency; i.e., user 1 is the user that is completely offloading its indivisible task. In that case, since user 1 offloads its complete task to the access point  $\gamma_{11} = 1$, and the energy minimization problem can be written as 

\begin{subequations}
\label{prob_app_mixed}
\begin{align}
\min_{\substack {P_{11}, P_{21}, P_{23}, \\ R_{11}, R_{21}, R_{23}, \\  \gamma_{21}, \gamma_{23}}} \quad &   \displaystyle{ (\tfrac{B_1}{R_{11}}) P_{11}} + \displaystyle{ (\tfrac{\gamma_{21} B_2}{R_{21}}) P_{21}}+\displaystyle{(\tfrac{\gamma_{23} B_2}{R_{23}}) P_{23}} + \displaystyle{\tfrac{M_2}{L_2^2}\big( (1-\gamma_{21} - \gamma_{23}) B_2\big)^3}\\ 
\label{prob_app_mixed_b} 
\text{s.t.}  \enspace \quad \quad & 0 \le \gamma_{21} \le 1,\quad  0 \le \gamma_{23} \le 1,\\
\label{prob_app_mixed_c}
 \quad &0 \le \gamma_{21}+\gamma_{23} \le 1,\\
 \label{prob_app_mixed_d}
\quad  & T_s (\tfrac{B_1}{R_{11}}) + \delta_c B_1 \le \bar{L}_1,\\
 \label{prob_app_mixed_e}
\quad  & T_s (\tfrac{\gamma_{21}}{R_{21}}+\tfrac{\gamma_{23}}{R_{23}}) B_2+ \delta_c (\gamma_{21}+ \gamma_{23})B_2 \le \bar{L}_2,\\
%\label{prob_slots_partial_main_g}
%\quad  & t_{{\text{loc}}_k} \le  \bar{L}_k, \quad k=1,2, \\
\label{prob_app_mixed_f}
& 0 \le P_{k1}, P_{k2} \le \bar{P}_k, \quad k=1,2,\\
\label{prob_app_mixed_g}
\quad & 0 \le R_{23} \le \log_2 (1+ \alpha_2 P_{23}),\\
\label{prob_app_mixed_h}
\quad & \{R_{11}, R_{21}\} \in \mathcal{R}.
\end{align}
\end{subequations}
As was explained in Section~\ref{partial_mac}, at optimality the constraints in \eqref{prob_app_mixed_d} and \eqref{prob_app_mixed_e} hold with equality and the closed-form optimal solution for the transmission rate of the user with indivisible task is obtained as $R_{11} = \tfrac{T_s B_1}{\bar{L}_1-\delta_c B_1}$. Following the same steps as those in Section~\ref{partial_mac}, the closed-form optimal solutions for $\gamma_{21}$ and $\gamma_{23}$ can be obtained. Therefore, for the case in which the channel gains are such that $\tfrac{\alpha_1}{\alpha_2} \le 1$, the final optimization problem, which is analogous to \eqref{prob_slots_partial_powers}, can be written as
\begin{subequations}
\label{prob_slots_partial_powers_mixed}
\begin{align} \notag
\min_{R_{21}, R_{23}} \; &    \displaystyle{ \bigl( \tfrac{\bar{L}_1-\delta_c B_1}{T_s} \bigr)} \bigl(\tfrac{2^{\frac{T_s B_1}{\bar{L}_1-\delta_c B_1}}-1}{\alpha_1}+ \tfrac{2^{\frac{T_s B_1}{\bar{L}_1-\delta_c B_1}} (2^{R_{21}}-1)}{\alpha_2}\bigr)  \\
& \quad \quad  +\tfrac{1}{T_s+\delta_c R_{23}}\displaystyle{\big(\bar{L}_2 - \tfrac{(T_s+\delta_c R_{21})}{T_s}(\bar{L}_1-\delta_c B_1) \big) \bigl(\tfrac{2^{R_{23}}-1}{\alpha_2}\bigr)} \\ 
& \quad \quad \quad \quad  +\tfrac{M_2}{{L}_2^2} \Big( B_2- \tfrac{\bar{L}_2 R_{23} + (R_{21}-R_{23}) (\bar{L}_1 - \delta_c B_1)}{T_s+\delta_c R_{23}} \Big)^3
\label{prob_slots_partial_powers_mixed_a} 
\\ \label{prob_slots_partial_powers_mixed_b} 
\text{s.t.}  \enspace \quad \quad   & \eqref{prob_app_mixed_b}, \eqref{prob_app_mixed_c},\\
\label{prob_slots_partial_powers_mixed_c}
\quad & 0 \le R_{2i} \le \log_2 (1+ \alpha_2 \bar{P}_{2}), \quad i= 1, 3.
\end{align}
\end{subequations}
This problem has a similar structure to the problem in \eqref{prob_slots_partial_powers}; i.e., the objective function is quasi-convex in each variable when the other variable is given. Therefore, a coordinate descent algorithm can be employed to obtain a stationary solution. However, in this ``mixed'' case we have been able to obtain a closed-form expression for $R_{11}$ and hence the problem in \eqref{prob_slots_partial_powers_mixed} has only two variables, whereas the problem in \eqref{prob_slots_partial_powers} has three. % the case in which $\tfrac{\alpha_1}{\alpha_2} \ge 1$, an analogous problem with dimension two can be obtained and solved in a similar way.

For the case in which the latency of the user with the indivisible task is longer than that of the user with the divisible task, the energy minimization problem is similar to \eqref{prob_slots_partial_powers}, but has the constraint \eqref{prob_slots_partial_main_d} changed to $\gamma_{21}+\gamma_{23} = 1$ rather than $0 \le \gamma_{21}+\gamma_{23} \le1$. This equality constraint enables us to reduce dimension of the final optimization problem by one. To show that, we observe that since the latency constraint on the user with the divisible task (user 1) will hold with equality at optimality and since the equality $\gamma_{21}+\gamma_{23} = 1$ holds, the closed-form solutions for $\gamma_{11}$, $\gamma_{21}$, and $\gamma_{23}$ are
 \begin{subequations}
\label{gamm_equs_app}
\begin{align}
\label{gamm_equs_app_11}
&\gamma_{11} = \tfrac{\bar{L}_1 R_{11}}{B_1 (T_s+\delta_c R_{11})},\\
\label{gamm_equs_app_21}
&\gamma_{21} = \tfrac{\bar{L}_1 R_{21}}{B_2 (T_s+\delta_c R_{11})},\\
\label{gamm_equs_app_23}
&\gamma_{23} = 1- \tfrac{\bar{L}_1 R_{21}}{B_2 (T_s+\delta_c R_{11})}.
\end{align}
\end{subequations}
By using the equality of the latency constraint on the user with the indivisible task (user 2) at optimality, we can find the closed-form optimal solution for either $R_{21}$ or $R_{23}$. By substituting the closed-form optimal solution for $R_{23}$,
\begin{equation}
R_{23} = \tfrac{T_s B_2(T_s+\delta_cR_{11})-T_s \bar{L}_1R_{21}}{(T_s+\delta_c R_{11})(\bar{L}_2-\delta_cB_2)- T_s\bar{L}_1},
\end{equation}
and the optimal values for $\gamma_{ki}$ in \eqref{gamm_equs_app} into the problem in \eqref{prob_slots_partial_powers}, the remaining optimization problem can be written as
\begin{subequations}
\label{prob_slots_mixed_powers}
\begin{align} \notag
\min_{R_{11}, R_{21}} \; &    \displaystyle{ \tfrac{ \bar{L}_1 }{T_s+\delta_c R_{11}} } (\tfrac{2^{R_{11}}-1}{\alpha_1}+ \tfrac{2^{R_{11}} (2^{R_{21}}-1)}{\alpha_2})  \\ \notag
& \quad  +\displaystyle{\tfrac{1}{\alpha_2} \big(\tfrac{\bar{L}_2-\delta_cB_2}{T_s} - \tfrac{ \bar{L}_1}{T_s + \delta_c R_{11} }\big) {\big(2^{\tfrac{T_s B_2(T_s+\delta_cR_{11})- T_s \bar{L}_1R_{21}}{(T_s+\delta_c R_{11})(\bar{L}_2-\delta_cB_2)- T_s\bar{L}_1}}-1}\big)} \\ 
& \quad \quad  +\tfrac{M_1}{{L}_1^2}  \big( B_1-\tfrac{\bar{L}_1 R_{11}}{T_s+\delta_c R_{11}} \big)^3\\
%\label{prob_slots_mixed_powers_a} 
 \label{prob_slots_mixed_powers_b} 
\text{s.t.}  \enspace \quad \quad   & \eqref{prob_app_mixed_b}, \eqref{prob_app_mixed_c},\\
\label{prob_slots_mixed_powers_c} 
\quad &0 \le \tfrac{T_s B_2(T_s+\delta_cR_{11})-T_s \bar{L}_1R_{21}}{(T_s+\delta_c R_{11})(\bar{L}_2-\delta_cB_2)- T_s\bar{L}_1} \le \log_2 (1+ \alpha_2 \bar{P}_{2}),\\
\label{prob_slots_mixed_powers_d}
\quad & 0 \le R_{11} \le \log_2 (1+ \alpha_1 \bar{P}_{1}),\\
\label{prob_slots_mixed_powers_c}
\quad & 0 \le R_{21} \le \log_2 (1+ \tfrac{\alpha_2 \bar{P}_{2}}{2^{R_{11}}}).
\end{align}
\end{subequations}
This problem has similar structure, with one dimension reduced, to the problem in \eqref{prob_slots_partial_powers}. In particular, the objective function is quasi-convex in each variable when the other variable is given. Hence, a coordinate descent algorithm can be employed to obtain a stationary solution. }

%%%%%%%%%%%%%%%%%%%%%%%%%%%%%%%%%%%%%%%%%%%%%%%%%%%%%%%
%References
\bibliographystyle{IEEEtran}
\bibliography{references}

% that's all folks
\end{document}